\documentclass[lettersize,journal,draft,peerreview]{IEEEtran}
% Core math + algorithm environments
\usepackage{amsmath, amssymb, amsthm, bm, mathtools}
\usepackage{fullpage}
\usepackage{hyperref}
\usepackage{enumitem}
\usepackage[final]{graphicx}
\usepackage{epstopdf}
\usepackage{cite}

\usepackage{tikz}
\usetikzlibrary{matrix,fit,calc,positioning,arrows.meta}
\usepackage{subcaption}
\usepackage{xcolor}

\hyphenation{op-tical net-works semi-conduc-tor IEEE-Xplore}

\newcommand{\iid}{\textit{i.i.d.\ }}

% number sets

\newcommand{\PP}[1][]{\ensuremath{\ifstrempty{#1}{\mathbb{P}}{\mathbb{P}^{#1}}}}

% expected value
\newcommand{\EE}{\ensuremath{\mathbb{E}}}

%---------------------------------------
% BlackBoard Math Fonts :-
%---------------------------------------

%Captital Letters

%---------------------------------------
% MathCal Fonts :-
%---------------------------------------

\newcommand{\bfA}{\mathbf{A}}	
	
\newcommand{\bfE}{\mathbf{E}}	
	
\newcommand{\bfI}{\mathbf{I}}

\newcommand{\bfW}{\mathbf{W}}

\newcommand{\bfe}{\mathbf{e}}

	\newcommand{\bfr}{\mathbf{r}}
	
\newcommand{\bfu}{\mathbf{u}}	\newcommand{\bfv}{\mathbf{v}}
\newcommand{\bfw}{\mathbf{w}}	\newcommand{\bfx}{\mathbf{x}}
\newcommand{\bfy}{\mathbf{y}}	\newcommand{\bfz}{\mathbf{z}}

%Captital Letters

\newcommand{\mcE}{\mathcal{E}}	
	\newcommand{\mcH}{\mathcal{H}}

	\newcommand{\mcN}{\mathcal{N}}
\newcommand{\mcO}{\mathcal{O}}	\newcommand{\mcP}{\mathcal{P}}

%---------------------------------------
% Bold Math Fonts :-
%---------------------------------------

%Captital Letters

%Small Letters

%---------------------------------------
% Scr Math Fonts :-
%---------------------------------------

%---------------------------------------
% Math Fraktur Font
%---------------------------------------

%Captital Letters

%Small Letters

 % Differential symbol
 % Constant e

\newcommand{\supp}[1]{\operatorname{supp}\left(#1\right)}

\usepackage{amsmath,amsfonts,amsthm,amssymb,mathtools}
\usepackage{xfrac}
\usepackage[makeroom]{cancel}
\usepackage{mathtools}
\usepackage{bookmark}
\usepackage{enumitem}
\usepackage{theoremref}
\usepackage{hyperref}
\usepackage{xparse}

\setcounter{tocdepth}{1}

\usepackage[most,many,breakable]{tcolorbox}
\usepackage{varwidth}
\usepackage{varwidth}
\usepackage{etoolbox}
\usepackage{nameref}
\usepackage{multicol,array}
\usepackage{algorithm,algpseudocode}
\usepackage{comment} % enables the use of multi-line comments (\ifx \fi)
\usepackage{import}
\usepackage{xifthen}
\usepackage{pdfpages}
\usepackage{transparent}
\usepackage{chngcntr}
\usepackage{tikz}
\usepackage{titletoc}

\usepackage{cleveref}

\usepackage{empheq}

%\usepackage{import}
%\usepackage{xifthen}
%\usepackage{pdfpages}
%\usepackage{transparent}

%%%%%%%%%%%%%%%%%%%%%%%%%%%%%%
% SELF MADE COLORS
%%%%%%%%%%%%%%%%%%%%%%%%%%%%%%

\definecolor{doc}{RGB}{0,60,110}
\definecolor{myg}{RGB}{56, 140, 70}
\definecolor{myb}{RGB}{45, 111, 177}
\definecolor{myr}{RGB}{199, 68, 64}
\definecolor{mytheorembg}{HTML}{F2F2F9}
\definecolor{mytheoremfr}{HTML}{00007B}
\definecolor{mylemmabg}{HTML}{FFFAF8}
\definecolor{mylemmafr}{HTML}{983b0f}
\definecolor{mypropbg}{HTML}{f2fbfc}
\definecolor{mypropfr}{HTML}{191971}
\definecolor{myexamplebg}{HTML}{F2FBF8}
\definecolor{myexamplefr}{HTML}{88D6D1}
\definecolor{myexampleti}{HTML}{2A7F7F}
\definecolor{mydefinitbg}{HTML}{E5E5FF}
\definecolor{mydefinitfr}{HTML}{3F3FA3}
\definecolor{notesgreen}{RGB}{0,162,0}
\definecolor{myp}{RGB}{197, 92, 212}
\definecolor{mygr}{HTML}{2C3338}
\definecolor{myred}{RGB}{127,0,0}
\definecolor{DodgerBlue}{RGB}{30, 144, 255}
\definecolor{myyellow}{RGB}{169,121,69}
\definecolor{myexercisebg}{HTML}{F2FBF8}
\definecolor{myexercisefg}{HTML}{88D6D1}

%%%%%%%%%%%%%%%%%%%%%%%%%%%%
% TCOLORBOX SETUPS
%%%%%%%%%%%%%%%%%%%%%%%%%%%%

\NewDocumentCommand{\blue}{+m}{{\color{blue}#1}}

\newtheorem{theorem}{Theorem}
\newtheorem{proposition}{Proposition}
\newtheorem{lemma}{Lemma}

\theoremstyle{remark}

\newtheorem{definition}{Definition}
\newcommand{\R}{\mathbb{R}}
\renewcommand{\EE}{\mathbb{E}}
\newcommand{\Prob}{\mathbb{P}}
\newcommand{\hSigma}{\hat{\boldsymbol{\Sigma}}}
\newcommand{\hGamma}{\hat{\boldsymbol{\Gamma}}}
\newcommand{\bSigma}{\boldsymbol{\Sigma}}

\renewcommand{\supp}{\mathrm{supp}}

\usepackage{booktabs}

%=== BEGIN: One-step TPower to the statistical upper bound (clean version) ===%
\newcommand{\bfxi}{\bm{\xi}}
\newcommand{\Top}{\mathrm{Top}}
\newcommand{\anglesin}{\sin\angle}

\newif\ifuseblack

% \useblacktrue % set to false to use colors
\useblacktrue % set to true to remove colors

\ifuseblack
  
  \RenewDocumentCommand{\blue}{+m}{#1}

\fi
% \NewDocumentCommand{\redx}{+m}{{\color{blue}#1}}

\title{Sparse Principal Component Analysis with Energy Profile Dependent Sample Complexity}
\author{Mengchu~Xu,
        Jian~Wang,~\IEEEmembership{Member,~IEEE,}
        and~Yonina~C.~Eldar,~\IEEEmembership{Fellow,~IEEE}%
\thanks{Mengchu Xu and Yonina C. Eldar are with the Faculty of Math and CS, Weizmann Institute of Science, Rehovot 7610001, Israel (e-mail: mengchu.xu@weizmann.ac.il; yonina.eldar@weizmann.ac.il).}%
\thanks{Jian Wang is with the School of Data Science, Fudan University, Shanghai 200433, China (e-mail: jian\_wang@fudan.edu.cn). He is the corresponding author.}%
}
\date{}

\begin{document}
\maketitle

\begin{abstract}
  We study sparse principal component analysis in the high-dimensional, sample-limited regime, aiming to recover a leading component supported on a few coordinates. Despite extensive progress, most methods and analyses are tailored to the flat-spike case, offering little guidance when spike energy is unevenly distributed across the support. Motivated by this, we propose Spectral Energy Pursuit (SEP), an effective iterative scheme that repeatedly screens and reselects coordinates, with a sample complexity that adapts to the energy profile. We develop our framework around a structure function \(s(p)\) that quantifies how spike energy accumulates over its top \(p\) entries. We establish that SEP succeeds with a sample size of order \(\max_{1\le p\le k} p\,s^2(p)\,\log n\), which matches the classical \(k^2\log n\) sample complexity for flat spikes and improves toward the \(k\log n\) regime as the profile becomes more concentrated. As a lightweight post-processing, a single truncated power iteration is proven to enable the final estimator to attain a uniform statistical error bound. Empirical simulations across flat, power-law, and exponential signals validate that SEP adapts to profile structure without tuning and outperforms existing algorithms.
\end{abstract}

\begin{IEEEkeywords}
  Sparse PCA, high-dimensional statistics, sample complexity, signal energy profile, truncated power method.
\end{IEEEkeywords}

% !TEX root = 00Main.tex
\section{Introduction}\label{sec:introduction}

Principal Component Analysis (PCA)~\cite{EckartYoung1936MatrixApprox,Jolliffe2002PCA} is a cornerstone of multivariate statistics and machine learning and has numerous applications in data analysis and dimensionality reduction~\cite{TurkPentland1991Eigenfaces}. In high dimensions with a limited number of samples, however, classical PCA can be statistically inefficient and unreliable. Sparse PCA (SPCA) addresses this statistical inconsistency by seeking a leading component with small support~\cite{Zou2006SPCA, Guan2009SPCAProbabilistic, Lee2010SPCABiclustering, GuerraUrzola2021SPCAGuide}. In its simplest form, we observe \(m\) samples \(\bfx_1,\dots,\bfx_m\in\mathbb{R}^n\) drawn \iid from a centered Gaussian distribution \(\mcN(\mathbf{0}, \bSigma)\) with covariance structure
\begin{equation}
    \bSigma = \bfI_n + \theta\, \bfv\bfv^\top,\qquad \|\bfv\|_2=1,\quad \|\bfv\|_0\le k,
\end{equation}
where \(\theta>0\) quantifies the spike strength.
The goal of SPCA is to estimate the leading eigenvector \(\bfv\) of \(\bSigma\) under the assumption that the spike \(\bfv\) only has at most \(k\) nonzero entries.

A mature theory now characterizes the fundamental limits of SPCA under the single-spike model: the minimax sample complexity for consistent direction estimation scales as \(m \asymp k\log n\) when the spike is \(k\)-sparse~\cite{VuLei2012MinimaxSPCA, WangLuLiu2014MinimaxSPCA, Johnstone2009SPCAConsistency, BerthetRigollet2013aSPCADetection}. Therefore, this bound is often called the statistical lower bound for SPCA, and can be achieved by exhaustive search over all \(\binom{n}{k}\) possible supports~\cite{BerthetRigollet2013aSPCADetection, Liu2021Towards}, i.e., solving the following NP-hard optimization problem:
\begin{equation}\label{eq:spca-problem}
    \hat\bfv = \underset{\bfw: \|\bfw\|_2=1, \|\bfw\|_0\le k}{\arg \max} \bfw^\top \hSigma \bfw,
\end{equation}
where \(\hSigma = \frac{1}{m}\sum_{i=1}^m \bfx_i \bfx_i^\top\) is the sample covariance matrix. However, efficiently achieving the optimal sample complexity via practical polynomial-time algorithms remains challenging. Classical screening/thresholding methods select high-variance or correlated coordinates and then run PCA on the restricted submatrix~\cite{Johnstone2009SPCAConsistency, Ma2013SPCAIT}. Their guarantees typically scale as \(m\gtrsim k^2\log n\). Semidefinite relaxations (SDP)~\cite{dAspremont2007PCASDP,Amini2009SPCASDP} can achieve \(m\gtrsim k\log n\) when the solution is rank one, but ensuring rank one requires \(m\gtrsim k^2\log n\)~\cite{Krauthgamer2015SDPSPCAInfoLimit}, and solving large-scale SDPs is computationally demanding. The gap between the statistical lower bound \(k\log n\) and the practical sample complexity \(k^2\log n\) is believed to be fundamental, as improving upon it would imply breakthroughs in other well-known hard problems, such as Planted Clique~\cite{Feige2000HiddenClique,BerthetRigollet2013bSPCADetectionLB, Wang2016SPCACompStatTradeoff}.

The algorithms and guarantees stated above hold uniformly over all \(k\)-sparse spikes and thus are governed by worst-case performance~\cite{Moghaddam2006SPCABounds, dAspremont2008PCAOptimalSolution, Birnbaum2013MinimaxSPCA}. Typically, the worst case is attained by the flat sparsity regime, where the nonzeros of \(\bfv\) have comparable magnitudes. Therefore, most algorithmic analyses are tailored to flat signals, and relatively little is known about how to leverage non-flat structures to improve the sample complexity.

This worst-case perspective overlooks a practical reality: signals are rarely ``flat'' sparse. Across genomics, imaging, and text, loadings often exhibit graded profiles (power-law or exponential) in which a few leading coefficients carry most of the energy~\cite{Carvalho2008SPCAGeneExpression,Seghouane2018AdaptiveBlockSPCA}. Intuitively, such concentration should reduce the sample complexity: since the prominent coordinates are easier to distinguish from noise, the total number of samples required for recovery ought to drop accordingly. Recent progress confirms this intuition: when a single coordinate carries a dominant share of the energy, the sample complexity can drop to \(m\gtrsim k\log n\)~\cite{Wu2021Hadamard,Cai2022Provable,Xu2024Exponential}. Crucially, although these advances adopt a profile-based viewpoint, they quantify concentration exclusively through the top coordinate. As a result, although these methods are applicable beyond the single-peak setting, their guarantees are governed by the largest entry and thus are largely insensitive to the broader energy profile. This motivates the central question of this paper:

    \emph{Can we develop a polynomial-time SPCA algorithm whose guarantees are governed by the full energy profile?}

In this paper, we answer this question positively. We introduce \(s(p)\) to quantify the energy accumulation of the top \(p\) coordinates of the spike (see below \Cref{def:structure-function} for a formal definition), and propose Spectral Energy Pursuit (SEP). Conceptually, SEP gradually builds the support set by alternating between signal estimation on the restricted subset and coordinate selection on the full matrix. This allows the algorithm to exploit the cumulative energy across the top coordinates. Crucially, although no profile information is used, its sample complexity adapts to the energy profile: it matches the classic sample complexity in the flat regime and becomes strictly smaller as the signal energy becomes more concentrated.

\begin{table*}[t]
    \centering
    \caption{Sample-complexity landscape for SPCA}
    \label{tab:sc-landscape}
    \begin{tabular}{@{}cccc@{}}
        \toprule
        Line of work                                                                                                                                        & Sample complexity                      & Profile dependence & Notes                     \\
        \midrule
        Information-theoretic limits~\cite{VuLei2012MinimaxSPCA, WangLuLiu2014MinimaxSPCA, Johnstone2009SPCAConsistency, BerthetRigollet2013aSPCADetection} & \(k\log n\)                            & -                  & not poly-time             \\

        Diagonal Thresholding (DT)~\cite{Johnstone2009SPCAConsistency}                                                                                      & \(k^2\log n\)                          & none               & -                         \\
        Semidefinite Programming (SDP)~\cite{dAspremont2007PCASDP,Amini2009SPCASDP}                                                                         & \(k^2\log n\)                          & none               & poly-time but heavy       \\

        Single-peak-based methods~\cite{Cai25PeakPCA}                                                                                                       & \(ks(1)\log n\)                        & top-\(1\) only     & sensitive to seeding      \\
        \cmidrule(l{2cm}r{2cm}){1-4}\noalign{\vskip -0.4ex}
        \multicolumn{4}{c}{\emph{What is absent: polynomial-time guarantees depending on the full profile.}}                                                                                                                                          \\
        %\midrule
        \cmidrule(l{2cm}r{2cm}){1-4}

        \textbf{SEP (ours)}                                                                                                                                 & \(\max_{1\le p\le k}\; ps^2(p)\log n\) & full profile       & \textbf{uniformly better} \\
        \bottomrule
    \end{tabular}
\end{table*}

We summarize the line of work on SPCA in \Cref{tab:sc-landscape}, where one can see that SEP is the first practical algorithm whose sample complexity fully adapts to the entire energy profile of the spike via the structure function \(s(p)\). Our contributions are threefold.

\begin{itemize}[itemsep=2pt, topsep=2pt]
    \item \textbf{Algorithmic contribution: Spectral Energy Pursuit (SEP).}
              We propose Spectral Energy Pursuit (SEP), a computationally efficient and profile-agnostic algorithm for SPCA. It admits a simple implementation and achieves robust performance in practice without requiring prior knowledge of the signal's energy structure.

    \item \textbf{Theoretical contribution: Instance dependent sample complexity.}
              The sample complexity adapts to the full energy profile \(s(p)\), recovering \(k^2\log n\) for flat spikes and improving toward \(k\log n\) as energy concentrates, strictly outperforming single-peak-based guarantees.

    \item \textbf{Refinement contribution: Iteration independent accuracy.}
              With a single centered truncated-power step, the estimator already reaches the statistical error floor; more iterations do not change the order.

\end{itemize}

Throughout the paper, for a vector \(\bfv\in\R^n\), \(\|\bfv\|_s\) denotes its \(\ell_s\) norm. We use \(\mcH_k(\bfv)\) to denote the hard-thresholding operator that keeps the top-\(k\) largest-magnitude entries of \(\bfv\) and sets the rest to zero. We use \(v_{(1)}\ge v_{(2)}\ge \cdots\) to denote the sorted absolute entries of \(\bfv\). For a matrix \(\bfA\in\R^{n\times n}\), \(\|\bfA\|_2\) is its spectral norm (i.e., largest singular value). For index sets \(S, U\subseteq[n]:=\{1,2,\dots,n\}\), \(\bfA_{S,U}\) is the submatrix of \(\bfA\) row-indexed by \(S\) and column-indexed by \(U\), and \(\bfv_S\) is the subvector of \(\bfv\) indexed by \(S\). For the sample \(\bfx_1,\dots,\bfx_m\in\R^n\), we use \(\bfx_i(S)\) to denote the subvector of \(\bfx_i\) indexed by \(S\). For two positive sequences \(a_n,b_n>0\), we write \(a_n \lesssim b_n\) or \(b_n \gtrsim a_n\) if there exists an absolute constant \(C>0\) such that \(a_n \le C b_n\) for all sufficiently large \(n\); we write \(a_n \asymp b_n\) if both \(a_n \lesssim b_n\) and \(b_n \lesssim a_n\) hold. Absolute constants \(C,c>0\) (possibly with subscripts) are allowed to vary between occurrences.
Unless specified, they are universal (independent of problem parameters), and this convention is in force throughout statements and proofs.

The remainder of this paper is organized as follows. \Cref{sec:algorithm} reviews existing algorithms and then presents SEP and key intuitions. \Cref{sec:theory} states the main theoretical results, followed by proofs in \Cref{sec:proof}. \Cref{sec:discussion} discusses two aspects: data dependence across rounds and the role of operator choice in our refinement technique. \Cref{sec:simulations} provides numerical experiments, and \Cref{sec:conclusion} concludes the paper. Appendices collect some auxiliary lemmas and additional proofs.

% \input{02Relatedwork.tex} % merged into Introduction as a brief paragraph
% !TEX root = 00Main.tex
\section{Algorithm}\label{sec:algorithm}

To better understand the design of our Spectral Energy Pursuit (SEP) algorithm for sparse PCA (SPCA), we first revisit two classical approaches and their intuitions: (i) diagonal-thresholding methods and (ii) single-peak-based methods that leverage the magnitude of the largest nonzero entry. Then we present our SEP algorithm and explain why it is effective. Finally, we present a refinement technique using TPower to further improve the estimate from SEP without increasing the sample complexity. Here we briefly recall the model setup used throughout the paper.

We consider the standard spiked covariance model, where the population covariance takes on the form
\begin{equation}
    \bSigma = \bfI_n + \theta\, \bfv\bfv^\top,
\end{equation}
and the sample covariance \(\hSigma\) is computed by
\begin{equation}
    \hSigma = \frac1m \sum_{i=1}^m \bfx_i \bfx_i^\top,
\end{equation}
where \(\{\bfx_i\}_{i=1}^m\) are \iid samples drawn from \(\mathcal{N}(\mathbf{0}, \bSigma)\). Throughout this paper, we denote
\begin{equation}\label{eq:hGamma-def}
    \hGamma := \hSigma - \bfI_n,
\end{equation}
where \(\hSigma\) is the sample covariance matrix and \(\bfI_n\) is the \(n\times n\) identity matrix.
We further decompose \(\hGamma\) by a noise component \(\bfW\) as
\begin{equation}\label{eq:matrix-decomp}
    \bfW := \hGamma - \theta\, \bfv\bfv^\top.
\end{equation}

    To streamline intuition, in this section we reason in a regime where the number of samples is large enough so that the sample perturbation \(\bfW\) is small relative to the signal. The formal theory in \Cref{sec:theory} provides uniform quantitative operator-norm bounds on principal blocks of \(\bfW\) together with the resulting sample complexity and error rates.

\subsection{Classical SPCA Algorithms and Principles}\label{sec:classical}

\subsubsection{Diagonal Thresholding (DT) Methods}\label{sec:diag-thresh}
Diagonal-thresholding~\cite{Ma2013SPCAIT, Amini2009SPCASDP} (and closely related covariance-thresholding~\cite{Paul2007SPCAeigenstructure}) methods estimate a support by screening coordinates with large empirical variances or norms of the sample covariance. A basic selector takes the \(k\) coordinates with largest \(\hGamma_{jj}\) (or \(\hSigma_{jj}\), the rationale is the same) and then computes the top eigenvector on this restricted submatrix. The intuition is that, because \(\EE[\hGamma_{jj}] = \theta v_j^2\) for any \(j\), when coordinate \(j\) carries significant spike energy, \(\hGamma_{jj}\) is positively biased by \(\theta v_j^2\), making it stand out after concentration.

For clarity we state bounds in the exactly \(k\)-sparse case where \(v_{(k)}\) is bounded away from zero\footnote{All statements can be written without this assumption by replacing \(v_{(k)}\) with \(\min_{j\in\supp(v)}|v_j|\).}. The sample size required to recover all support coordinates via diagonal screening obeys (up to factors depending on \(\theta\)) \cite{Ma2013SPCAIT, Amini2009SPCASDP}
\begin{equation}\label{eq:DTsample}
    m\  \gtrsim\  \frac{k}{v_{(k)}^2} \log n.
\end{equation}
We emphasize that this bound~\eqref{eq:DTsample} is tailored to exact support recovery. Consequently, the requirement worsens when the weakest nonzero coordinate is very small. A common way to mitigate this is to impose an additional lower bound on the energy in the support, e.g., \(v_{(k)}^2 \gtrsim 1/k\), which essentially implies \(v_{(k)}^2 \asymp 1/k\), yielding the familiar uniform sufficient scaling \(m \gtrsim k^2\log n\).

\begin{figure}[ht]
    \centering
    % matrix size n=8, cell size
    \def\cs{0.35}
    \begin{tikzpicture}[scale=1]
        % grid + diagonal
        \foreach \i in {0,1,2,3,4,5,6,7}{
                \foreach \j in {0,1,2,3,4,5,6,7}{
                        \draw (\i*\cs,-\j*\cs) rectangle ++(\cs,-\cs);
                    }
                \filldraw[fill=gray,draw=black] (\i*\cs,-\i*\cs) rectangle ++(\cs,-\cs);
            }
        \node[anchor=west,font=\scriptsize] at (8*\cs+0.2, -1.0) {Select top-\(k\) of \(\hGamma_{jj}\)};

        \node[anchor=west,font=\scriptsize] at (8*\cs+0.2, -2.0) {\textbf{Successful margin:} \(\theta v_{(k)}^2\).};

    \end{tikzpicture}
    \caption{Diagonal Thresholding algorithm: selects the support indices based on the largest diagonal entries of the centered covariance shift matrix \(\hGamma\). The success condition depends on the smallest nonzero entry \(v_{(k)}\).}
    \label{fig:dt-matrix}
\end{figure}
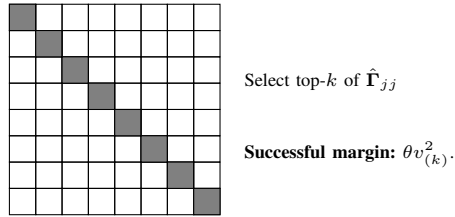

\Cref{fig:dt-matrix} illustrates the core idea of DT. The diagonal entries in the support are elevated by the spike energy \(\theta v_{(i)}^2\), allowing them to be separated from the diagonals outside the support after concentration.

\subsubsection{Single-Peak-Based Methods (Largest-Entry Energy)}\label{sec:peakiness}

Recent analyses~\cite{Cai25PeakPCA} show improved performance when the spike exhibits a dominating entry (single peak).
Roughly, one can tie the sample complexity to \(v_{(1)}\), the energy of the largest coordinate, and provably outperform flat-signal guarantees when \(v_{(1)}\) is sufficiently large.
A representative procedure is as follows:
(i) identify \(j_{\max}=\arg\max_j \hGamma_{jj}\) by diagonal screening;
(ii) use the max-column proxy \(\hGamma_{\cdot, j_{\max}}\) to score coordinates; and
(iii) select the top-\(k\) entries of \(|\hGamma_{\cdot, j_{\max}}|\) to form the support estimate \(S\), then compute the top eigenvector on \(\hGamma_{S,S}\).
It holds that
    \begin{equation}\label{eq:peakiness-proxy}
        \hGamma_{\cdot, j_{\max}} \;=\; \theta\, v_{j_{\max}} \bfv + \bfW_{\cdot, j_{\max}}
        \;\approx\; \theta\, v_{(1)} \bfv ,
    \end{equation}
    since the screener typically picks an index attaining a top entry of \(\bfv\), leading to \(v_{j_{\max}} \approx v_{(1)}\).
Therefore, the proxy \(\hGamma_{\cdot, j_{\max}}\) is proportional to \(\bfv\) and scaled by \(v_{(1)}\).
When \(v_{(1)}^2 \gg 1/k\) (non-flat spikes), this common multiplicative boost sharpens the separation between coordinates in the support and those outside it under concentration, leading to the scaling (up to \(\theta\)-dependent factors)
\begin{equation}\label{eq:peakiness-sample}
    m\ \gtrsim\ \frac{k}{v_{(1)}^2}\log n.
\end{equation}

\begin{figure}[ht]
    \centering
    \def\cs{0.35}
    \begin{tikzpicture}[scale=1]
        % choose j_max manually (0-based index)
        \def\jmax{2}
        % highlight its column (cyan)
        \fill[cyan!80] (\jmax*\cs, 0) rectangle ++(\cs, -8*\cs);
        % grid + diagonal
        \foreach \i in {0,1,2,3,4,5,6,7}{
                \foreach \j in {0,1,2,3,4,5,6,7}{
                        \draw (\i*\cs,-\j*\cs) rectangle ++(\cs,-\cs);
                    }
                \filldraw[fill=gray,draw=black] (\i*\cs,-\i*\cs) rectangle ++(\cs,-\cs);
            }
        % strongest diagonal cell for illustration
        \filldraw[fill=red!80,draw=black] (\jmax*\cs,-\jmax*\cs) rectangle ++(\cs,-\cs);
        \node[anchor=west,font=\scriptsize] at (8*\cs+0.2, -0.6) {Select \(j_{\max}=\arg\max_j \hGamma_{jj}\)};
        \node[anchor=west,font=\scriptsize] at (8*\cs+0.2, -1.2) {Use column \(|\hGamma_{\cdot,j_{\max}}|\) to select top-\(k\)};
        \node[anchor=west,font=\scriptsize] at (8*\cs+0.2, -1.8) {\textbf{Successful margin:} \(\theta v_{(1)}^2 \to\theta v_{(1)}v_{(k)}\).};
    \end{tikzpicture}
    \caption{Single-peak-based algorithm: selects the support indices based on the column of the centered covariance shift matrix \(\hGamma\) corresponding to the largest diagonal entry. The success condition depends on the largest and smallest nonzero entries \(v_{(1)}\) and \(v_{(k)}\).}
    \label{fig:peakiness-matrix}
\end{figure}
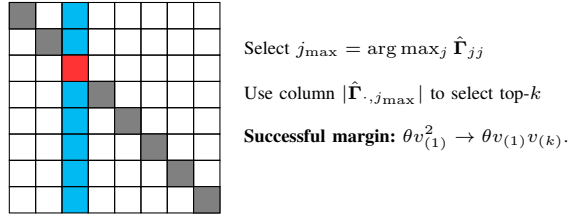

\Cref{fig:peakiness-matrix} illustrates the single-peak-based method. The sample complexity~\eqref{eq:peakiness-sample} is much smaller than \(k^2\log n\) when the signal is highly spiky, e.g., it becomes \(k\log n\) given \(v_{(1)}^2 \asymp 1\). In contrast, the advantage disappears and \eqref{eq:peakiness-sample} matches the \(k^2\log n\) order in the flat regime \(v_{(1)}^2 \asymp 1/k\). The gain comes from using a cross-coordinate proxy built from the column of the (estimated) largest entry: the single-peak heuristic uses \(\hGamma_{i, j_{\max}} \approx \theta\, v_{(1)} v_i\), whereas DT only uses the diagonal entry \(\hGamma_{ii} \approx \theta v_i^2\). This cross-coordinate amplification particularly helps non-flat profiles, where many \(v_i\) are small but get boosted by the leading factor
\(v_{(1)}\), whereas the diagonal statistic \(v_i^2\) remains too weak to pass the screening threshold.

\subsection{Spectral Energy Pursuit (SEP)}\label{sec:SEP}
Single-peak-based approaches exploit the largest entry \(v_{(1)}\) to bootstrap support recovery. While effective when \(v_{(1)}^2 \gg 1/k\), they face two structural limitations:
\begin{enumerate}[itemsep=2pt, topsep=2pt]
    \item \textbf{Single-anchor perspective:} guarantees are typically anchored to the largest coordinate, which may underutilize the cumulative energy spread across multiple top entries when the spike is less pronounced.
    \item \textbf{Sensitivity to the seeding step:} it first identifies \(j_{\max}=\arg\max_j \hGamma_{jj}\) and then builds a column proxy around it; this can make performance sensitive to the initial screener.
\end{enumerate}

\begin{algorithm}[ht]
    \caption{\textsc{Spectral Energy Pursuit (SEP)}}
    \label{alg:SEP}
    \begin{algorithmic}[1]
        \Require Samples \(\{\bfx_i\}_{i=1}^m\), sparsity budget \(k\).
        \State \(\hSigma \leftarrow \frac1m\sum_i \bfx_i \bfx_i^\top\), \(\hGamma\leftarrow \hSigma-\bfI\), \(d_j\leftarrow \hGamma_{jj}\).
        \State \(S^{(1)}\leftarrow \{\arg\max_j |d_j|\}\).
        \For{\(p=1\) to \(k-1\)}
        \State \(\hat\bfe^{(p)}\leftarrow \text{top-eigvec of }\hGamma_{S^{(p)},S^{(p)}}\); zero-pad \(\hat\bfe^{(p)}\) to \(\R^n\) and normalize.
        \State \(\bfu^{(p)}\leftarrow \hGamma  \hat\bfe^{(p)}\).
        \State \(S^{(p+1)}\leftarrow\) indices of the top-\((p{+}1)\) entries of \(|\bfu^{(p)}|\). \Comment{reselection}
        \EndFor
        \State \(\hat \bfv\leftarrow\) top-eigvec of \(\hGamma_{S^{(k)},S^{(k)}}\); zero-pad \(\hat \bfv\) to \(\R^n\) and normalize.
        \Ensure \(\hat \bfv\).
    \end{algorithmic}
\end{algorithm}

% --- [Place Figure Here] ---
\begin{figure}[ht]
    \centering
    \def\cs{0.35}
    \begin{tikzpicture}[scale=1]
        % ---------- Left matrix: highlight ONLY the submatrix S^(p) x S^(p) ----------
        % S^(p) = {2,3,4}
        % first paint the submatrix cells
        \foreach \i in {2,3,4}{
                \foreach \j in {2,3,4}{
                        \fill[orange!80] (\i*\cs,-\j*\cs) rectangle ++(\cs,-\cs);
                    }
            }
        % grid
        \foreach \i in {0,1,2,3,4,5,6,7}{
                \foreach \j in {0,1,2,3,4,5,6,7}{
                        \draw (\i*\cs,-\j*\cs) rectangle ++(\cs,-\cs);
                    }
            }
        \node[font=\scriptsize] at (4*\cs, 0.35) {\(S^{(p)}\!\times\! S^{(p)}\)};

        % ---------- Middle vector: u = Gamma e^(p), same height as matrix ----------
        \begin{scope}[xshift=3.7cm]
            \def\vw{0.35} \def\vh{\cs} % same height as matrix cells
            % draw 8x1 vector
            \foreach \j in {0,1,2,3,4,5,6,7}{
                    \draw (0,-\j*\vh) rectangle ++(\vw,-\vh);
                }
            % highlight example top-(p+1) entries, e.g., {1,3,4,6}
            \foreach \r in {1,3,4,6}{
                    \filldraw[fill=red!80, draw=black] (0,-\r*\vh) rectangle ++(\vw,-\vh);
                }
            \node[font=\scriptsize,anchor=west] at (-0.6, 0.35) {\(\bfu^{(p)}=\hGamma \hat\bfe^{(p)}\)};
        \end{scope}
        % arrow
        \draw[-{Latex[length=2mm]}] (8.5*\cs, -4*\cs) -- ++(0.5, 0);
        \draw[-{Latex[length=2mm]}] (9.5*\cs+1.0, -4*\cs) -- ++(0.5, 0);
        % ---------- Right matrix: highlight ONLY the submatrix S^(p+1) x S^(p+1) ----------
        % S^(p+1) = {1,3,4,6} (possibly non-contiguous)
        \begin{scope}[xshift=5.0cm]
            \foreach \i in {1,3,4,6}{
                    \foreach \j in {1,3,4,6}{
                            \fill[cyan!80] (\i*\cs,-\j*\cs) rectangle ++(\cs,-\cs);
                        }
                }
            % grid
            \foreach \i in {0,1,2,3,4,5,6,7}{
                    \foreach \j in {0,1,2,3,4,5,6,7}{
                            \draw (\i*\cs,-\j*\cs) rectangle ++(\cs,-\cs);
                        }
                }
            \node[font=\scriptsize] at (4*\cs, 0.35) {\(S^{(p+1)}\!\times\! S^{(p+1)}\)};
            \node[font=\scriptsize] at (-0.9, -8*\cs-0.7) {\textbf{Estimated \blue{energy lower bound} \(\|\bfv_{S^{(\cdot)}}\|_2\):} \(\sqrt{\gamma/s(1)} \to \dots \to \sqrt{\gamma/s(p)} \to \dots \to \sqrt{\gamma/s(k)}\)};
        \end{scope}
    \end{tikzpicture}

    \caption{Spectral Energy Pursuit algorithm: at each round \(p\), SEP forms the vector \(\bfu^{(p)}=\hGamma \hat\bfe^{(p)}\) by multiplying the centered covariance shift matrix \(\hGamma\) with the vector \(\hat\bfe^{(p)}\), the top eigenvector of \(\hGamma_{S^{(p)}, S^{(p)}}\). The next support estimate \(S^{(p+1)}\) is obtained by selecting the top-\((p+1)\) entries of \(\bfu^{(p)}\). The success condition depends on the signal energy structure function \(s(p)\).}
    \label{fig:SEP-matrix}
\end{figure}

% --- [Paragraph 1: Algorithm Intro & Figure-Guided Mechanism] ---
Motivated by these considerations, we present SEP (see~\Cref{alg:SEP}) and illustrate its iterative reselection mechanism in \Cref{fig:SEP-matrix}.
Similar to single-peak methods, SEP starts from diagonal screening to pick the first coordinate, but then proceeds in \(k-1\) rounds of eigenvector computation and reselection to gradually build up the support.
Given the current support \(S^{(p)}\) (where \(p \in [k-1]\)), let \(\hat\bfe^{(p)}\) be the top eigenvector of the restricted submatrix \(\hGamma_{S^{(p)},S^{(p)}}\).
The response decomposes as
\begin{equation}\label{eq:SEP-response}
    \hGamma \hat\bfe^{(p)} \;=\; \theta\,\langle \bfv,\hat\bfe^{(p)}\rangle\, \bfv \;+\; \bfW \hat\bfe^{(p)},
\end{equation}
separating a signal term, whose magnitude scales with the current alignment \(|\langle \bfv,\hat\bfe^{(p)}\rangle|\), from a noise term bounded by concentration.
Intuitively, if \(\hat\bfe^{(p)}\) is reasonably aligned with \(\bfv\), the signal term lifts high-energy coordinates; selecting the top-\((p{+}1)\) entries increases the total spike energy on \(S^{(p+1)}\), which in turn improves the alignment of the next eigenvector.
This creates a positive feedback: a better alignment yields a cleaner ranking, leading to more captured energy.
Crucially, under the high-probability event of bounded noise (see \Cref{prop:principal-submatrix}), this loop is stable: it tolerates intermediate selection errors (e.g., local swaps) without diverging, ensuring the estimate progressively improves rather than degrades.

% --- [Paragraph 2: Connection to Heuristics & Core Advantage] ---
We note that this mechanism coincides with classical heuristics for small sparsity: SEP reduces to diagonal thresholding for $k=1$ and the single-peak method for $k=2$. However, for $k > 2$, a distinct mechanism emerges: unlike static heuristics relying on fixed anchors, SEP utilizes the aforementioned iterative spectral feedback loop.
Crucially, in contrast to DT and ``peakiness'' methods that hinge on per-coordinate margins for separation and signal estimation, SEP adopts a cumulative-energy viewpoint.
The reason is that, when adjacent magnitudes are nearly tied ($v_{(p+1)}\!\approx\! v_{(p)}$), enforcing a strict entry-wise ordering requires resolving vanishingly small differences, which drastically inflates the necessary sample size.
SEP instead tolerates local swaps across consecutive rounds: it may temporarily include the $(p{+}1)$-st index before the $p$-th without harming estimation, as long as the selected set retains sufficient total energy.

% --- [Paragraph 3: Formalization/Invariant] ---
We formalize this requirement via an energy-lower-bound invariant.
At each round \(p\in [k-1]\), we require that the selected support preserves a fixed fraction of the cumulative spike energy:
\begin{equation*}
    \|\bfv_{S^{(p)}}\|_2^2 \ge \gamma \sum_{i=1}^p v_{(i)}^2,
\end{equation*}
where \(\gamma\in(0,1)\) is a constant.
In terms of the structure function \Cref{def:structure-function}, this condition reads \(\|\bfv_{S^{(p)}}\|_2 \ge \sqrt{\gamma/s(p)}\).
Iterating from \(p = 1\) to \(k-1\) yields \(\|\bfv_{S^{(k)}}\|_2 \ge \sqrt{\gamma}\) and the final angle bound, which underpins \Cref{thm:main}.

Finally, we briefly analyze the computational cost of SEP. The first diagonal screening costs \(\mcO(n)\). Then, each round computes a leading eigenvector on the \(p\times p\) principal submatrix \(\hGamma_{S^{(p)},S^{(p)}}\), forms a response \(\bfu^{(p)}=\hGamma \hat\bfe^{(p)}\), and selects the top \(p{+}1\) magnitudes. A spectral step on a \(p\times p\) submatrix costs up to \(\mcO(p^3)\) with standard routines (or less with iterative methods), the multiplication costs \(\mcO(np)\), and the reselection costs \(\mcO(p)\). Therefore, one round costs up to \(\mcO(np+p^3)\). Over \(k\) rounds plus the initial screening, the total cost is up to \(\mcO(nk^2+k^4)\), dominated by the cumulative spectral work on growing submatrices and remains practical for moderate \(k\).
% In conclusion, compared to peakiness-based methods, SEP exploits two key features:
% \begin{enumerate}
%     \item {\bf Alignment--driven amplification:} the signal term is exactly collinear with \(\bfv\) and scales as \(\theta \langle \bfe^{(p)},\bfv\rangle\). As alignment improves, all strong coordinates are amplified together. Unlike peakiness-based schemes that hinge on the single largest entry \(v_{(1)}\), the gain here depends on the global overlap and thus leverages the cumulative energy across the top coordinates;
%     \item {\bf Active correction via re--selection:} at every round we re--rank coordinates using \(\big|\hGamma   \bfe^{(p)}\big|\) and update \(S^{(p)}\) to \(S^{(p+1)}\). This non--monotone update fixes early misselections, avoids being locked to a single column proxy, and steadily raises the captured energy lower bound under our uniform sampling assumption.
% \end{enumerate}

\subsection{Post-refinement with TPower}\label{sec:post-tpower}

\begin{algorithm}[ht]
    \caption{\textsc{TPower Post-Refinement}}
    \label{alg:tpower-refine}
    \begin{algorithmic}[1]
        \Require Sample covariance operator \(\hGamma\), sparsity \(k\), SEP output \(\hat\bfv\) (unit norm),
        iterations \(T\),  parameter \(k' \ge k\).
        \State Initialize \(\bfw^{(0)} \leftarrow  \hat\bfv\).
        \For{\(t=0\) to \(T-1\)}
        \State \emph{Multiplication:} \(\bfy \leftarrow \hGamma \bfw^{(t)}\).
        \State \emph{Truncation:} \(\bfz \leftarrow \mcH_{k'}(\bfy)\) \Comment{keep \(k'\) largest magnitudes, rest \(0\)}
        \State \emph{Normalization:} \(\bfw^{(t+1)} \leftarrow \bfz/\|\bfz\|_2\).
        \EndFor
        \State \textbf{Output:} \(\hat \bfv_{\mathrm{refine}} \leftarrow\)  \(\bfw^{(T)}\).
    \end{algorithmic}
\end{algorithm}

To further improve estimation accuracy, we apply the truncated power method (TPower) as a post-refinement to the SEP output. TPower, introduced by Zhang and collaborators~\cite{Yuan2013TPower} and now widely used in sparse spectral estimation~\cite{Cai2013OptimalSPCA}, alternates a spectral update with hard thresholding. In our analysis (see \Cref{sec:theory-tpower}), running a single iteration with the centered operator \(\hGamma\) already attains the statistical error bound, and extra iterations do not change the order of the statistical error. Practically, we therefore use one or a few iterations as a lightweight polish; \Cref{alg:tpower-refine} gives a fast implementation. The choice of operator is important: using the raw covariance \(\hSigma\) introduces a carry-over term in the spectral update that leaves an optimization residual across iterations, whereas the centered operator \(\hGamma\) avoids this effect and underlies the one-iteration phenomenon. This is discussed in \Cref{sec:dis-tpower}.

    From a computational perspective, one TPower iteration costs a matrix-vector multiply with \(\hGamma\) whose cost is \(\mcO(n^2)\), plus a top-\(k\) selection, whose cost is \(\mcO(n)\) with a selection algorithm or \(\mcO(n\log n)\) by sorting. Thus, \(T\) iterations cost \(\mcO(T n^2)\) time, which is polynomial-time and efficient for moderate \(n\).

% !TEX root = 00Main.tex
\section{Main Results}\label{sec:theory}
\subsection{Preliminary}\label{sec:theory-preliminary}
Before stating our main theorem, we formally define the signal-energy structure function \(s(p)\).
\begin{definition}
    [Signal-energy structure function]\label{def:structure-function}
    Given a unit spike vector \(\bfv\in\R^n\) with sparsity \(k\), define its signal-energy structure function \(s(p)\) for \(1\le p\le k\) as follows
    \begin{equation}
        s(p) := \Big(\sum_{i=1}^p v_{(i)}^2\Big)^{-1}.
    \end{equation}
\end{definition}
The function \(s(p)\) captures the energy accumulation of the top \(p\) coordinates of the spike. It is easy to see that \(s(k)=1\) always holds, and \(s(1)\) ranges from \(1\) (all energy concentrated on one entry) to \(k\) (flat spike with equal energy on all support entries). Moreover, \(s(p)\) is non-increasing in \(p\), and \(ps(p)\) is non-decreasing in \(p\).

Next, we define the error metric used in our analysis, which measures the sine of the angle between the estimated direction \(\hat \bfv\) and the true spike \(\bfv\).
\begin{definition}
    [Direction metric]
    Given two unit vectors \(\hat \bfv, \bfv\in\R^n\),
    the cosine of the angle between them is defined by
    \[
        \cos\angle(\hat \bfv, \bfv) := |\hat \bfv^\top \bfv|.
    \]
    Their direction error is defined by
    \begin{equation}\label{eq:sin-distance}
        \sin\angle(\hat \bfv, \bfv) := \sqrt{1-\cos^2\angle(\hat \bfv, \bfv)} = \sqrt{1 - |\hat \bfv^\top \bfv|^2}.
    \end{equation}
\end{definition}

Finally, we introduce a high-probability event \(\mcE\) that will be used throughout our analysis. Recall that \(\bfW = \hGamma - \theta\, \bfv\bfv^\top\) is the noise matrix defined in \eqref{eq:matrix-decomp}. We define
\begin{equation}\label{eq:uniform-event}
    \mcE \;:=\; \bigcap_{p=1}^{n}\ \bigcap_{\substack{S\subset [n]\\ |S|=p}}
    \left\{\,\|\bfW_{S,S}\|_2 \le C(1+\theta)\sqrt{\frac{p\log n}{m}}\,\right\}.
\end{equation}
for some absolute constant \(C>0\). In plain words, this event establishes a uniform spectral upper bound on the noise fluctuations across all possible principal submatrices. It guarantees that for any subset of coordinates the algorithm might visit, the noise energy remains strictly bounded relative to the subset size. This uniformity is crucial for handling the data-dependent nature of the support selection, as it rules out the existence of any ``worst-case'' blocks that could mislead the algorithm. The probability of the event \(\mcE\) is controlled by the following proposition.

\begin{proposition}[Principal-submatrix spectral bound]\label{prop:principal-submatrix}
    There exist absolute constants \(C,c>0\) such that, with probability at least \(1-n^{-c}\), it holds that
    \begin{equation}
        \|\bfW_{S,S}\|_{2}
        \;\le\; C(1+\theta)\sqrt{\frac{p\,\log n}{m}},
    \end{equation}
    for every \(p\in[n]\) and every index set \(S\subset[n]\) with \(|S|=p\). This implies \(\PP(\mcE)\ge 1-n^{-c}\).
\end{proposition}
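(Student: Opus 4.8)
The plan is to fix a single index set $S$ first, establish a sub-exponential concentration bound for the single block $\|\bfW_{S,S}\|_2$, and then pay a union bound over all $\binom{n}{p}$ sets of each size $p$ and over $p\in[n]$, choosing the per-block deviation level to scale like $p\log n$ so that the (exponentially large) combinatorial factor is absorbed. The key preliminary observation is that $\bSigma_{S,S}=\bfI_{|S|}+\theta\,\bfv_S\bfv_S^{\top}$ and $\hGamma_{S,S}=\hSigma_{S,S}-\bfI_{|S|}$, so $\bfW_{S,S}=\hSigma_{S,S}-\bSigma_{S,S}$ is precisely the centered sample covariance of the $m$ iid vectors $\bfx_i(S)\sim\mathcal N(\mathbf 0,\bSigma_{S,S})$ in $\R^{p}$ (with $p=|S|$), and $\|\bSigma_{S,S}\|_2\le\|\bSigma\|_2=1+\theta$.

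For a fixed $S$ I would run the standard $\varepsilon$-net argument. For a unit vector $\bfu\in\R^{p}$ one has $\langle\bfx_i(S),\bfu\rangle\sim\mathcal N(0,\sigma_{\bfu}^{2})$ with $\sigma_{\bfu}^{2}=\bfu^{\top}\bSigma_{S,S}\bfu\le1+\theta$, so the summands $\langle\bfx_i(S),\bfu\rangle^{2}-\sigma_{\bfu}^{2}$ are centered sub-exponential with $\psi_1$-norm $\lesssim 1+\theta$, and Bernstein's inequality controls $\bfu^{\top}\bfW_{S,S}\bfu=\tfrac1m\sum_i(\langle\bfx_i(S),\bfu\rangle^{2}-\sigma_{\bfu}^{2})$. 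Taking a $\tfrac14$-net $\Lambda$ of the unit sphere $\bbS^{p-1}\subset\R^{p}$ with $|\Lambda|\le 9^{p}$ and using $\|\bfW_{S,S}\|_2\le 2\max_{\bfu\in\Lambda}|\bfu^{\top}\bfW_{S,S}\bfu|$, a union bound over $\Lambda$ gives, for any $t>0$,
\[
\Prob\!\left(\|\bfW_{S,S}\|_2> C(1+\theta)\Big(\sqrt{\tfrac{p+t}{m}}+\tfrac{p+t}{m}\Big)\right)\le 2e^{-t}.
\]
This is just the classical Gaussian covariance-estimation bound and could alternatively be quoted from the random-matrix literature.

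Next I would take the union bound over structure. For each $p\in[n]$ there are $\binom{n}{p}\le n^{p}$ sets of size $p$; picking $t=t_p:=(c_0+1)\,p\log n$ for a large enough absolute constant $c_0$ and union-bounding over these sets, the failure probability at level $p$ is at most $2\,n^{p}e^{-t_p}\le 2\,n^{-c_0 p}$. Summing the resulting geometric series over $p=1,\dots,n$ bounds the total failure probability by $2\sum_{p\ge1}n^{-c_0 p}\le n^{-c}$ for all sufficiently large $n$ and an absolute constant $c>0$. On the good event, simultaneously over all $p$ and all $S$ with $|S|=p$,
\[
\|\bfW_{S,S}\|_2\le C'(1+\theta)\Big(\sqrt{\tfrac{p\log n}{m}}+\tfrac{p\log n}{m}\Big),
\]
and in the sample regime in which the main theorems are stated one has $m\gtrsim p\log n$ (indeed $m\gtrsim p\,s^{2}(p)\log n\ge p\log n$ since $s(p)\ge1$), so the linear term is dominated by the square-root term and we recover $\|\bfW_{S,S}\|_2\le C(1+\theta)\sqrt{p\log n/m}$ with the constant $C$ appearing in \eqref{eq:uniform-event}; hence $\Prob(\mcE)\ge 1-n^{-c}$.

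The hard part is the structural union bound of the third step: the number of size-$p$ principal submatrices is $\binom{n}{p}\approx n^{p}$, which is enormous, so a crude per-block tail bound is useless. The resolution — and the reason the statement carries the $\sqrt{p\log n/m}$ (rather than $\sqrt{\log n/m}$) scaling — is to demand a per-block deviation proportional to $p\log n$ in the Bernstein step, which is exactly enough to make $\binom{n}{p}e^{-c_0 p\log n}$ summable over all sizes $p$. A secondary, cosmetic point is that Gaussian covariance concentration is inherently two-term ($\sqrt{(p+t)/m}+(p+t)/m$); the clean single-term form of the proposition holds once $m\gtrsim p\log n$, which is precisely the regime in which it is invoked.
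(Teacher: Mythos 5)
Your proposal is correct and follows essentially the same route as the paper: a fixed-block Gaussian covariance deviation bound (the paper cites Vershynin's Theorem 4.6.1 after whitening $\bfW_{S,S}=\bSigma_{S,S}^{1/2}(\tfrac1m\sum_i\bfy_i\bfy_i^\top-\bfI_p)\bSigma_{S,S}^{1/2}$, whereas you rederive it via an $\varepsilon$-net and Bernstein, which is the same standard argument), followed by a union bound over $\binom{n}{p}\le n^p$ sets and over $p$ with deviation level $\asymp p\log n$ so the combinatorial factor is absorbed. Your explicit remark that the clean single-term $\sqrt{p\log n/m}$ form absorbs the sub-exponential $(p+t)/m$ term only when $m\gtrsim p\log n$ is a fair point the paper glosses over by quoting a one-term bound, and it matches the regime in which the event is actually invoked.
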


We condition on \(\mcE\) throughout. It provides uniform, path-wise spectral control for all data-dependent supports and absorbs all probabilistic statements up front, so the remainder of the analysis is deterministic. Moreover, for any (data-dependent) index sets \(S_1,S_2\subset[n]\) with \(|S_1|,|S_2|\le p\), letting \(S:=S_1\cup S_2\) yields the embedding \(\|\bfW_{S_1,S_2}\|_2 \le \|\bfW_{S,S}\|_2\), so \eqref{eq:uniform-event} also controls rectangular blocks used by reselection (up to a benign \(\sqrt{2}\) factor). Working on \(\mcE\) substantially streamlines the analysis, and can avoid some technical complications that require careful union bounds over data-dependent supports; see \Cref{sec:data-dependence} for more details.

\subsection{Results of SEP}\label{sec:theory-SEP}
\Cref{thm:main} below is our main result. It states that SEP enjoys a structure-adaptive sample complexity depending on the function \(s(p)\).

\begin{theorem}[Profile-adaptive sample complexity for direction estimation] \label{thm:main}
    Condition on the high-probability event \(\mcE\). For any \(\gamma\in(0,1)\), if
    \begin{equation}\label{eq:m-bound-gamma}
        m \ \ge\ C_1 \frac{(1+\theta)^2}{\theta^2 \gamma^2(1-\sqrt{\gamma})^2} \max_{1\le p \le k}p s^2(p)\log n,
    \end{equation}
    then %with probability at least \(1-n^{-c}\)
    the final selected support \(S^{(k)}\) in~\Cref{alg:SEP} satisfies \(\|\bfv_{S^{(k)}}\|_2\ge \sqrt{\gamma}\), and
    \begin{equation}\label{eq:SEP-error}
        \sin\angle(\hat \bfv,\bfv)\ \le\ \underbrace{\sqrt{1-\gamma}}_{\text{approximation error}}
        + \quad
        \underbrace{\frac{C_2(1+\theta)}{\theta \gamma}\sqrt{\frac{k\log n}{m}}}_{\text{statistical error}},
    \end{equation}
    where \(\hat \bfv\) is the output of \Cref{alg:SEP}.
\end{theorem}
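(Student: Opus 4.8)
The plan is to prove, by induction on $p\in[k]$ and entirely on the event $\mcE$, the \emph{energy invariant} $\|\bfv_{S^{(p)}}\|_2^2\ge\gamma/s(p)$, and then to convert its terminal case $\|\bfv_{S^{(k)}}\|_2^2\ge\gamma$ (recall $s(k)=1$) into the stated bound. By \Cref{prop:principal-submatrix} all randomness is absorbed into $\mcE$, so the argument is deterministic; every data-dependent block that arises, including rectangular ones, is handled by the embedding $\|\bfW_{S,U}\|_2\le\|\bfW_{S\cup U,\,S\cup U}\|_2$ together with \eqref{eq:uniform-event}, and the absolute constant $C_1$ in \eqref{eq:m-bound-gamma} is chosen large enough that every ``small-perturbation'' inequality below holds. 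For the base case $p=1$: with $j^\star=\arg\max_j|\hGamma_{jj}|$ and $\hGamma_{jj}=\theta v_j^2+W_{jj}$, the singleton case of \eqref{eq:uniform-event} gives $|W_{jj}|\le C(1+\theta)\sqrt{\log n/m}$ for all $j$; comparing $|\hGamma_{j^\star j^\star}|$ with $|\hGamma_{j_1 j_1}|$ for an index $j_1$ attaining $v_{(1)}$ gives $v_{j^\star}^2\ge v_{(1)}^2-\tfrac{2C(1+\theta)}{\theta}\sqrt{\log n/m}$, and since $v_{(1)}^2=1/s(1)$, $s^2(1)\le\max_{q}q s^2(q)$, and $(1-\gamma)^{-2}\le\gamma^{-2}(1-\sqrt\gamma)^{-2}$, condition \eqref{eq:m-bound-gamma} bounds the correction by $(1-\gamma)/s(1)$, so $\|\bfv_{S^{(1)}}\|_2^2=v_{j^\star}^2\ge\gamma/s(1)$.

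For the inductive step, assume $\|\bfv_{S^{(p)}}\|_2^2\ge\gamma/s(p)$. The crux is to lower-bound the alignment $|\langle\bfv,\hat\bfe^{(p)}\rangle|$, and I would do this from Rayleigh quotients rather than Davis--Kahan. Writing $\lambda_1$ for the top eigenvalue of $\hGamma_{S^{(p)},S^{(p)}}=\theta\bfv_{S^{(p)}}\bfv_{S^{(p)}}^\top+\bfW_{S^{(p)},S^{(p)}}$, sandwiching $\lambda_1$ between the quadratic form at $\hat\bfe^{(p)}$ (from above) and at $\bfv_{S^{(p)}}/\|\bfv_{S^{(p)}}\|_2$ (from below), and using $\langle\bfv,\hat\bfe^{(p)}\rangle=\langle\bfv_{S^{(p)}},\hat\bfe^{(p)}\rangle$ since $\hat\bfe^{(p)}$ is supported on $S^{(p)}$, one obtains
\[ |\langle\bfv,\hat\bfe^{(p)}\rangle|^2\ \ge\ \|\bfv_{S^{(p)}}\|_2^2-\tfrac{2}{\theta}\|\bfW_{S^{(p)},S^{(p)}}\|_2\ \ge\ \frac{\gamma}{s(p)}-\frac{2C(1+\theta)}{\theta}\sqrt{\frac{p\log n}{m}}\ \ge\ \frac{\gamma}{2s(p)}, \]
the last inequality by \eqref{eq:m-bound-gamma}; hence $|a_p|:=\theta|\langle\bfv,\hat\bfe^{(p)}\rangle|\ge\theta\sqrt{\gamma/(2s(p))}$. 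Routing through $\sin\angle(\hat\bfe^{(p)},\bfv_{S^{(p)}})$ instead would cost a spurious factor $s(p)/\gamma$ under the square root and inflate the requirement to $p\,s^3(p)\log n$; this is the one genuinely delicate point. Next decompose $\bfu^{(p)}=\hGamma\hat\bfe^{(p)}=a_p\bfv+\bfW\hat\bfe^{(p)}$; for any $S$ with $|S|\le p+1$, since $\hat\bfe^{(p)}$ is a unit vector supported on $S^{(p)}$, $\|(\bfW\hat\bfe^{(p)})_S\|_2\le\|\bfW_{S,S^{(p)}}\|_2\le\|\bfW_{S\cup S^{(p)},\,S\cup S^{(p)}}\|_2\le C(1+\theta)\sqrt{3p\log n/m}=:\eta_p$ on $\mcE$.

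To close the induction, let $T^\star$ index the $p+1$ largest entries of $|\bfv|$, so $\|\bfv_{T^\star}\|_2=1/\sqrt{s(p+1)}$. The reverse triangle inequality gives $\|\bfu^{(p)}_{S^{(p+1)}}\|_2\le|a_p|\,\|\bfv_{S^{(p+1)}}\|_2+\eta_p$ and $\|\bfu^{(p)}_{T^\star}\|_2\ge|a_p|\,\|\bfv_{T^\star}\|_2-\eta_p$, while $S^{(p+1)}$ maximizes $\sum_{j\in S}|\bfu^{(p)}_j|^2$ over $|S|=p+1$, hence $\|\bfu^{(p)}_{S^{(p+1)}}\|_2\ge\|\bfu^{(p)}_{T^\star}\|_2$. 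Chaining these yields $\|\bfv_{S^{(p+1)}}\|_2\ge\|\bfv_{T^\star}\|_2-2\eta_p/|a_p|$, so it suffices to verify $2\eta_p/|a_p|\le(1-\sqrt\gamma)/\sqrt{s(p+1)}$; substituting the bounds on $\eta_p$ and $|a_p|$ reduces this to $m\gtrsim\tfrac{(1+\theta)^2}{\theta^2\gamma(1-\sqrt\gamma)^2}\,p\,s(p)s(p+1)\log n$, and since $s$ is non-increasing, $p\,s(p)s(p+1)\le p\,s^2(p)\le\max_{1\le q\le k}q\,s^2(q)$, so \eqref{eq:m-bound-gamma} is enough. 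This gives $\|\bfv_{S^{(p+1)}}\|_2^2\ge\gamma/s(p+1)$ and completes the induction; in particular $\|\bfv_{S^{(k)}}\|_2^2\ge\gamma$, the first assertion.

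It remains to bound $\sin\angle(\hat\bfv,\bfv)$. Put $\rho:=\|\bfv_{S^{(k)}}\|_2^2\ge\gamma$ and $\bar\bfv:=\bfv_{S^{(k)}}/\sqrt\rho$ (zero-padded). Since $\hat\bfv$ is supported on $S^{(k)}$, $\cos\angle(\hat\bfv,\bfv)=\sqrt\rho\,\cos\angle(\hat\bfv,\bar\bfv)$, whence $\sin^2\angle(\hat\bfv,\bfv)=1-\rho\cos^2\angle(\hat\bfv,\bar\bfv)\le(1-\gamma)+\sin^2\angle(\hat\bfv,\bar\bfv)$ (using $\rho\ge\gamma$ and $\rho\le1$) and therefore $\sin\angle(\hat\bfv,\bfv)\le\sqrt{1-\gamma}+\sin\angle(\hat\bfv,\bar\bfv)$. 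Finally $\hGamma_{S^{(k)},S^{(k)}}$ is a rank-one-plus-noise matrix with signal eigengap $\theta\rho\ge\theta\gamma$ and perturbation norm $\le C(1+\theta)\sqrt{k\log n/m}$, which \eqref{eq:m-bound-gamma} makes at most $\tfrac14\theta\gamma$; the Davis--Kahan $\sin\theta$ theorem then gives $\sin\angle(\hat\bfv,\bar\bfv)\le\tfrac{C'(1+\theta)}{\theta\gamma}\sqrt{k\log n/m}$, and combining with the previous line yields \eqref{eq:SEP-error} after absorbing constants into $C_2$. The main obstacle is the inductive step: $S^{(p+1)}$ is data-dependent and in general \emph{not} the oracle set $T^\star$ (consecutive loadings may be nearly tied), so no entrywise ordering can be insisted upon; the resolution is the cumulative comparison against $T^\star$, which only ever uses the \emph{total} captured energy, together with the Rayleigh-quotient control of the alignment that keeps the sample size at the advertised $\max_p p\,s^2(p)\log n$ rather than $p\,s^3(p)\log n$.
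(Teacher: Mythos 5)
Your proposal is correct and follows essentially the same route as the paper: diagonal-comparison initialization, induction on the energy invariant \(\|\bfv_{S^{(p)}}\|_2^2\ge\gamma/s(p)\) via an alignment bound combined with the \(\ell_2\) comparison of \(S^{(p+1)}\) against the oracle top-\((p+1)\) set (the paper's \Cref{lem:reselect}), and a final split of the error into \(\sqrt{1-\gamma}\) plus a Davis--Kahan statistical term on the \(k\times k\) block. Your only local deviation --- a Rayleigh-quotient sandwich in place of the paper's Davis--Kahan alignment bound (\Cref{lem:DK}) --- yields the same \(p\,s^2(p)\) order, so your aside that the \(\sin\angle\)-based route would inflate the requirement to \(p\,s^3(p)\log n\) is inaccurate (the paper's \Cref{lem:DK} also gives \(p\,s^2(p)\)), but this remark is immaterial to the validity of your argument.
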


On the one hand, the sample complexity bound~\eqref{eq:m-bound-gamma} matches the best known order for many practical algorithms in the flat case, and strictly improves when the signal is concentrated. Specifically, in the flat case \(v_{(i)}^2 = 1/k\), we have \(s(p)=k/p\), so \(\max_p p s^2(p)=k^2\) and condition \eqref{eq:m-bound-gamma} reduces to \(m\gtrsim k^2\,\log n\). In contrast, for highly concentrated profiles where \(s(p)\asymp 1, \forall 1\le p\le k\), it yields scaling \(m \asymp k\log n\). For intermediate profiles, the dependence on \(s(p)\) yields the sample complexity that varies smoothly between these extremes; see \Cref{prop:power-law-interp} for an explicit continuum. Compared with prior polynomial-time guarantees, this scale~\eqref{eq:m-bound-gamma} is never larger and is strictly smaller on broad non-flat classes; we show this in \Cref{thm:superiority}.

One the other hand, the error bound \eqref{eq:SEP-error} cleanly separates two effects. The first term is an approximation error: since the selected support retains a \(\gamma\)-fraction of the spike energy, there is an intrinsic angular bound of order \(\sqrt{1-\gamma}\). The second term is statistical error: it scales as \(\sqrt{(k\log n)/m}\) (up to constants), vanishes as \(m\to\infty\), and matches information-theoretic lower bounds in the worst-case (flat) regime~\cite{VuLei2012MinimaxSPCA,WangLuLiu2014MinimaxSPCA}, hence is minimax-rate optimal there.
For non-flat profiles, the dependence on signal shape enters only through the approximation term via the support-energy level \(\gamma\). At a fixed \(m\), more concentrated profiles (reflected in a smaller \(s(p)\) and thus a smaller \(\max_{p\le k} p\,s^2(p)\) in \eqref{eq:m-bound-gamma}) permit larger \(\gamma\) (closer to \(1\)), and hence a smaller approximation error.

To elucidate why the the sample complexity order \(\max_{1\le p \le k} p s^2(p)\) arises, we consider an intermediate support size \(p\). The current signal energy captured is \(\|\bfv_{S^{(p)}}\|_2 \asymp \sqrt{1/s(p)}\) (\Cref{prop:energy-preservation}),
while the spectral noise on \(p\times p\) principal blocks concentrates at
\(\|\bfW_{S^{(p)},S^{(p)}}\|_2 \lesssim \sqrt{p\log n/m}\).
A Davis--Kahan step (\Cref{lem:DK}) then yields
\begin{equation}\label{eq:intuition-aligment-bound}
    |\langle \bfv,\hat \bfe^{(p)}\rangle|
    \ \gtrsim\
    \|\bfv_{S^{(p)}}\|_2
    \left(1 - \frac{\|\bfW_{S^{(p)},S^{(p)}}\|_2}{\theta\,\|\bfv_{S^{(p)}}\|_2^2}\right)
    \ \gtrsim\
    \sqrt{\frac{1}{s(p)}} - \frac{1}{\theta}\sqrt{\frac{p\,s^2(p)\,\log n}{m}}.
\end{equation}
In the next update,
\(\hGamma \hat\bfe^{(p)}=\theta\, \langle \bfv,\hat\bfe^{(p)}\rangle \bfv + \bfW\hat\bfe^{(p)}\),
and the reselection step (\Cref{lem:reselect}) ensures
\[
    \|\bfv_{S^{(p+1)}}\|_2 \ \ge\ \sqrt{\frac{1}{s(p+1)}} \;-\; \frac{C(1+\theta)}{\theta\,|\langle \bfv,\hat\bfe^{(p)}\rangle|}\sqrt{\frac{(p+1)\log n}{m}}.
\]
Plugging the alignment bound~\eqref{eq:intuition-aligment-bound} into this inequality shows that a single step increases the captured energy from order \(1/s(p)\) to \(1/s(p{+}1)\) whenever
\[
    m \ \gtrsim\ \frac{1}{\theta^2}\,(p{+}1)\,s(p)\,s(p{+}1)\,\log n
    \ \asymp \ \frac{1}{\theta^2}\,p\,s^2(p)\,\log n .
\]
Because the algorithm deterministically grows the support from \(p=1\) to \(p=k\), we guarantee that every intermediate step succeeds by taking the path-wise maximum
\[
    m \ \gtrsim\ \max_{1\le p\le k} p\,s^2(p)\log n.
\]
At the end (on the \(k\times k\) block), a final eigenvector estimation contributes the usual statistical error term
\(\lesssim \sqrt{\frac{k\log n}{m}}\) in~\eqref{eq:SEP-error}.

This heuristic explains the profile-dependent scaling. To visualize how \(\max_p p\,s^2(p)\) interpolates between flat and concentrated regimes, consider a power-law profile as shown in \Cref{prop:power-law-interp}.
\begin{proposition}[Power-law signal profiles: interpolation between flat and concentrated regimes]\label{prop:power-law-interp}
    Let \(v_{(i)}^2 = \lambda \cdot i^{-\alpha}\) for \(i=1,\ldots,k\), where \(\lambda=\left(\sum_{i=1}^k i^{-\alpha}\right)^{-1}\) so that \(\sum_{i=1}^k v_{(i)}^2=1\). Then
    \[
        \max_{1\le p\le k} \; p\, s^2(p) \;\asymp\;
        \begin{cases}
            k^{\,2-2\alpha}, & 0 \le \alpha < \tfrac{1}{2}, \\[4pt]
            k,               & \alpha \ge \tfrac{1}{2}.
        \end{cases}
    \]
    Consequently, the sample complexity \(m \gtrsim \max_p p s^2(p)\log n\) interpolates from \(k^2\log n\) at \(\alpha=0\) (flat) to \(k\log n\) for
    \(\alpha \ge \tfrac12\) (concentrated).
\end{proposition}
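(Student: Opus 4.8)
The plan is to reduce the claim to elementary asymptotics of the truncated $p$‑series $H_\alpha(p):=\sum_{i=1}^p i^{-\alpha}$. Since $v_{(i)}^2=\lambda\,i^{-\alpha}$ with $\lambda=H_\alpha(k)^{-1}$, we have $\sum_{i=1}^p v_{(i)}^2=\lambda\,H_\alpha(p)=H_\alpha(p)/H_\alpha(k)$, hence $s(p)=H_\alpha(k)/H_\alpha(p)$ and
\[
  p\,s^2(p)\;=\;H_\alpha(k)^2\,\frac{p}{H_\alpha(p)^2}.
\]
Thus $\max_{1\le p\le k}p\,s^2(p)=H_\alpha(k)^2\cdot\max_{1\le p\le k}g(p)$ with $g(p):=p/H_\alpha(p)^2$, and it suffices to estimate the two factors separately; throughout, the implied constants are allowed to depend on the fixed exponent $\alpha$.

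First I would record the standard integral‑comparison estimates, valid for every integer $p\ge 1$: $H_\alpha(p)\asymp p^{1-\alpha}$ when $0\le\alpha<1$, $H_1(p)\asymp\log(ep)$, and $H_\alpha(p)\asymp 1$ when $\alpha>1$. These follow from $H_\alpha(p)\ge 1$ together with the usual monotone Riemann‑sum sandwich of $\sum i^{-\alpha}$ by $\int x^{-\alpha}\,dx$; only the endpoint $p=1$ needs a separate, trivial check. Substituting gives $g(p)\asymp p^{2\alpha-1}$ for $0\le\alpha<1$, $g(p)\asymp p/(\log ep)^2$ for $\alpha=1$, and $g(p)\asymp p$ for $\alpha>1$.

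Next I would maximize $g$ over $\{1,\dots,k\}$ using only these two‑sided bounds (no delicate monotonicity is needed). If $0\le\alpha<1/2$, then $p^{2\alpha-1}\le 1$ for all $p\ge 1$, so $\max_p g(p)\asymp 1$; if $1/2\le\alpha<1$, then $p^{2\alpha-1}\le k^{2\alpha-1}$ with equality (up to constants) at $p=k$, so $\max_p g(p)\asymp k^{2\alpha-1}$; if $\alpha>1$, then $\max_p g(p)\asymp k$; and if $\alpha=1$, the map $x\mapsto x/(\log ex)^2$ is increasing for $x\ge e$ and $O(1)$ on $[1,e]$, so $\max_p g(p)\asymp k/(\log ek)^2$ once $k$ is large (all quantities being $\Theta(1)$ for bounded $k$). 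Combining with $H_\alpha(k)^2\asymp k^{2-2\alpha}$ (resp.\ $(\log ek)^2$ for $\alpha=1$, and $1$ for $\alpha>1$) yields $\max_p p\,s^2(p)\asymp k^{2-2\alpha}$ for $0\le\alpha<1/2$; $\asymp k^{2-2\alpha}\cdot k^{2\alpha-1}=k$ for $1/2\le\alpha<1$; $\asymp(\log ek)^2\cdot k/(\log ek)^2=k$ for $\alpha=1$; and $\asymp k$ for $\alpha>1$. This is exactly the stated dichotomy, and substituting into $m\gtrsim\max_p p\,s^2(p)\log n$ gives the sample‑complexity corollary.

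The computation is essentially routine, so I do not expect a genuine obstacle. The only two points that need care are (i) making the $H_\alpha$ estimates uniform in $p$ down to $p=1$, which is handled by treating small $p$ as $\Theta(1)$; and (ii) the borderline case $\alpha=1$, where the factor $p/(\log ep)^2$ is not globally monotone — handled by noting it is increasing for $p\ge e$, so the maximum over $\{1,\dots,k\}$ is still attained (up to constants) at $p=k$.
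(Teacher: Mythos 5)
Your proposal is correct and follows essentially the same route as the paper: both reduce $p\,s^2(p)$ to ratios of truncated $p$-series $H_\alpha(\cdot)$, estimate these by integral comparison, and then maximize the resulting power $p^{2\alpha-1}$ over $\{1,\dots,k\}$ with the case split at $\alpha=\tfrac12$ and separate (routine) treatment of $\alpha=1$ and $\alpha>1$. The only cosmetic difference is that you factor out $H_\alpha(k)^2$ and maximize $g(p)=p/H_\alpha(p)^2$ separately, while the paper maximizes $p\,s^2(p)\asymp k^{2(1-\alpha)}p^{2\alpha-1}$ directly; the content is identical.
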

Recent single-peak based analyses (e.g.,~\cite{Cai25PeakPCA}) achieve the \(k\log n\) rate only under a strong dominance assumption, effectively requiring \(s(1)\asymp 1\) (an overwhelming leading entry). For the power-law family in \Cref{prop:power-law-interp}, however, our criterion \(m\gtrsim \max_{p\le k} p\,s^{2}(p)\log n\) already yields \(k\log n\) for all \(\alpha\ge\frac12\).
For example, in the case \(\alpha=\frac12\), it can be shown that \(s(1)\asymp \sqrt{k}\) (see \Cref{eq:power-law-sp}), so single-peak driven bounds inflate to \(k^{3/2}\log n\). This reflects a structural advantage beyond the largest coordinate. The rigorous results are given in \Cref{sec:superiority}, where we show that SEP attains strictly better sample complexity on the certain non-flat profiles while never worsening the order relative to existing guarantees for all profiles.

\subsection{Results of TPower}\label{sec:theory-tpower}
When applying the TPower refinement after SEP, the approximation error \(\sqrt{1-\gamma}\) can be eliminated and only statistical error remains, as stated in the following theorem.

\begin{theorem}[TPower after \(T\) iterations: uniform statistical upper bound]\label{thm:tpower-T}
    Let the initialization \(\bfw^{(0)}\) of \Cref{alg:tpower-refine} be the output of \Cref{alg:SEP} whose support \(S^{(k)}\) satisfies \(\|\bfv_{S^{(k)}}\|_2\ge \sqrt{\gamma}\) for some \(\gamma\in(0,1)\). Let \(\bfw^{(T)}\) be the \(T\)-iteration output of \Cref{alg:tpower-refine} with keep-\(k'\) thresholding. Condition on the high-probability event \(\mcE\).
    When
    \begin{equation}\label{eq:eq:m-bound-tpower}
        m \ \ge\ C_1\,\frac{(1+\theta)^2}{\theta^2\gamma^2}\,k'\log n,
    \end{equation}
    it holds that
    \[
        \sin\angle\bigl(\bfw^{(T)},\bfv\bigr)\ \le\ C_2\,\frac{1+\theta}{\theta \gamma}\,\sqrt{\frac{k'\log n}{m}}
        \qquad \text{for all } T\ge 1.
    \]
\end{theorem}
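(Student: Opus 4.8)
The plan is to analyze a single TPower iteration and show it already reaches the statistical floor, then argue the iteration is a contraction-up-to-noise so that extra steps cannot inflate the error. I would begin by setting up the one-step recursion: write $\bfw^{(t+1)} = \bfz/\|\bfz\|_2$ with $\bfz = \mcH_{k'}(\hGamma\bfw^{(t)})$ and $\hGamma\bfw^{(t)} = \theta\langle\bfv,\bfw^{(t)}\rangle\bfv + \bfW\bfw^{(t)}$. Let $c_t := |\langle\bfv,\bfw^{(t)}\rangle| = \cos\angle(\bfw^{(t)},\bfv)$ and $s_t := \sin\angle(\bfw^{(t)},\bfv)$. The key quantity to track is the improvement of $c_t$. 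Since the true eigenvector $\bfv$ is $k$-sparse and $k'\ge k$, the hard-threshold set $T^{(t)} := \supp(\bfz)$ of size $k'$ satisfies the ``dominant support'' property: restricting to $T^{(t)}$ loses at most the noise energy relative to the signal, because any coordinate of $\bfv$ dropped by $\mcH_{k'}$ would have to be outranked by $k'-k+1$ noise-dominated coordinates. I would make this precise with a lemma (analogous to \Cref{lem:reselect}): on $\mcE$,
\[
  \|\bfv - \bfv_{T^{(t)}}\|_2 \ \lesssim\ \frac{\|\bfW_{T,T}\|_2}{\theta\, c_t} \ \lesssim\ \frac{1+\theta}{\theta\, c_t}\sqrt{\frac{k'\log n}{m}},
\]
where $T = T^{(t)}\cup\supp(\bfv)$ has size at most $2k'$, so $\|\bfW_{T,T}\|_2 \lesssim (1+\theta)\sqrt{k'\log n/m}$ by the embedding remark under \Cref{prop:principal-submatrix}.

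Next I would convert this support-capture bound into an angle bound for $\bfw^{(t+1)}$. Write $\bfy := \hGamma\bfw^{(t)}$ and note $\bfz = \bfy_{T^{(t)}}$. Decompose $\bfz = \theta c_t'\,\bfv_{T^{(t)}} + (\bfW\bfw^{(t)})_{T^{(t)}}$ where $c_t' = \langle\bfv,\bfw^{(t)}\rangle$ (sign absorbed). The component of $\bfz$ along $\bfv$ is $\langle\bfv,\bfz\rangle = \theta c_t'\langle\bfv,\bfv_{T^{(t)}}\rangle + \langle\bfv_{T^{(t)}}, \bfW\bfw^{(t)}\rangle$; the first term is $\ge \theta c_t(1 - \|\bfv-\bfv_{T^{(t)}}\|_2^2/2)$ in magnitude, the second is bounded by $\|\bfW_{T,T}\|_2$. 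For the norm, $\|\bfz\|_2 \le \theta c_t\|\bfv\|_2 + \|\bfW_{T,T}\|_2 = \theta c_t + \|\bfW_{T,T}\|_2$ since $\|\bfv\|_2=1$ and $|\langle\bfv,\bfw^{(t)}\rangle|\le 1$; more carefully $\|\bfz\|_2 \ge \theta c_t - \|\bfW_{T,T}\|_2$ as well. Therefore
\[
  c_{t+1} = \frac{|\langle\bfv,\bfz\rangle|}{\|\bfz\|_2} \ \ge\ \frac{\theta c_t(1 - \|\bfv-\bfv_{T^{(t)}}\|_2^2/2) - \|\bfW_{T,T}\|_2}{\theta c_t + \|\bfW_{T,T}\|_2}.
\]
Writing $\delta := \|\bfW_{T,T}\|_2 \lesssim (1+\theta)\sqrt{k'\log n/m}$ and using $\|\bfv-\bfv_{T^{(t)}}\|_2 \lesssim \delta/(\theta c_t)$, one sees that whenever $c_t$ is bounded below by a constant (say $c_t \ge \sqrt{\gamma}/2$) and $\delta \le c\,\theta\gamma$ for a small absolute constant — which is exactly condition \eqref{eq:eq:m-bound-tpower} — the map $c_t \mapsto c_{t+1}$ pushes $c_{t+1}$ up toward $1$, with the fixed point / floor governed by $s_{t+1} \lesssim \delta/(\theta\gamma) \lesssim \frac{1+\theta}{\theta\gamma}\sqrt{k'\log n/m}$. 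Concretely I would show two things in one shot: (a) $c_{t+1}\ge c_t$ as long as $s_t$ exceeds the floor (monotone improvement), and (b) $s_{t+1} \le \max\{ \kappa\, s_t,\ C\delta/(\theta\gamma)\}$ for some $\kappa<1$ (geometric contraction to the floor). Starting from $c_0 = \cos\angle(\hat\bfv,\bfv) \ge \|\bfv_{S^{(k)}}\|_2 \ge \sqrt{\gamma}$ (the SEP output aligns at least as well as its support energy, since $\hat\bfv$ is the top eigenvector of $\hGamma_{S^{(k)},S^{(k)}}$ — here I invoke the Davis–Kahan argument already used for \Cref{thm:main}), induction gives $c_t \ge \sqrt{\gamma}$ for all $t$, hence the contraction (b) applies at every step, and after $T\ge 1$ iterations $s_T \le C_2\frac{1+\theta}{\theta\gamma}\sqrt{k'\log n/m}$, uniformly in $T$. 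The case $T=1$ already lands at the floor because the contraction factor multiplies an initial error $s_0 \le \sqrt{1-\gamma} = O(1)$, and one step of the recursion with $c_0\ge\sqrt\gamma$ drives $s_1$ down to $O(\delta/(\theta\gamma))$ directly (the $\|\bfv - \bfv_{T^{(0)}}\|_2^2/2$ term is already $O((\delta/\theta\gamma)^2)$, lower order).

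The main obstacle I anticipate is controlling $c_{t+1}$ from below \emph{without} circular reliance on $c_{t+1}$ itself — i.e., making the support-capture lemma self-contained given only a lower bound on $c_t$. The subtlety is that $\mcH_{k'}$ acts on $\bfy = \hGamma\bfw^{(t)}$, whose entries mix the rank-one signal $\theta c_t'\bfv$ with the noise vector $\bfW\bfw^{(t)}$, and $\bfW\bfw^{(t)}$ is \emph{not} a clean $k'$-sparse object; its large entries could in principle sit outside $\supp(\bfv)$ and displace true support coordinates. The resolution, as in the reselection analysis, is that a coordinate $j\in\supp(\bfv)$ is dropped only if at least $k'-k+1$ off-support coordinates beat it, and each such coordinate's value is at most $|(\bfW\bfw^{(t)})_\ell| \le \|\bfW_{\{\ell\}\cup\supp(\bfw^{(t)}),\ \cdot}\|_2$; summing the squared excluded signal over the dropped coordinates and comparing with the squared included noise over the intruding coordinates, both controlled by the single quantity $\|\bfW_{T,T}\|_2^2$ on the event $\mcE$, closes the loop. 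A secondary technical point is the exact constant bookkeeping so that \eqref{eq:eq:m-bound-tpower} (with its $1/(\theta^2\gamma^2)$ factor) is precisely what makes $\delta/(\theta\gamma)$ small enough for the contraction to kick in; this is routine but must be done carefully to match the stated constants $C_1, C_2$. I would also note that $k'\ge k$ is essential — it gives slack in the thresholding so that no true support coordinate is forced out merely by ties among the $k$ signal coordinates themselves — and that taking $k'$ strictly larger than $k$ trades a slightly larger statistical rate ($\sqrt{k'\log n/m}$ vs.\ $\sqrt{k\log n/m}$) for additional robustness, which is why the theorem is stated with a free parameter $k'$.
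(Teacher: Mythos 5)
Your proposal follows essentially the same route as the paper: a one-step analysis of keep-\(k'\) hard thresholding applied to \(\hGamma\bfw^{(t)}=\theta\langle\bfv,\bfw^{(t)}\rangle\bfv+\bfW\bfw^{(t)}\), with the dropped-signal versus intruding-noise pairing argument and uniform principal-block bounds on \(\bfW\) (this is \Cref{prop:HT-align}), combined with an alignment invariant propagated by induction (\Cref{prop:invariant}) and a Davis--Kahan alignment lower bound for the SEP initializer (\Cref{prop:alpha0-lb}, via \Cref{lem:DK}). In both your argument and the paper's, the one-step sine bound depends only on a cosine lower bound and not on the previous sine, which is exactly what makes the final bound uniform in \(T\); your extra ``geometric contraction'' clause (b) is superfluous for the stated result.

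One bookkeeping point does not close as you stated it: the induction claim \(c_t\ge\sqrt{\gamma}\) for all \(t\) cannot be sustained by the one-step cosine bound \(c_{t+1}\ge(\theta c_t-2b)/(\theta c_t+b)\), since that ratio is only guaranteed to exceed an absolute constant (about \(4/7\) with the paper's choice \(b\le\theta c_\ast\gamma/6\)) and hence can fall below \(\sqrt{\gamma}\) when \(\gamma\) is close to \(1\); likewise \(c_0\ge\|\bfv_{S^{(k)}}\|_2\) is not literally true without the Davis--Kahan noise discount, which only yields a constant multiple of \(\sqrt{\gamma}\) under the sample condition. The paper repairs precisely this by maintaining the weaker invariant \(\alpha_t\ge c_\ast\gamma\) with an absolute \(c_\ast\in(0,1/2]\), which is also why the final bound carries \(1/\gamma\) rather than \(1/\sqrt{\gamma}\). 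With that adjustment (and enlarging your set \(T\) to also contain \(\supp(\bfw^{(t)})\) so the block bound on \(\bfW\) actually covers the terms \(\bfW\bfw^{(t)}\) you need), your argument goes through and matches the paper's.
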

If we set \(k' = Ck\) in \Cref{thm:tpower-T} for some absolute constant \(C\), the sample complexity requirement \eqref{eq:eq:m-bound-tpower} is weaker than \eqref{eq:m-bound-gamma} in \Cref{thm:main}, so the overall sample complexity is still dominated by \eqref{eq:m-bound-gamma}.

In \Cref{thm:tpower-T}, the number of iterations \(T\) does not appear in the final bound. This means that even a single iteration of TPower refinement suffices to reach the statistical upper bound in term of order, while further iterations improve the constant factors only and do not improve the rate.

Importantly, the refinement guarantee does not rely on the specifics of SEP. In fact, it is implied from the proof that any initializer \(\hat\bfv\) aligns with the true spike \(\bfv\) at the constant level \(\sqrt{\gamma}\), i.e., \(|\langle \hat\bfv, \bfv \rangle| \ge \sqrt{\gamma}\), can be upgraded by the TPower refinement with the centered operator \(\hGamma\) to the same statistical error. In this sense, this result is general and can be paired with a variety of polynomial-time initializers. The role of SEP is to furnish such an initializer under broad energy-profile structures.

\subsection{Theoretical superiority of SEP}\label{sec:superiority}
In this section, we discuss the superiority of SEP over existing polynomial-time algorithms in terms of sample complexity across various signal structures. For simplicity, we here ignore the \(\theta\)-dependence and constants, focusing on the order-wise comparison.

A state-of-the-art polynomial-time algorithm is the single-peak-based method \cite{Cai25PeakPCA} we introduce in \Cref{sec:classical}, whose sample complexity is
\begin{equation}\label{eq:ks1-logn}
    m \ \gtrsim \ ks(1) \log n.
\end{equation}

Moreover, in the related sparse phase retrieval literature, a more refined profile-dependent bound has been derived~\cite{Xu2025OptimalPR}, which is
\begin{equation}\label{eq:minmax-logn}
    m \ \gtrsim\ \min_{1\le p \le k} \max\left\{p^2s^2(p), k s(p)\right\} \log n.
\end{equation}
Although the bound \eqref{eq:minmax-logn}  is not established for SPCA, it is still meaningful to compare it with our result \eqref{eq:m-bound-gamma} since the initialization method in their algorithm shares similar techniques as SPCA algorithms. Moreover, the bound \eqref{eq:minmax-logn} is more strict than \eqref{eq:ks1-logn}, since \(s(1) \le k\) and further
\[
    \min_{1\le p \le k}\max \left\{p^2s^2(p), k s(p)\right\} \le \max \left\{p^2s^2(p), k s(p)\right\} \bigg|_{p=1} = k s(1).
\]

Our next theorem, which may be of independent interest, compares the sample complexity scaling of SEP \eqref{eq:m-bound-gamma}, denoted by $A(s)$, with the refined reference bound \eqref{eq:minmax-logn}, denoted by $B(s)$. It establishes that $A(s)$ uniformly improves upon $B(s)$, which implies the superiority of SEP over existing polynomial-time algorithms (including the state-of-the-art result in \eqref{eq:ks1-logn}).

\begin{theorem}[Superiority of SEP sample complexity]\label{thm:superiority}
    For any signal-energy structure function \(s(p)\) defined in \Cref{def:structure-function}, define the two quantities
    \[
        A(s) := \max_{1\le p \le k} p s^2(p),
        \qquad
        B(s) := \min_{1\le p \le k} \max\left\{p^2s^2(p), k s(p)\right\}.
    \]
    Then, the following two statements hold:
    \begin{enumerate}[itemsep=2pt, topsep=2pt, label=(\roman*)]
        \item \textbf{Uniform dominance.} For all profiles \(s(\cdot)\),
              \begin{equation}\label{eq:betterbound}
                  A(s)\ \le\ B(s).
              \end{equation}
        \item \textbf{Strict separation.} There exists a sequence of spikes \(\{\mathbf v^{(k)}\}\) with structure functions \(s_k(\cdot)\) such that
              \begin{equation}\label{eq:strictly-better}
                  \lim_{k\to\infty}\frac{B(s_k)}{A(s_k)}=\infty.
              \end{equation}
    \end{enumerate}
\end{theorem}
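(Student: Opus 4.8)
The plan is to reduce everything to three elementary facts about $s$ already recorded after \Cref{def:structure-function}: $s(\cdot)$ is non-increasing, $p\mapsto p\,s(p)$ is non-decreasing, and $s(k)=1$. The one consequence I would extract first — and the key to the whole argument — is that for every $p\in[k]$, monotonicity of $p\mapsto p\,s(p)$ together with $s(k)=1$ gives $p\,s(p)\le k\,s(k)=k$, i.e. $s(p)\le k/p$. Now fix $q\in[k]$ attaining the minimum in the definition of $B(s)$, so $B(s)=\max\{q^2s^2(q),\,k\,s(q)\}$, and fix an arbitrary $p\in[k]$; I bound $p\,s^2(p)$ by $B(s)$ in two cases. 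If $p\le q$, monotonicity of $p\mapsto p\,s(p)$ gives $(p\,s(p))^2\le (q\,s(q))^2$, and since $p\ge1$ we get $p\,s^2(p)\le p^2s^2(p)=(p\,s(p))^2\le q^2s^2(q)\le B(s)$. If $p>q$, then using $p\,s(p)\le k$ and $s(p)\le s(q)$ (monotonicity of $s$), $p\,s^2(p)=(p\,s(p))\,s(p)\le k\,s(p)\le k\,s(q)\le B(s)$. Taking the maximum over $p$ yields $A(s)=\max_{1\le p\le k}p\,s^2(p)\le B(s)$, which is \eqref{eq:betterbound}.

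\textbf{Part (ii): strict separation.} For the separation I would instantiate the power-law family of \Cref{prop:power-law-interp} at the critical exponent $\alpha=\tfrac12$: let $\mathbf v^{(k)}$ have $v_{(i)}^2=\lambda_k\,i^{-1/2}$ for $i\in[k]$, with $\lambda_k=(\sum_{i=1}^k i^{-1/2})^{-1}$, so that $\|\mathbf v^{(k)}\|_2=1$. A standard integral comparison gives $\sum_{i=1}^p i^{-1/2}\asymp p^{1/2}$ with absolute constants, uniformly over $1\le p\le k$; hence $\lambda_k\asymp k^{-1/2}$ and $s_k(p)\asymp (k/p)^{1/2}$ uniformly in $p$. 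Then $p\,s_k^2(p)\asymp p\cdot(k/p)=k$ for every $p$, so $A(s_k)\asymp k$. For $B(s_k)$, observe that $p^2s_k^2(p)\asymp pk$ is increasing in $p$ while $k\,s_k(p)\asymp k^{3/2}p^{-1/2}$ is decreasing; for $p\le k^{1/3}$ the second term is $\gtrsim k^{4/3}$ and for $p\ge k^{1/3}$ the first term is $\gtrsim k^{4/3}$, so $\max\{p^2s_k^2(p),k\,s_k(p)\}\gtrsim k^{4/3}$ for every $p$, while the choice $p=\lceil k^{1/3}\rceil$ makes it $\lesssim k^{4/3}$; thus $B(s_k)\asymp k^{4/3}$. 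Therefore $B(s_k)/A(s_k)\asymp k^{1/3}\to\infty$, which is \eqref{eq:strictly-better}. (Any fixed $\alpha\in(0,1)$ works, with separation rate $k^{2\alpha^2/(1+\alpha)}$ for $\alpha\le\tfrac12$ and $k^{(1-\alpha)/(1+\alpha)}$ for $\alpha\ge\tfrac12$, maximized at $\alpha=\tfrac12$; I would present $\alpha=\tfrac12$ for concreteness.)

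\textbf{Main obstacle.} There is no deep obstacle; the proof is essentially bookkeeping with the monotonicity structure of $s$. The one thing that must be spotted is that $s(p)\le k/p$ comes for free from $p\,s(p)$ being non-decreasing together with $s(k)=1$ — this is exactly what balances the two pieces $q^2s^2(q)$ and $k\,s(q)$ of $B(s)$ in the case split, and without it neither case closes on its own. In part (ii) the only mild care needed is the integer constraint on $p$ in the crossover estimate for $B(s_k)$, which I handle by the two-sided ``every $p$ gives $\gtrsim k^{4/3}$, and $p=\lceil k^{1/3}\rceil$ gives $\lesssim k^{4/3}$'' argument rather than a continuous optimization, plus the uniform-in-$p$ bound $\sum_{i\le p}i^{-1/2}\asymp\sqrt p$, which is routine.
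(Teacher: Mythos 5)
Your proposal is correct and follows essentially the same route as the paper: part (i) rests on the identical two-case monotonicity argument (using that \(s\) is non-increasing, \(p\,s(p)\) is non-decreasing, and \(p\,s(p)\le k\,s(k)=k\)), merely fixing the minimizer of \(B\) rather than the maximizer of \(A\), and part (ii) uses the same power-law example with \(\alpha=\tfrac12\), \(s_k(p)\asymp\sqrt{k/p}\), giving \(A\asymp k\), \(B\asymp k^{4/3}\) at the crossover \(p\asymp k^{1/3}\). No gaps; your explicit justification of \(s(p)\le k/p\) and the integer-\(p\) care in the crossover are fine refinements of what the paper leaves implicit.
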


%!TEX root = 00Main.tex
\section{Proof}\label{sec:proof}
We now present the key propositions that form the backbone of our analysis and lead to the proof of \Cref{thm:main}, \Cref{thm:tpower-T}, and \Cref{thm:superiority}. Each proposition serves a distinct role in establishing the sample complexity and refinement guarantees of SEP and TPower. The proofs of these propositions are deferred to Appendix~\ref{app:proof-props}.
\subsection{Key Propositions}
We begin with an initialization guarantee that provides the base case for the energy lower bound induction.
\begin{proposition}[Initialization] \label{prop:init-gamma}
    Condition on the high-probability event \(\mcE\). For any \(\gamma\in(0,1)\), if \begin{equation*}
        m\ \ge\ C\,\frac{(1+\theta)^2}{\theta^2 (1-\gamma)^2} s^2(1) \log n,
    \end{equation*}
    then it holds that
    \begin{equation}
        \|\bfv_{S^{(1)}}\|_2\ \ge\ \sqrt{\frac{\gamma}{s(1)}}.
    \end{equation}
\end{proposition}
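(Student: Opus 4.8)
The plan is to use the fact that $S^{(1)}$ is a singleton produced by diagonal screening, so the statement collapses to a one-coordinate estimate. Write $S^{(1)} = \{j^\star\}$ with $j^\star \in \argmax_{j}|\hGamma_{jj}|$; then $\|\bfv_{S^{(1)}}\|_2 = |v_{j^\star}|$, and since $s(1) = 1/v_{(1)}^2$ by \Cref{def:structure-function}, the target inequality $\|\bfv_{S^{(1)}}\|_2 \ge \sqrt{\gamma/s(1)}$ is equivalent to $v_{j^\star}^2 \ge \gamma\,v_{(1)}^2$.

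First I would record the pointwise noise control. Using $\hGamma = \theta\,\bfv\bfv^\top + \bfW$ from \eqref{eq:matrix-decomp} gives $\hGamma_{jj} = \theta v_j^2 + W_{jj}$ for every $j$, and invoking the event $\mcE$ in \eqref{eq:uniform-event} with $p=1$, $S=\{j\}$ yields $|W_{jj}| \le \eta$ uniformly in $j$, where $\eta := C_0(1+\theta)\sqrt{(\log n)/m}$ and $C_0$ is the constant appearing in \eqref{eq:uniform-event}.

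The core of the argument is a two-sided comparison at the screening maximizer. Let $j_1$ be an index with $|v_{j_1}| = v_{(1)}$. Optimality of $j^\star$ together with the pointwise bound on the diagonal noise gives
\[
  |\hGamma_{j^\star j^\star}| \;\ge\; |\hGamma_{j_1 j_1}| \;\ge\; \hGamma_{j_1 j_1} \;\ge\; \theta v_{(1)}^2 - \eta ,
\qquad
  |\hGamma_{j^\star j^\star}| \;=\; |\theta v_{j^\star}^2 + W_{j^\star j^\star}| \;\le\; \theta v_{j^\star}^2 + \eta .
\]
Chaining the two lines and dividing by $\theta$ produces $v_{j^\star}^2 \ge v_{(1)}^2 - 2\eta/\theta = 1/s(1) - \tfrac{2C_0(1+\theta)}{\theta}\sqrt{(\log n)/m}$.

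Finally, substituting the sample-size hypothesis closes the argument: choosing the absolute constant $C$ in the statement to be at least $4C_0^2$, the assumption $m \ge C\,\frac{(1+\theta)^2}{\theta^2(1-\gamma)^2}s^2(1)\log n$ is exactly $\tfrac{2C_0(1+\theta)}{\theta}\sqrt{(\log n)/m} \le (1-\gamma)/s(1)$, so the previous display gives $v_{j^\star}^2 \ge 1/s(1) - (1-\gamma)/s(1) = \gamma/s(1)$, which is the claim. I do not expect a genuine obstacle; the only point that needs a word of care is that the chain $|\hGamma_{j^\star j^\star}| \ge |\hGamma_{j_1 j_1}| \ge \hGamma_{j_1 j_1}$ discards an absolute value, so the intermediate lower bound $\theta v_{(1)}^2 - \eta$ must be positive for the comparison to be non-vacuous — but this is automatic from the same sample-size condition, which forces $\eta \le \tfrac12\theta v_{(1)}^2$, hence $\theta v_{(1)}^2 - \eta \ge \tfrac12\theta v_{(1)}^2 > 0$.
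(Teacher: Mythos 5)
Your proposal is correct and follows essentially the same route as the paper: invoke the event \(\mcE\) at \(p=1\) to bound the diagonal noise uniformly, compare \(|\hGamma_{j^\star j^\star}|\) against the diagonal at the index of \(v_{(1)}\) to get \(v_{j^\star}^2 \ge v_{(1)}^2 - 2\eta/\theta\), and then convert the sample-size hypothesis into \(2\eta/\theta \le (1-\gamma)v_{(1)}^2\). Your closing worry about positivity of \(\theta v_{(1)}^2-\eta\) is actually unnecessary, since the chained inequalities hold regardless of sign; otherwise the argument matches the paper's proof step for step.
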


Next, we establish the inductive step, showing that the energy lower bound is preserved as the support set is gradually expanded.
\begin{proposition}[Inductive step: energy lower bound preservation]\label{prop:energy-preservation}
    Condition on the high-probability event \(\mcE\). For \(p\in\{1,\dots,k-1\}\) and \(\gamma\in(0,1)\), if
    \begin{align*}
         & m \                   \ge\ C\frac{ (1+\theta)^2}{\theta^2 \gamma^2(1-\sqrt{\gamma})^2}(p+1) s^2(p+1) \log n, \\
         & \|\bfv_{S^{(p)}}\|_2\ \ge\  \sqrt{\frac{\gamma}{s(p)}},
    \end{align*}
    then the reselected set \(S^{(p+1)}\) (top-\((p{+}1)\) of \(|\hGamma \hat\bfe^{(p)}|\)) satisfies
    \begin{equation}
        \|\bfv_{S^{(p+1)}}\|_2\ge \sqrt{\frac{\gamma}{s(p+1)}}.
    \end{equation}

\end{proposition}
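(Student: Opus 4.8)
The plan is to prove the inductive step via the two-stage mechanism sketched around \eqref{eq:intuition-aligment-bound}: first convert the incoming energy hypothesis $\|\bfv_{S^{(p)}}\|_2\ge\sqrt{\gamma/s(p)}$ into a lower bound on the alignment $|\langle\bfv,\hat\bfe^{(p)}\rangle|$, and then feed that alignment into the reselection step to lower-bound $\|\bfv_{S^{(p+1)}}\|_2$. All estimates are performed on the event $\mcE$ of \Cref{prop:principal-submatrix}, so every spectral fluctuation is deterministically controlled; following \eqref{eq:SEP-response} I will write $\bfu^{(p)}=\hGamma\hat\bfe^{(p)}=c\,\bfv+\bfe$ with $c:=\theta\langle\bfv,\hat\bfe^{(p)}\rangle$ and $\bfe:=\bfW\hat\bfe^{(p)}$, where $\hat\bfe^{(p)}$ is zero-padded to $\R^n$ and has unit norm, supported on $S^{(p)}$. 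Two elementary regularity facts about $s$ are used repeatedly: since $s(p)$ is non-increasing and $p\,s(p)$ is non-decreasing (both noted just after \Cref{def:structure-function}), one has $\tfrac{p}{p+1}\,s(p)\le s(p+1)\le s(p)$, and consequently $p\,s^2(p)$, $(p+1)\,s^2(p+1)$ and $(p+1)\,s(p)\,s(p+1)$ are all comparable up to absolute constants in $[\tfrac12,2]$. This makes the $(p+1)\,s^2(p+1)$ in the hypothesis interchangeable with the other two forms that arise naturally in the proof.

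First I would establish the alignment bound. Since $\hat\bfe^{(p)}$ is the top eigenvector of $\hGamma_{S^{(p)},S^{(p)}}=\theta\,\bfv_{S^{(p)}}\bfv_{S^{(p)}}^{\top}+\bfW_{S^{(p)},S^{(p)}}$, comparing the Rayleigh quotient of this matrix at $\hat\bfe^{(p)}$ with its value at the unit vector $\bfv_{S^{(p)}}/\|\bfv_{S^{(p)}}\|_2$ — the content of \Cref{lem:DK} — gives $\theta\,\langle\bfv,\hat\bfe^{(p)}\rangle^{2}\ \ge\ \theta\,\|\bfv_{S^{(p)}}\|_2^{2}-2\,\|\bfW_{S^{(p)},S^{(p)}}\|_2$. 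On $\mcE$ one has $\|\bfW_{S^{(p)},S^{(p)}}\|_2\le C(1+\theta)\sqrt{p\log n/m}$; substituting the hypothesis $\|\bfv_{S^{(p)}}\|_2^{2}\ge\gamma/s(p)$ and the assumed lower bound on $m$ (together with $p\,s^2(p)\lesssim(p+1)\,s^2(p+1)$) makes $2\|\bfW_{S^{(p)},S^{(p)}}\|_2\le\tfrac34\,\theta\|\bfv_{S^{(p)}}\|_2^{2}$, hence $\theta\langle\bfv,\hat\bfe^{(p)}\rangle^{2}\ge\tfrac14\,\theta\|\bfv_{S^{(p)}}\|_2^{2}$ and therefore $|c|=\theta\,|\langle\bfv,\hat\bfe^{(p)}\rangle|\ \ge\ \tfrac12\,\theta\,\|\bfv_{S^{(p)}}\|_2\ \ge\ \tfrac12\,\theta\sqrt{\gamma/s(p)}$.

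Next I would run the reselection step. Let $T\subset[n]$ index the $p+1$ largest-magnitude entries of $\bfv$, so that $\|\bfv_T\|_2=\sqrt{1/s(p+1)}$ by \Cref{def:structure-function}. Because $S^{(p+1)}$ collects the $p+1$ coordinates with largest $|\bfu^{(p)}|$, we have $\|\bfu^{(p)}_{S^{(p+1)}}\|_2\ge\|\bfu^{(p)}_{T}\|_2$; expanding $\bfu^{(p)}=c\,\bfv+\bfe$ on both sides and applying the triangle inequality (the mechanism of \Cref{lem:reselect}) yields
\[
\|\bfv_{S^{(p+1)}}\|_2\ \ge\ \|\bfv_T\|_2\ -\ \frac{2\,\|\bfe_{S^{(p+1)}\cup T}\|_2}{|c|}.
\]
To bound the numerator, write $A:=S^{(p+1)}\cup T$; since $\bfe=\bfW\hat\bfe^{(p)}$ with $\hat\bfe^{(p)}$ a unit vector supported on $S^{(p)}$, we get $\|\bfe_A\|_2\le\|\bfW_{A,S^{(p)}}\|_2\le\|\bfW_{A\cup S^{(p)},\,A\cup S^{(p)}}\|_2$, and since $|A\cup S^{(p)}|\le 3(p+1)$ the uniform bound \eqref{eq:uniform-event} gives $\|\bfe_{S^{(p+1)}\cup T}\|_2\le C(1+\theta)\sqrt{(p+1)\log n/m}$.

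Finally I would close the induction by combining the two steps: inserting $|c|\ge\tfrac12\theta\sqrt{\gamma/s(p)}$ yields
\[
\|\bfv_{S^{(p+1)}}\|_2\ \ge\ \sqrt{\tfrac{1}{s(p+1)}}\ -\ \frac{C(1+\theta)}{\theta}\sqrt{\frac{s(p)}{\gamma}}\;\sqrt{\frac{(p+1)\log n}{m}},
\]
so it remains to check that the subtracted term is at most $(1-\sqrt\gamma)\sqrt{1/s(p+1)}$. That is exactly a requirement of the form $m\gtrsim\frac{(1+\theta)^2}{\theta^2\gamma(1-\sqrt\gamma)^2}\,(p+1)\,s(p)\,s(p+1)\log n$, which — using $(p+1)\,s(p)\,s(p+1)\asymp(p+1)\,s^2(p+1)$ and $\gamma(1-\sqrt\gamma)^2\ge\gamma^2(1-\sqrt\gamma)^2$ — is implied by the hypothesis of the proposition with a sufficiently large absolute constant $C$. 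One then obtains $\|\bfv_{S^{(p+1)}}\|_2\ge\sqrt\gamma\cdot\sqrt{1/s(p+1)}=\sqrt{\gamma/s(p+1)}$, as claimed. I expect the only genuine difficulty to be this constant-bookkeeping: one must move freely among $p\,s^2(p)$, $(p+1)\,s^2(p+1)$ and $(p+1)\,s(p)\,s(p+1)$ and track how the $\gamma$- and $(1-\sqrt\gamma)$-factors accumulate, so as to confirm that the single hypothesized bound on $m$, with its $\tfrac{1}{\gamma^2(1-\sqrt\gamma)^2}$ prefactor, simultaneously powers the ``noise $\ll$ signal'' reduction in the alignment step and the ``statistical error $\le$ slack'' reduction in the closing step; everything else is a direct invocation of \Cref{lem:DK}, \Cref{lem:reselect}, and the uniform spectral control on $\mcE$.
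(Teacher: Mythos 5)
Your proposal is correct and follows essentially the same two-stage route as the paper's proof: convert the energy hypothesis into an alignment lower bound on \(|\langle\bfv,\hat\bfe^{(p)}\rangle|\) under \(\mcE\), then compare the selected set against the top-\((p{+}1)\) set of \(|\bfv|\) via the triangle inequality (the mechanism of \Cref{lem:reselect}) and absorb both sample-size requirements into the hypothesized bound using the monotonicity of \(s(p)\) and \(p\,s(p)\), exactly as the paper does via \Cref{lem:asymp-equiv}. The only deviation is cosmetic: the inequality you attribute to \Cref{lem:DK} is actually a Rayleigh-quotient comparison, \(\theta\langle\bfv,\hat\bfe^{(p)}\rangle^2\ge\theta\|\bfv_{S^{(p)}}\|_2^2-2\|\bfW_{S^{(p)},S^{(p)}}\|_2\), which is weaker than the Davis--Kahan bound the paper uses but is correct and fully sufficient here, changing only absolute constants.
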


The two propositions above jointly establish the energy lower bound induction described in \Cref{sec:SEP}.
In particular, after \(k\) rounds of support reselection, the final support obeys
\[
    \|\bfv_{S^{(k)}}\|_2 \ge \sqrt{\frac{\gamma}{s(k)}} = \sqrt{\gamma}.
\]
Intuitively, the initialization secures a nontrivial overlap with the true support, and the inductive step guarantees that this overlap cannot deteriorate along the rounds.
This, in turn, underpins the bound on the final direction error in \Cref{thm:main}.

    Next, we present the key propositions used in the proof of \Cref{thm:tpower-T}, which analyzes the TPower refinement following the SEP initialization.
    The argument proceeds in three stages.
    \Cref{prop:alpha0-lb} establishes that the SEP initializer is already well aligned with the true sparse component.
    \Cref{prop:HT-align} then quantifies how a single TPower refinement iteration with hard-thresholding affects this alignment.
    Finally, \Cref{prop:invariant} combines these results to show that the alignment remains bounded away from zero throughout all iterations, ensuring stable refinement.

    We first establish that the SEP initializer achieves a nontrivial correlation with the true sparse component.
    \begin{proposition}[Initializer alignment lower bound]\label{prop:alpha0-lb}
        Condition on the high-probability event \(\mcE\). Let \(\bfw^{(0)}\) be the initializer produced in~\Cref{alg:SEP} and \(S^{(k)}\) be the selected support set. Assume that the energy lower bound
        \(\|\bfv_{S^{(k)}}\|_2 \ge \sqrt{\gamma}\) for some \(S^{(k)}\) of size \(k\), with \(\gamma\in(0,1)\).
        It holds that,
        \[
            \alpha_0:=|\langle \bfw^{(0)},\bfv\rangle|
            \ \ge\
            \sqrt{\gamma}\,\sqrt{\,1-\frac{C_1(1+\theta)^2}{\theta^2\gamma^2}\cdot\frac{k\log n}{m}\,}\, .
        \]
        In particular, if \(m\ge C\frac{(1+\theta)^2}{\theta^2\gamma^2}k\log n\), then \(\alpha_0\ge \ge c_0\gamma \) for some absolute \(c_0\in(0,1/2]\).
    \end{proposition}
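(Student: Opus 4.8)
The plan is to obtain the estimate from a single perturbation argument applied to the terminal block \(\hGamma_{S^{(k)},S^{(k)}}\), using the energy-floor hypothesis \(\|\bfv_{S^{(k)}}\|_2\ge\sqrt\gamma\) together with the uniform spectral control furnished by \(\mcE\). Write \(S:=S^{(k)}\), so \(|S|=k\), and let \(\bfe:=\hat\bfe^{(k)}\in\R^{k}\) be a unit top eigenvector of \(\hGamma_{S,S}\); by construction \(\bfw^{(0)}=\hat\bfv\) is the zero-padding of \(\bfe\) to \(\R^{n}\), again a unit vector supported on \(S\). The first step is a bookkeeping reduction: since \(\hat\bfv\) is supported on \(S\), one has \(\alpha_0=|\langle\hat\bfv,\bfv\rangle|=|\langle\bfe,\bfv_S\rangle|=\|\bfv_S\|_2\,\cos\angle\bigl(\bfe,\bfv_S/\|\bfv_S\|_2\bigr)\), where \(\bfv_S\in\R^{k}\) is the restriction of \(\bfv\) to \(S\). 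This transfers the problem to bounding the angle between \(\bfe\) and the unit vector \(\bfv_S/\|\bfv_S\|_2\) inside \(\R^k\).

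Next I would use the block form of \eqref{eq:matrix-decomp}, namely \(\hGamma_{S,S}=\theta\,\bfv_S\bfv_S^{\top}+\bfW_{S,S}\). The signal block \(\theta\,\bfv_S\bfv_S^{\top}\) is rank one with leading eigenpair \(\bigl(\theta\|\bfv_S\|_2^{2},\,\bfv_S/\|\bfv_S\|_2\bigr)\) and eigengap \(\theta\|\bfv_S\|_2^{2}\ge\theta\gamma>0\) to the rest of its spectrum, while on \(\mcE\) the perturbation obeys \(\|\bfW_{S,S}\|_2\le C(1+\theta)\sqrt{k\log n/m}\) by \Cref{prop:principal-submatrix}. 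I would then invoke the Davis--Kahan \(\sin\theta\) bound (\Cref{lem:DK}) to get \(\sin\angle\bigl(\bfe,\bfv_S/\|\bfv_S\|_2\bigr)\lesssim \|\bfW_{S,S}\|_2/(\theta\|\bfv_S\|_2^{2})\), square it to obtain \(\cos^{2}\angle\ge 1-O\!\bigl(\|\bfW_{S,S}\|_2^{2}/(\theta^{2}\|\bfv_S\|_2^{4})\bigr)\), multiply through by \(\|\bfv_S\|_2^{2}\ge\gamma\), and use \(\|\bfv_S\|_2^{2}\ge\gamma\) once more in the denominator of the residual term, arriving at \(\alpha_0^{2}\ge\gamma-C_1(1+\theta)^{2}k\log n/(\theta^{2}\gamma\,m)\) with \(C_1\) absorbing the Davis--Kahan constant. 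Factoring out \(\gamma\) and taking square roots yields exactly the claimed inequality (understood in the range where the radicand is nonnegative, which is the regime in which Davis--Kahan is not vacuous).

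The ``in particular'' clause is then immediate: choosing the absolute constant in \(m\ge C\,\frac{(1+\theta)^{2}}{\theta^{2}\gamma^{2}}k\log n\) large enough (any \(C\ge\tfrac{4}{3}C_1\)) forces the radicand to be at least \(1/4\), whence \(\alpha_0\ge\tfrac12\sqrt\gamma\ge\tfrac12\gamma\) using \(\gamma\in(0,1)\), so one may take \(c_0=\tfrac12\).

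I do not anticipate a substantive obstacle; the statement is a clean one-shot perturbation bound. The one point that needs attention is using the \(\sin\theta\) form of the perturbation bound rather than the cruder Rayleigh-quotient comparison (evaluating the top eigenvalue of \(\hGamma_{S,S}\) at \(\bfe\) versus at \(\bfv_S/\|\bfv_S\|_2\)), since the latter gives only \(\alpha_0^{2}\gtrsim\gamma-\|\bfW_{S,S}\|_2/\theta\), i.e.\ a \(\sqrt{k\log n/m}\) dependence under the square root rather than the sharper \(k\log n/m\) that the statement asserts (and that is needed for the bound to degrade gracefully at the sample-size threshold used in \Cref{thm:tpower-T}). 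A secondary, purely clerical point is keeping track of which \(\|\bfv_S\|_2\) lower bound is applied in the leading term versus the residual term, and absorbing the Davis--Kahan constant into \(C_1\).
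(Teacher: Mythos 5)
Your proposal is correct and follows essentially the same route as the paper: the paper's proof is exactly an application of \Cref{lem:DK} to \(\hGamma_{S^{(k)},S^{(k)}}=\theta\,\bfv_{S^{(k)}}\bfv_{S^{(k)}}^{\top}+\bfW_{S^{(k)},S^{(k)}}\), followed by the event-\(\mcE\) bound \(\|\bfW_{S^{(k)},S^{(k)}}\|_2\le C(1+\theta)\sqrt{k\log n/m}\) and the substitution \(\|\bfv_{S^{(k)}}\|_2\ge\sqrt{\gamma}\), with the same \(\alpha_0\ge\tfrac12\sqrt{\gamma}\ge\tfrac12\gamma\) conclusion for the ``in particular'' clause. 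Your monotonicity bookkeeping (using \(\|\bfv_{S}\|_2^2\ge\gamma\) in both the leading term and the residual denominator) matches what the paper does implicitly, so there is nothing to add.
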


    Next, we characterize how a single TPower refinement iteration with hard-thresholding affects the alignment; this will serve as the induction step in our analysis.
    \begin{proposition}[Stability and improvement under one hard-thresholding iteration]\label{prop:HT-align}
        Let \(\bfw\in\mathbb{R}^n\) be any \(k'\)-sparse unit vector and set \(\alpha := |\langle \bfw,\bfv\rangle|\).
        Consider
        \[
            \bfy = \hGamma\bfw = \theta\alpha\bfv + \bfxi,
            \quad \text{where } \bfxi := \bfW\bfw.
        \]
        Condition on the high-probability event \(\mcE\), and define
        \[
            b := C(1+\theta)\sqrt{\frac{k'\log n}{m}}
        \]
        such that whenever \(\theta\alpha > 2b\), the normalized hard-thresholded vector satisfies
        \begin{equation}\label{eq:cos-lb}
            \cos\angle\!\left(\frac{\mathcal{H}_{k'}(\bfy)}{\|\mathcal{H}_{k'}(\bfy)\|_2},\,\bfv\right)
            \ \ge\ \frac{\theta\alpha - 2b}{\theta\alpha + b},
        \end{equation}
        and
        \begin{equation}\label{eq:sin-up}
            \sin\angle\!\left(\frac{\mathcal{H}_{k'}(\bfy)}{\|\mathcal{H}_{k'}(\bfy)\|_2},\,\bfv\right)
            \ \le\ \frac{5b}{\theta\alpha - 2b}.
        \end{equation}
        Consequently, when the signal strength \(\theta\alpha\) dominates the noise level \(b\)
        (e.g., \(m\gtrsim\frac{(1+\theta)^2}{\theta^2\gamma^2}k\log n\) makes \(b\) sufficiently small), one hard-thresholding iteration preserves the alignment with \(\bfv\).
    \end{proposition}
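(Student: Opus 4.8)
The plan is to reduce the statement to three deterministic facts that hold on $\mcE$: a uniform bound on the relevant coordinate blocks of the noise $\bfxi=\bfW\bfw$, a short combinatorial bound on how much spike energy is dropped by the truncation, and a routine conversion of these into the angle bounds \eqref{eq:cos-lb}--\eqref{eq:sin-up}. Replacing $\bfv$ by $-\bfv$ if necessary (which changes neither $\alpha$ nor the angle), I may assume $\langle\bfw,\bfv\rangle=\alpha\ge 0$, so $\bfy=\theta\alpha\,\bfv+\bfxi$. Since $\bfw$ is supported on a set $U$ with $|U|=k'$, for any index set $T$ with $|T|\le k'$ we have $\bfxi_T=\bfW_{T,U}\bfw$, hence on $\mcE$
\[
  \|\bfxi_T\|_2\ \le\ \|\bfW_{T,U}\|_2\ \le\ \|\bfW_{T\cup U,\,T\cup U}\|_2\ \le\ C(1+\theta)\sqrt{\frac{2k'\log n}{m}}\ \le\ b,
\]
absorbing the $\sqrt2$ into the constant, exactly as in the remark following \Cref{prop:principal-submatrix}. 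I will apply this with $T$ equal to each of the handful of small index sets appearing below.

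Next I set up the truncation bookkeeping. Let $S^\star:=\supp(\bfv)$ (so $|S^\star|\le k\le k'$), let $\hat S:=\supp(\mcH_{k'}(\bfy))$ and $\bfz:=\mcH_{k'}(\bfy)=\bfy_{\hat S}$, and define the missed support $M:=S^\star\setminus\hat S$ and the spuriously selected set $T:=\hat S\setminus S^\star$. Counting sizes in $\hat S$ and in $S^\star$ gives $|T|-|M|=k'-|S^\star|\ge 0$, so $|M|\le|T|\le k'$. The crux is the estimate $\|\bfv_M\|_2\le 2b/(\theta\alpha)$, i.e.\ truncation loses only an $O(b/(\theta\alpha))$ share of the spike direction. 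To get it, note that every coordinate of $\hat S$ carries a $\bfy$-magnitude at least as large as every coordinate outside $\hat S$, and $y_i=\xi_i$ for $i\in T$ (there $v_i=0$); hence $\max_{j\in M}|y_j|\le\min_{i\in T}|\xi_i|$, and therefore
\[
  \|\bfy_M\|_2^2\ \le\ |M|\min_{i\in T}|\xi_i|^2\ \le\ \frac{|M|}{|T|}\,\|\bfxi_T\|_2^2\ \le\ \|\bfxi_T\|_2^2\ \le\ b^2 .
\]
Combining with $\bfy_M=\theta\alpha\,\bfv_M+\bfxi_M$ and $\|\bfxi_M\|_2\le b$ yields $\theta\alpha\|\bfv_M\|_2\le 2b$. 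I expect this to be the main obstacle: it is the one genuinely combinatorial step, and it is precisely what produces the clean denominator $\theta\alpha-2b$ rather than a $\sqrt{k'}$-inflated one.

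The remainder is routine. Using the support-disjoint decomposition $\bfz-\theta\alpha\,\bfv=\bfxi_{\hat S}-\theta\alpha\,\bfv_M$ gives $\|\bfz-\theta\alpha\,\bfv\|_2^2=\|\bfxi_{\hat S}\|_2^2+\theta^2\alpha^2\|\bfv_M\|_2^2\le b^2+4b^2=5b^2$. For the denominators, write $\langle\bfz,\bfv\rangle=\langle\bfy,\bfv\rangle-\langle\bfy_M,\bfv_M\rangle=\theta\alpha+\langle\bfxi_{S^\star},\bfv_{S^\star}\rangle-\langle\bfy_M,\bfv_M\rangle\ge\theta\alpha-b-\frac{2b^2}{\theta\alpha}\ge\theta\alpha-2b>0$ on the event $\{\theta\alpha>2b\}$, while $\|\bfz\|_2=\|\bfy_{\hat S}\|_2\le\theta\alpha\|\bfv_{\hat S}\|_2+\|\bfxi_{\hat S}\|_2\le\theta\alpha+b$. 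Hence $\cos\angle(\bfz,\bfv)=\langle\bfz,\bfv\rangle/\|\bfz\|_2\ge(\theta\alpha-2b)/(\theta\alpha+b)$, which is \eqref{eq:cos-lb}, and $\sin\angle(\bfz,\bfv)\le\|\bfz-\theta\alpha\,\bfv\|_2/|\langle\bfz,\bfv\rangle|\le\sqrt5\,b/(\theta\alpha-2b)\le 5b/(\theta\alpha-2b)$, which is \eqref{eq:sin-up}. Finally, the closing claim that a large enough $m$ makes the iteration alignment-preserving follows by substituting $\alpha\ge c_0\gamma$ (the regime in which the proposition is invoked, cf.\ \Cref{prop:alpha0-lb}) into the stated sample-size condition, which forces $b<\theta\alpha/2$ and keeps the lower bound in \eqref{eq:cos-lb} above a positive absolute constant.
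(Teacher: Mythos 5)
Your proposal is correct and follows essentially the same route as the paper's proof: both lower-bound the numerator \(\langle\mcH_{k'}(\bfy),\bfv\rangle\ge\theta\alpha-2b\) and upper-bound \(\|\mcH_{k'}(\bfy)\|_2\le\theta\alpha+b\) using the event \(\mcE\) on unions of supports of size at most \(2k'\), and both control the missed coordinates \(S^\star\setminus\hat S\) by comparing them against the spuriously selected ones, then convert the residual into the sine bound. Your only deviations are cosmetic: a min-versus-average argument in place of the paper's one-to-one pairing, and a support-disjoint Pythagorean bound \(\sqrt{5}\,b\) (via the intermediate estimate \(\theta\alpha\|\bfv_{S^\star\setminus\hat S}\|_2\le 2b\)) where the paper uses a triangle-inequality bound of \(5b\); both land on the stated inequalities.
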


    Combining the initialization guarantee and the one-iteration refinement bound, we obtain an invariant that ensures the alignment remains bounded across all iterations.
    \begin{proposition}[Alignment invariant across iterations]\label{prop:invariant}
        Let \(\bfw^{(t)}\) be the \(t\)-th iterate of~\Cref{alg:tpower-refine} and \(\alpha_t:=|\langle \bfw^{(t)},\bfv\rangle|\). When \(m\ge C\frac{(1+\theta)^2}{\theta^2\gamma^2}k\log n\) with \(C\) sufficiently large, there exists an absolute \(c_\ast\in(0,1/2]\) such that for all \(t\ge 0\),
        \[
            \alpha_t \ \ge\ c_\ast \gamma.
        \]
    \end{proposition}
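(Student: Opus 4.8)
The plan is to prove \Cref{prop:invariant} by induction on $t$, with \Cref{prop:alpha0-lb} supplying the base case $t=0$ and \Cref{prop:HT-align} the inductive step; all of these are conditioned on $\mcE$, so they are directly available. The only genuine work is fixing the absolute constants in the correct order so that no circular dependence appears.

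First I would pin down the constants. Let $c_0\in(0,1/2]$ be the absolute constant from \Cref{prop:alpha0-lb}, set $c_\ast:=c_0$ and $\delta:=c_\ast/6$, and let $b:=C(1+\theta)\sqrt{k'\log n/m}$ be the noise level appearing in \Cref{prop:HT-align}. Under the standing choice $k'\asymp k$, the hypothesis $m\ge C_1\frac{(1+\theta)^2}{\theta^2\gamma^2}k\log n$ is, up to constants, $m\ge C_1\frac{(1+\theta)^2}{\theta^2\gamma^2}k'\log n$, which forces $b\le \frac{C}{\sqrt{C_1}}\,\theta\gamma$; enlarging $C_1$ (as a function only of $C$ and $c_\ast$, hence absolute) makes $b\le\delta\,\theta\gamma$. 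The base case is then immediate: \Cref{prop:alpha0-lb} gives $\alpha_0\ge c_0\gamma=c_\ast\gamma$, and $\bfw^{(0)}=\hat\bfv$ is $k$-sparse (hence $k'$-sparse) and of unit norm.

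For the inductive step, assume $\alpha_t\ge c_\ast\gamma$ with $\bfw^{(t)}$ a $k'$-sparse unit vector. Then $\bfw^{(t+1)}$, being the normalized $\mcH_{k'}$-truncation of $\hGamma\bfw^{(t)}$, is again a $k'$-sparse unit vector, so it remains to lower-bound $\alpha_{t+1}=\cos\angle(\bfw^{(t+1)},\bfv)$. Since $b\le\delta\theta\gamma$ and $\delta<c_\ast/2$, the induction hypothesis gives $\theta\alpha_t\ge\theta c_\ast\gamma>2\delta\theta\gamma\ge 2b$, so the hypothesis of \Cref{prop:HT-align} holds and \eqref{eq:cos-lb} yields $\alpha_{t+1}\ge\frac{\theta\alpha_t-2b}{\theta\alpha_t+b}$. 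This lower bound is increasing in $\alpha_t$, so substituting the worst case $\alpha_t=c_\ast\gamma$ together with $b\le(c_\ast/6)\theta\gamma$ gives
\[
  \alpha_{t+1}\ \ge\ \frac{c_\ast\gamma-2(c_\ast/6)\gamma}{c_\ast\gamma+(c_\ast/6)\gamma}\ =\ \frac{4}{7}\ \ge\ \frac12\ \ge\ c_\ast\gamma ,
\]
using $c_\ast\le\tfrac12$ and $\gamma<1$. This closes the induction; as a byproduct one in fact gets $\alpha_t\ge 4/7$ for every $t\ge1$, consistent with the ``one iteration already suffices'' picture of \Cref{thm:tpower-T}.

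The substantive content — the effect of a single thresholded power step — is already isolated in \Cref{prop:HT-align}, so I expect the main (and only minor) obstacle to be organizational: ensuring $c_\ast$ and $\delta$ are fixed from \Cref{prop:alpha0-lb} and \Cref{prop:HT-align} \emph{before} $C_1$ is enlarged to make $b\le\delta\theta\gamma$, and checking that the hypothesis $\theta\alpha_t>2b$ of \Cref{prop:HT-align} is regenerated by the induction hypothesis at each step (it is, precisely because $b$ was made small relative to $\theta c_\ast\gamma$). If one does not take $k'\asymp k$, the sample-complexity hypothesis should carry $k'$ in place of $k$, since $k'$ enters the argument only through $b$.
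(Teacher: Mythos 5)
Your proposal is correct and follows essentially the same route as the paper: induction with \Cref{prop:alpha0-lb} as the base case and \Cref{prop:HT-align} as the step, choosing the sample size so that \(b\le\tfrac{1}{6}\theta c_\ast\gamma\), which yields \(\alpha_{t+1}\ge 4/7\ge 1/2\ge c_\ast\gamma\). Your extra care about fixing \(c_\ast\) before enlarging the constant, verifying the entrance condition \(\theta\alpha_t>2b\), and tracking \(k'\) versus \(k\) only makes explicit what the paper's proof leaves implicit.
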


Now we are ready to prove our main theorems.

\subsection{Proof of~\Cref{thm:main}}
\begin{proof}[Proof]
    Condition on the high-probability event of~\Cref{prop:principal-submatrix}. By~\Cref{prop:init-gamma}, the initialization selects an index \(S^{(1)}\) such that \(\|\bfv_{S^{(1)}}\|_2\ge \sqrt{\gamma/s(1)}\), provided
    \begin{equation}\label{eq:init-bound}
        m\ \ge\ C\,\frac{(1+\theta)^2}{\theta^2 (1-\gamma)^2} s^2(1) \log n.
    \end{equation}
    Now assume for some \(p\in\{1,\dots,k-1\}\) that \(\|\bfv_{S^{(p)}}\|_2\ge \sqrt{\gamma/s(p)}\). Applying \Cref{prop:energy-preservation}, we see that the reselection preserves the energy lower bound at level \(p{+}1\) whenever
    \begin{equation}\label{eq:induction-bound}
        m\ \ge \ C \frac{ (1+\theta)^2}{\theta^2 \gamma^2(1-\sqrt{\gamma})^2}(p+1) s^2(p+1)\log n.
    \end{equation}
    Imposing the uniform bound \eqref{eq:induction-bound} ensures that this condition holds for every \(p\le k{-}1\), hence by induction we obtain \(\|\bfv_{S^{(k)}}\|_2\ge \sqrt{\gamma/s(k)}=\sqrt{\gamma}\). Since \(\gamma\in(0,1)\), combining \eqref{eq:init-bound} and \eqref{eq:induction-bound} yields the uniform sample size requirement \eqref{eq:m-bound-gamma}.

    For the final direction error, by the triangle inequality for principal angles,
    \[
        \sin\angle(\hat \bfv,\bfv)\ \le\ \sin\angle(\hat \bfv,\bfu_S)\ +\ \sin\angle(\bfu_S,\bfv)
    \]
    where \(\bfu_S:=\bfv_{S^{(k)}}/\|\bfv_{S^{(k)}}\|_2\). The second term equals \(\|\bfv_{S^{(k)}{}^c}\|_2\le \sqrt{1-\gamma}\), and applying~\Cref{lem:DK-sinTheta} on the first term gives \(\sin\angle(\hat \bfv,\bfu_S)\le \|\bfW_{S^{(k)},S^{(k)}}\|_2/(\theta\,\|\bfv_{S^{(k)}}\|_2^2)\). Hence
    \begin{equation}
        \sin\angle(\hat \bfv,\bfv)\ \le\ \sqrt{1-\gamma}\ +\ \frac{\|\bfW_{S^{(k)},S^{(k)}}\|_{2}}{\theta\,\|\bfv_{S^{(k)}}\|_2^2}
        \ \le\ \sqrt{1-\gamma}\ +\ \frac{C(1+\theta)}{\theta\,\gamma}\,\sqrt{\frac{k\,\log n}{m}}.
    \end{equation}
    This matches the stated bound and in particular vanishes as \(m\to\infty\) under \eqref{eq:m-bound-gamma}.
\end{proof}

    \subsection{Proof of~\Cref{thm:tpower-T}}

    \begin{proof}
        We prove by induction on \(t\) that
        \[
            \anglesin \big(\bfw^{(t+1)},\bfv\big)\ \le\ C\,\frac{1+\theta}{\theta\gamma}\,\sqrt{\frac{k'\log n}{m}}
            \qquad\text{for all } t\ge 0.
        \]
        Fix \(t\). Apply~\Cref{prop:HT-align} to \(\bfw^{(t)}\):
        \[
            \anglesin\left(\frac{\mcH_{k'}(\bfy)}{\|\mcH_{k'}(\bfy)\|_2},\,\bfv\right)
            \ \le\ \frac{5b}{\theta\alpha-2b}.
        \]
        By~\Cref{prop:invariant}, \(\alpha\ge c_\ast\gamma\) for all \(t\). Hence, when \(m\ge C\frac{(1+\theta)^2}{\theta^2\gamma^2}k\log n\) with \(C\) sufficiently large, we have
        \[
            \anglesin\big(\bfw^{(t+1)},\bfv\big)
            \ \le\ C\,\frac{1+\theta}{\theta \gamma}\sqrt{\frac{k'\log n}{m}}.
        \]
        Since the bound is independent of \(t\), it holds in particular at \(t=T-1\), which yields the theorem.
    \end{proof}

\subsection{Proof of~\Cref{thm:superiority}}
\begin{proof}
    First, we show the uniform dominance \eqref{eq:betterbound}.
    Assume \(q^\star \in\{1,2,\dots,k\}\) such that \(A(s)\) is maximized at \(p=q^\star\), denoted as \(A(s) = q^\star s^2(q^\star)\). For any \(p\in\{1,2,\dots,k\}\), we first show that
    \begin{equation}\label{eq:qpbound}
        q^\star s^2(q^\star)\ \le\ \max \{p^2s^2(p), ks(p)\}.
    \end{equation}
    \begin{enumerate}
        \item Consider the case where \(q^\star \le p\). Since \(ps(p)\) is non-decreasing in \(p\), we have
              \[
                  q^\star s^2(q^\star) = (q^\star)^2 s^2(q^\star) / q^\star  \le p^2 s^2(p) / q^\star \le p^2 s^2(p).
              \]
        \item Consider the case where \(q^\star > p\). Since \(s(p)\) is non-increasing in \(p\) and \(q^\star s(q^\star) \leq k\), we have
              \[
                  q^\star s^2(q^\star) = q^\star s(q^\star) \cdot s(q^\star) \le k s(p).
              \]
    \end{enumerate}
    Combining the two cases above, we obtain \eqref{eq:qpbound}. Taking minimum over \(p\in\{1,2,\dots,k\}\) on the right-hand side of \eqref{eq:qpbound} yields
    \[
        A(s) = q^\star s^2(q^\star) \le \min_{1\le p \le k} \max\{p^2s^2(p), ks(p)\} = B(s),
    \]
    which proves \eqref{eq:betterbound}.

    Next, we show the strict separation \eqref{eq:strictly-better}. We construct a sequence of power-law decaying signals. Let
    \[
        v_{(j)}^2 = \left(\sum_{j=1}^k j^{-1/2}\right)^{-1}j^{-1/2}, \quad j=1,2,\dots,k
    \]
    be the non-zero entries of \(\bfv^{(k)}\). From the proof of \Cref{prop:power-law-interp} (see \eqref{eq:power-law-sp}), the structure function \(s_k(p)\) of \(\bfv^{(k)}\) satisfies
    \[
        s_k(p) \asymp \sqrt{\frac{k}{p}}.
    \]

    Now we bound \(A(s_k)\) and \(B(s_k)\). For \(A(s_k)\), we have
    \[
        A(s_k) = \max_{1\le p \le k} p s_k^2(p) \asymp k.
    \]
    For \(B(s_k)\), we have
    \begin{align*}
        B(s_k) & = \min_{1\le p \le k} \max\{p^2 s_k^2(p), ks_k(p)\}     \\
               & \asymp  \min_{1\le p \le k} \max\{pk, k^{3/2}p^{-1/2}\}
    \end{align*}
    The minimum is attained at \(p \asymp k^{1/3}\), which gives \(B(s_k) \asymp k^{4/3}\).
    Therefore, we have constructed a signal such that \(A(s_k) \asymp k\) and \(B(s_k) \asymp k^{4/3}\), which completes the proof.
\end{proof}

% !TEX root = 00Main.tex

\section{Discussion}\label{sec:discussion}

\subsection{Why SEP is better: selection rule and \(\ell_2\) perturbation control}\label{sec:dis-view-comparision}
In this section, we explain why SEP achieves the lower order \(\max_{p\le k} p\,s^2(p) \log n\) while diagonal screening typically yields \(k^2\log n\). Two components jointly determine the rate.

\begin{enumerate}[itemsep=2pt, topsep=2pt, label=(\roman*)]
    \item \emph{Selection rule.} SEP selects the support using the full response
          \[
              \hGamma \hat \bfe^{(p)} \;=\; \theta\,\langle \bfv,\hat \bfe^{(p)}\rangle\, \bfv \;+\; \bfW \hat \bfe^{(p)},
          \]
          which aggregates information across coordinates and is naturally compared in the sense of \(\ell_2\) norm.
          Diagonal screening scores coordinates by the diagonals \(\{\hGamma_{jj}\}\) and thus enforces per-coordinate separation.

    \item \emph{\(\ell_2\) perturbation control.} We establish a uniform operator-norm bound on all principal blocks of \(\bfW\) (see \Cref{prop:principal-submatrix}):
          \[
              \|\bfW_{S,S}\|_2 \;\lesssim\; \sqrt{|S|\log n/m}\qquad \forall S\subset[n].
          \]
          Due to the natural relationship between the \(\ell_2\) norm and operator norm, this uniform bound yields a clean \(\ell_2\) bound of the noise term \(\bfW \hat \bfe^{(p)}\), thus allowing us to analyze the energy on the reselected support and the pathwise progress of SEP across rounds.
          By contrast, if one instead employs a entrywise control on the noise term \(\bfW \hat \bfe^{(p)}\), we need to establish a uniform \(\ell_\infty\) bound over all data-driven selections and all \(p\)-sparse supports so that the coordinate-wise margin can hold simultaneously for \(p\) coordinates at each round and across rounds (as required under data dependence, see \Cref{sec:data-dependence}). This uniformity typically inflates the requirement by about a factor \(p\) (up to logarithmic terms), effectively turning \(ps^2(p)\) into \(p^2 s^2(p)\).

          For diagonal screening, one needs to compare each strong coordinate to the per-entry noise scale \(\sqrt{\log n/m}\). Given only the total energy \(\sum_{j\le p} v_{(j)}^2\asymp 1/s(p)\), the most favorable allocation assumption across the top \(p\) gives \(v_{(p)}^2 \asymp 1/(p\,s(p))\), which leads to
          \[
              m \;\gtrsim\; p^2 s^2(p)\,\log n.
          \]
          Interestingly, an energy based variant that tracks the diagonal sum \(\sum_{j\in S}\hGamma_{jj}\) can avoid the assumption where \(v_{(p)}^2 \asymp 1/(ps(p))\). For any fixed subset \(S\) of size \(p\), concentration of sums yields fluctuations of order \(\sqrt{p\log n/m}\), which would suggest a \(ps^2(p)\) scaling for a fixed \(S\). Yet the data driven choice is the maximizer over \(\binom{n}{p}\) subsets; a uniform bound incurs an additional term \(\log\binom{n}{p}\asymp p\log(n/p)\) and inflates the deviation by an additional \(\sqrt{p}\) (equivalently, multiplies the required \(m\) by \(p\)). Thus the overall rate remains \(p^2 s^2(p)\log n\) up to constants.

          Finally, in practice one often sets \(p=k\) for stable estimation, which gives the classical \(k^2 s^2(k)\log n=k^2\log n\) rate.
\end{enumerate}

In conclusion, the improvement of SEP comes from the combination of a response-based selection rule and an energy-based analysis that controls \(\bfW\) in operator norm.

\subsection{On the role of TPower refinement and the choice of operator}\label{sec:dis-tpower}
A key consequence of our analysis is that the final statistical error after TPower refinement is independent of the number of refinement iterations \(T\); see \Cref{thm:tpower-T}. In fact, a single iteration already reaches the statistical upper bound. This one-iteration phenomenon relies on two ingredients we already established in the main proof: (i) a sharp bound in~\Cref{prop:HT-align} that controls the numerator/denominator after the spectral update and keep-\(k'\) hard-thresholding that preserves a constant alignment; and (ii) using the centered operator \(\hGamma=\hSigma-\bfI=\theta\, \bfv\bfv^\top+\bfW\), so that the spectral update contains no carry-over term:
\[
    \bfy\ = \ \hGamma\bfw\ =\underbrace{\theta\,\langle \bfw,\bfv\rangle\,\bfv}_{\text{signal aligned with } \bfv}+\ \underbrace{\bfW\bfw}_{\text{noise}}.
\]
Given the initializer has constant alignment \(\alpha_0\gtrsim\sqrt{\gamma}\) (\Cref{prop:alpha0-lb}), the signal component already points exactly along \(\bfv\) with strength \(\theta\alpha_0\), and hard-thresholding (\Cref{prop:HT-align}) attenuates the noise on the selected \(k'\) coordinates down to the order of \((1+\theta)\sqrt{k'\log n/m}\). This yields immediately the clean statistical error in \Cref{thm:tpower-T}, hence \(T\) disappears from the bound.

By contrast, if either (i) one uses coarser, black-box contraction analyses (e.g., as in~\cite{Yuan2013TPower, Cai25PeakPCA}), or (ii) one refines with the raw covariance \(\hSigma\) instead of \(\hGamma\), the spectral update contains an additional carry-over term since there is an identity component \(\bfI\) in \(\hSigma\):
\[
    \bfy'\ =\ \hSigma \bfw\ =\ \underbrace{\bfw}_{\text{carry-over}}+\ \underbrace{\theta\, \langle \bfw,\bfv\rangle \bfv}_{\text{signal}}\ +\ \text{noise}.
\]
This carry-over persists after thresholding and contaminates both numerator and denominator in the alignment ratio, producing a genuine per-iteration residual that must be iteratively damped. At the level of orders, this leads to a recursion of the familiar form
\[
    \sin\angle(\bfw^{(t+1)},\bfv)\ \le\ \underbrace{\rho \sin\angle(\bfw^{(t)},\bfv)}_{\text{optimization error}}\ +\ \underbrace{C \frac{1+\theta}{\theta \gamma}\sqrt{\frac{k'\log n}{m}}}_{\text{statistical error}},
    \qquad \rho<1,
\]
so an explicit optimization term remains until \(t\) is large. This is the standard behavior in the existing literature on iterative SPCA methods. Our analysis shows that this can be avoided by using the centered operator \(\hGamma\) and a careful, iteration-invariant perturbation analysis.

\subsection{Data dependence}\label{sec:data-dependence}
Every round of \Cref{alg:SEP} is data dependent: the support \(S^{(p)}\) at round \(p\) is selected from the intermediate response \(\bfu^{(p-1)}=\hGamma \hat \bfe^{(p-1)}\), which itself is computed from the same sample. As already visible in \eqref{eq:peakiness-proxy} for the single-peak heuristic, even the choice of \(j_{\max}\) is a function of the data. Consequently, one cannot treat the iterates as independent of the sample when taking expectations or applying tail bounds directly.

Our analysis handles this dependence uniformly along the entire path. Specifically, for the response term in \eqref{eq:SEP-response} we establish a single high-probability event \(\mcE\) under which the operator norm of every relevant principal noise block is controlled:
\[
    \|W_{S,S}\|_2 \;\lesssim\; \sqrt{|S|\log n/m} \qquad \text{simultaneously for all } S\subset[n],
\]
(see \Cref{prop:principal-submatrix}). Hence \(\mcE\) holds for all rounds \(p\le k\) and all data-driven supports \(S^{(p)}\). This uniform spectral control allows us to reason about the captured energy on the selected block without conditioning on the data.

Alternative techniques, such as leave-one-out arguments~\cite{Ma2020LOO} or perturbative decouplings~\cite{arcones1993decoupling}, could also address the data dependence, but they are considerably more involved. Our approach remains concise and directly tied to the selection mechanism.

    Finally, we remark that it is sometimes convenient to proceed as if a small number of iterates were independent of the data, especially when only a constant number of data dependent choices are made. This can be repaired by a fixed number of sample splits. Specifically, using a fresh block whenever a data dependent choice occurs can address the data dependence issue and preserve the asymptotic order (see, e.g.,~\cite{DiCiccio2020DataSplit}). In our setting, however, the selection is repeated over \(k\) rounds; an analogous repair would require \(k\) disjoint splits, increasing the sample complexity by a factor of \(k\).

\subsection{Limitations and open directions.}
We scope this work to the single-spike model and develop SEP with structure-adaptive guarantees (\Cref{thm:main}). Several questions remain open.
\begin{enumerate}[itemsep=2pt, topsep=2pt, label=(\roman*)]
    \item \textbf{Information-theoretic lower bounds.} While we establish a better polynomial-time upper bounds tied to the energy profile via \(s(p)\), matching information-theoretic lower bounds under general profiles remain open.
    \item \textbf{Statistical-computational tradeoffs beyond flat spikes.} Although the classical gap is well understood in the flat regime, a systematic characterization under non-flat profiles remains to be developed. This includes whether stronger concentration collapses, narrows, or reshapes the gap, and whether new barriers arise.
    \item \textbf{Complexity-theoretic routes.}
          Two complementary avenues are promising. First, one can extend existing planted-clique reductions~\cite{Berthet2013Complexity,Wang2016SPCACompStatTradeoff}, which underlie the flat-case gap, to models that encode signal structure (e.g., nonuniform or weighted supports), thereby obtaining hardness for SPCA under non-flat structures. Second, one can directly study weighted planted-clique (or planted-subgraph) models whose weights reflect the energy profile. Hardness or tractability results there would transfer to structure-adaptive SPCA. Conversely, progress on structure-adaptive SPCA (e.g., sharp statistical limits and algorithms matched to energy profiles) may inform the design and analysis of weighted planted problems, suggesting a two-way connection between structure-aware estimation and planted-subgraph complexity. Beyond the single-spike setting, extending the analysis to general background covariance matrices and to multi-spike subspaces is a natural direction. Our discussion of the centered operator for TPower suggests that parts of the analysis may carry over under appropriate spectral corrections, while a complete treatment is left for future work. Finally, we expect the present analysis to shed light on other structure-adaptive estimation problems with similar problem formulations, such as sparse phase retrieval~\cite{Jagatap2019Sample,Cai2022Provable,Liu2021Towards} and sparse canonical correlation analysis~\cite{Hardoon2011SCCA}.
\end{enumerate}

% !TEX root = 00Main.tex
\section{Simulations}\label{sec:simulations}
\subsection{Setup}
We evaluate SEP against two strong baselines under the standard single-spike model.
We consider three signal profiles on the true support \(S^\star\):
\begin{itemize}
    \item {\bf Flat}: \(v_j=k^{-1/2}\) on \(S^\star\);
    \item {\bf Power-law decaying}: \(v_j \propto j^{-1/2}+0.1\) on \(S^\star\), then normalized;
    \item {\bf Exponential decaying}: \(v_j \propto e^{-j}+0.1\) on \(S^\star\), then normalized;
\end{itemize}
where indices \(j\) are ordered by magnitude within \(S^\star\) and the offset 0.1 avoids vanishingly small entries. \Cref{fig:s_profiles} compares \(s(p)\) for the three profiles with the sparsity \(k=40\).

\begin{figure}[t]
    \centering
    \includegraphics[width=0.7\linewidth]{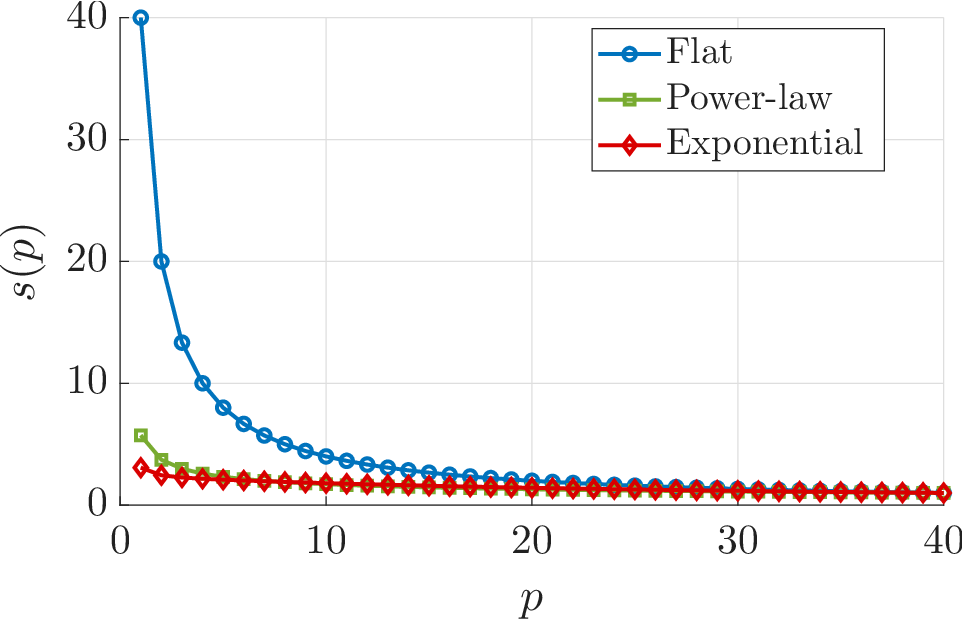}
    \caption{\(s(p)\) for \(k=40\) under the three profiles. The power-law and exponential curves start lower at small \(p\), reflecting stronger early concentration.}
    \label{fig:s_profiles}
\end{figure}

We vary \(m\) from \(100\) to \(1000\) with step \(50\) and \((n,k,\theta)=(1000,40,3)\). The TPower refinement (\Cref{alg:tpower-refine}) uses \(10\) iterations and all three algorithms employ the centered covariance matrix \(\hGamma\). We choose DT and single-peak-based algorithms as competitive baselines. Two metrics are used: direction error \(\sin\angle(\hat\bfv,\bfv)\) and support recall \(|S\cap S^\star|/k\). For each \(m\) and signal profile, we repeat the experiment for \(1000\) trials to average out randomness.

\subsection{Performance across profiles}

% ----- row of three: direction error -----
\begin{figure*}[t]
    \centering
    \subfloat[Flat signals]{\includegraphics[width=0.3\linewidth]{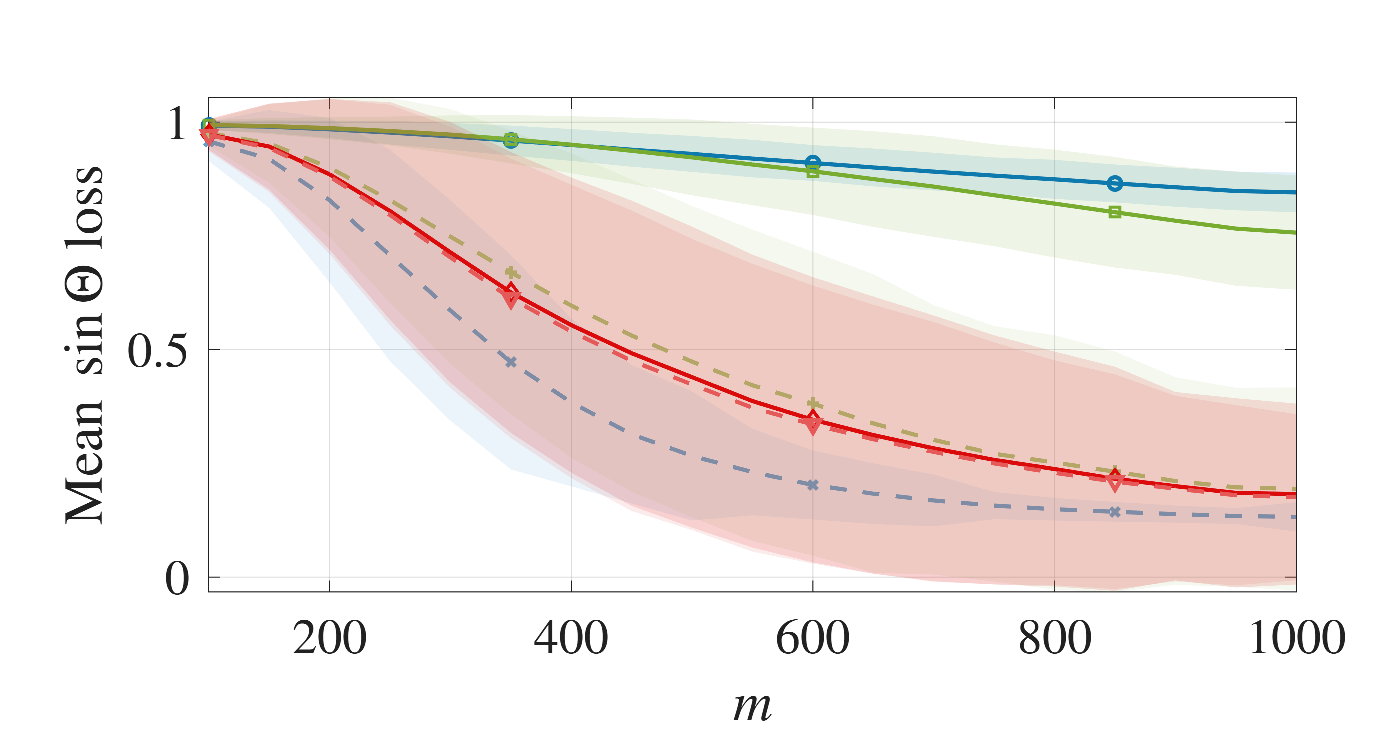}\label{fig:vary_m_err_flat}}
    \hfill
    \subfloat[Power-law decaying signals]{\includegraphics[width=0.3\linewidth]{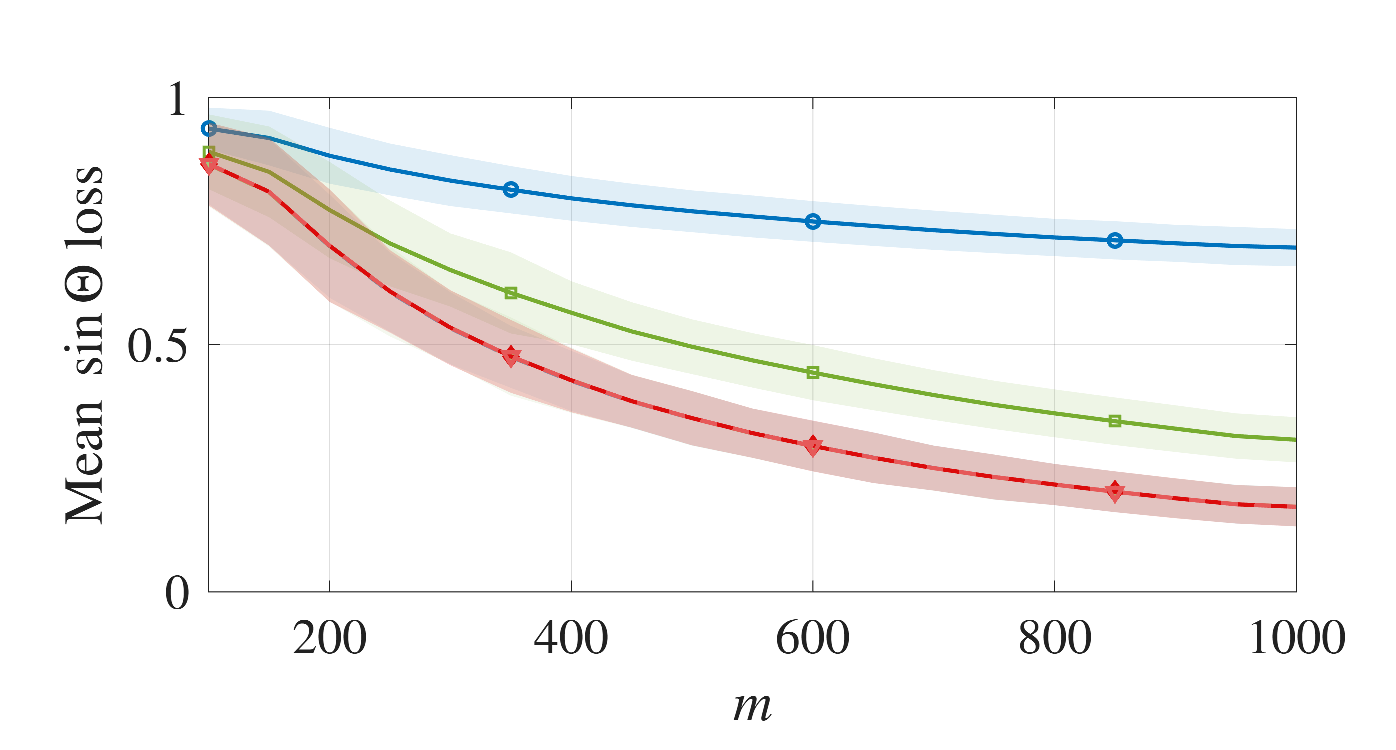}\label{fig:vary_m_err_jhalf}}
    \hfill
    \subfloat[Exponential decaying signals]{\includegraphics[width=0.3\linewidth]{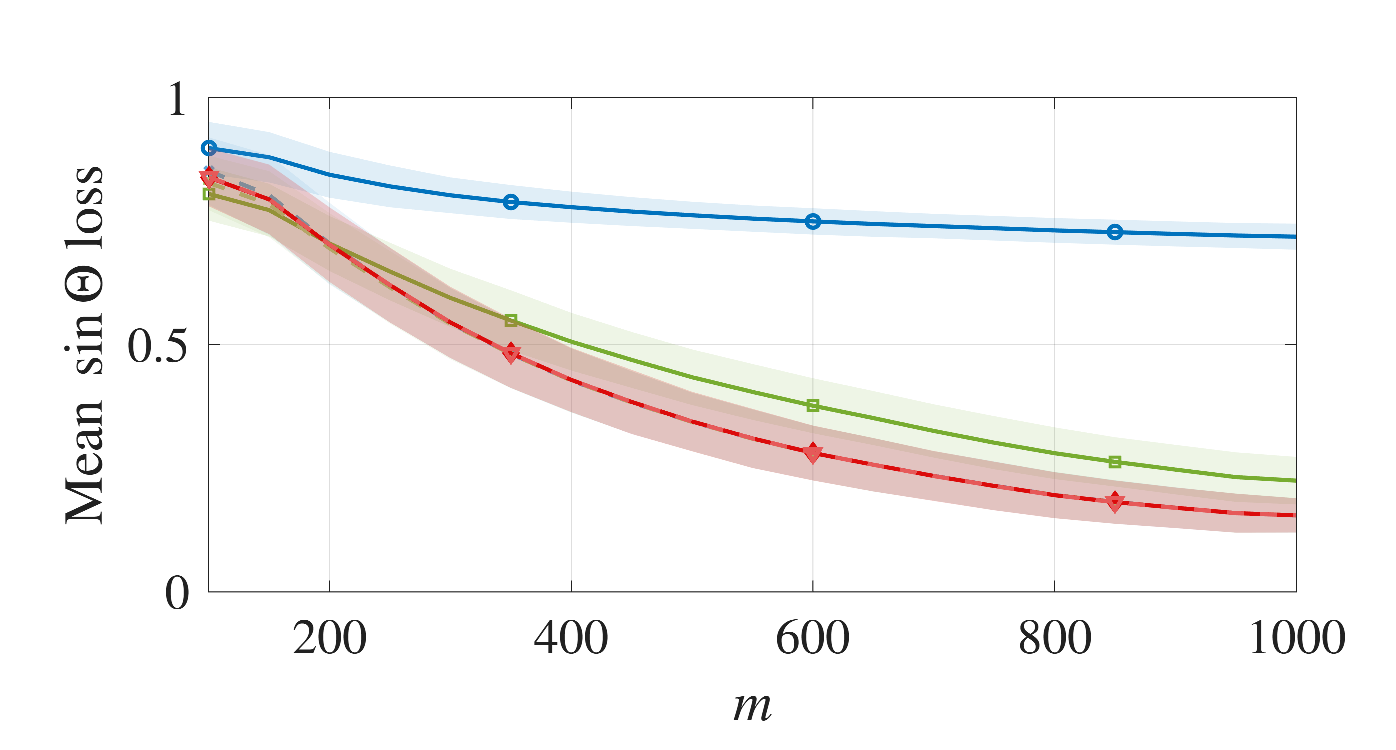}\label{fig:vary_m_err_exp}}
    \hfill
    \raisebox{7mm}{\includegraphics[width=0.08\linewidth]{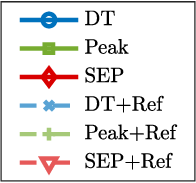}}
    \caption{Direction error vs \(m\) across three profiles (curves: trial mean; shaded bands: \(\pm1\) standard deviation over \(1000\) trials).}
    \label{fig:vary_m_err_row}
\end{figure*}

% ----- row of three: support recall -----
\begin{figure*}[t]
    \centering
    \subfloat[Flat signals]{\includegraphics[width=0.3\linewidth]{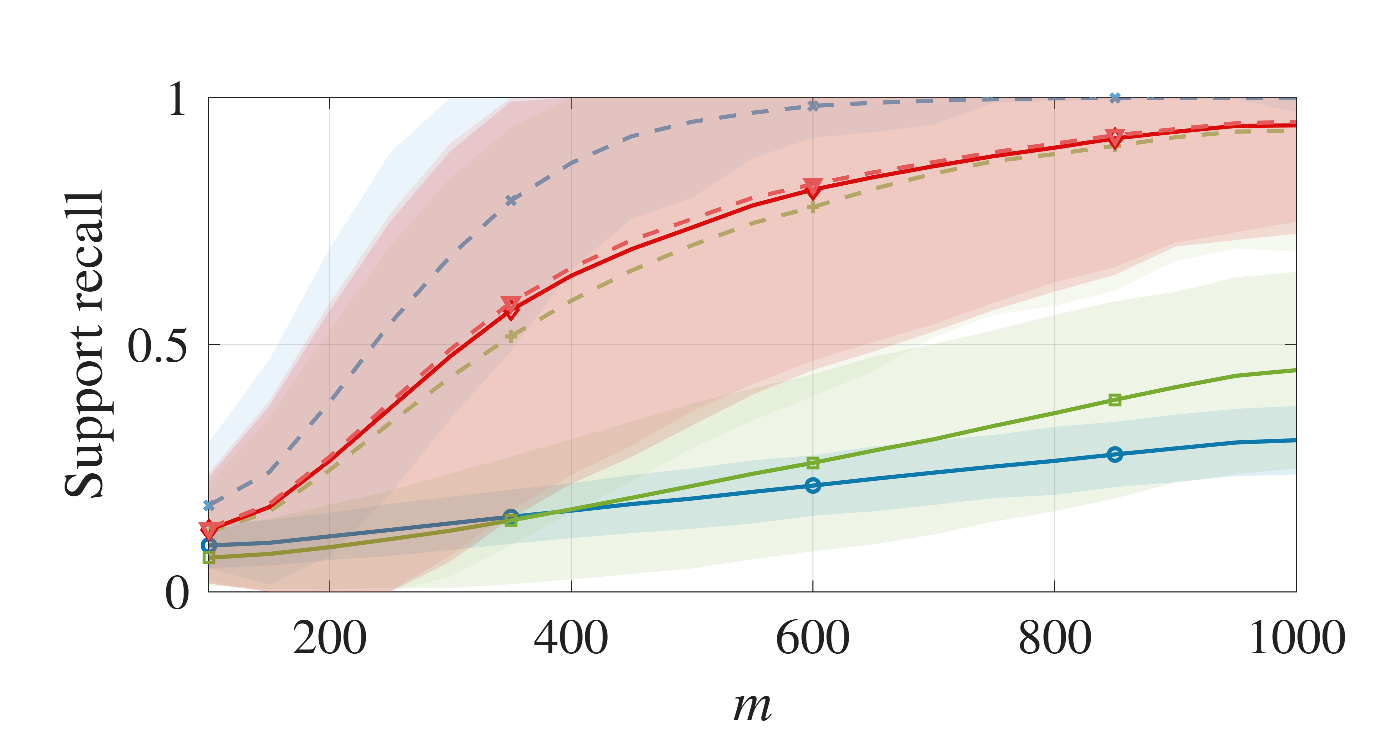}\label{fig:vary_m_hit_flat}}
    \hfill
    \subfloat[Power-law decaying signals]{\includegraphics[width=0.3\linewidth]{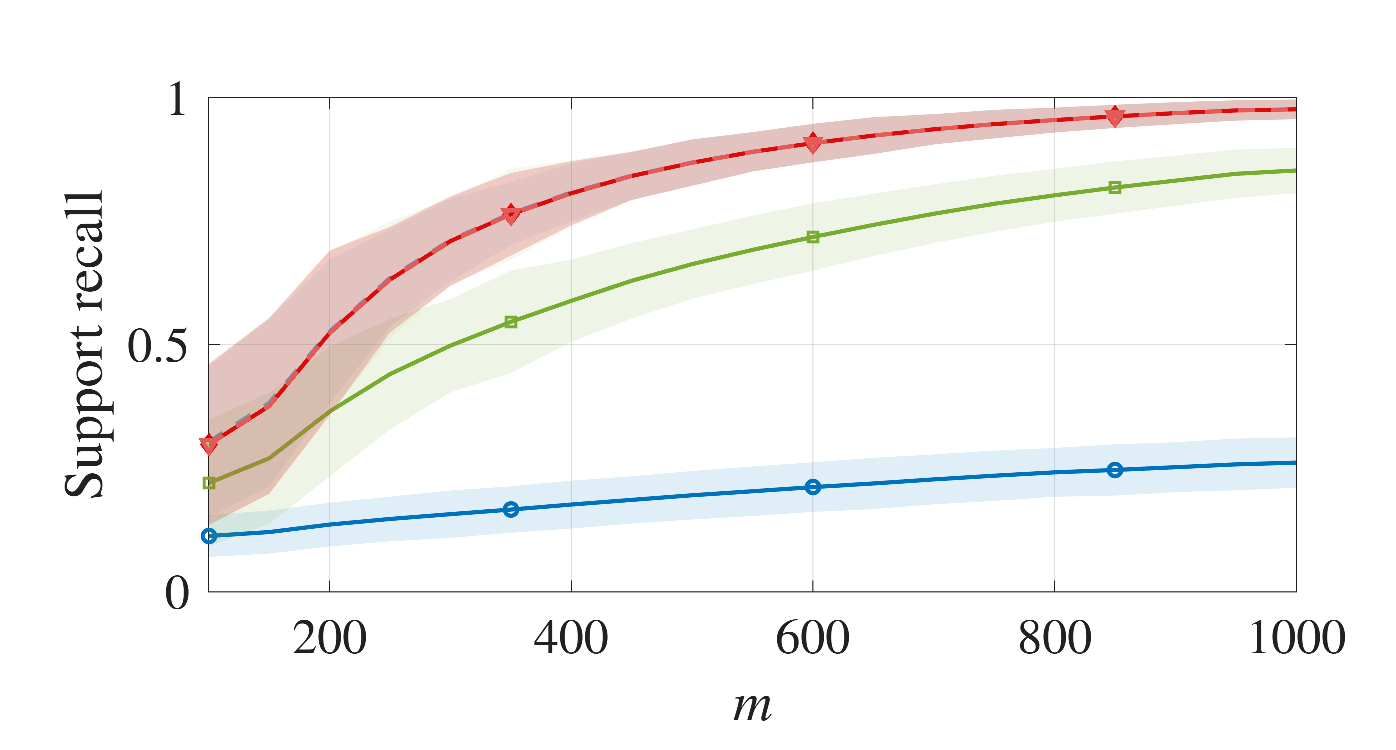}\label{fig:vary_m_hit_jhalf}}
    \hfill
    \subfloat[Exponential decaying signals]{\includegraphics[width=0.3\linewidth]{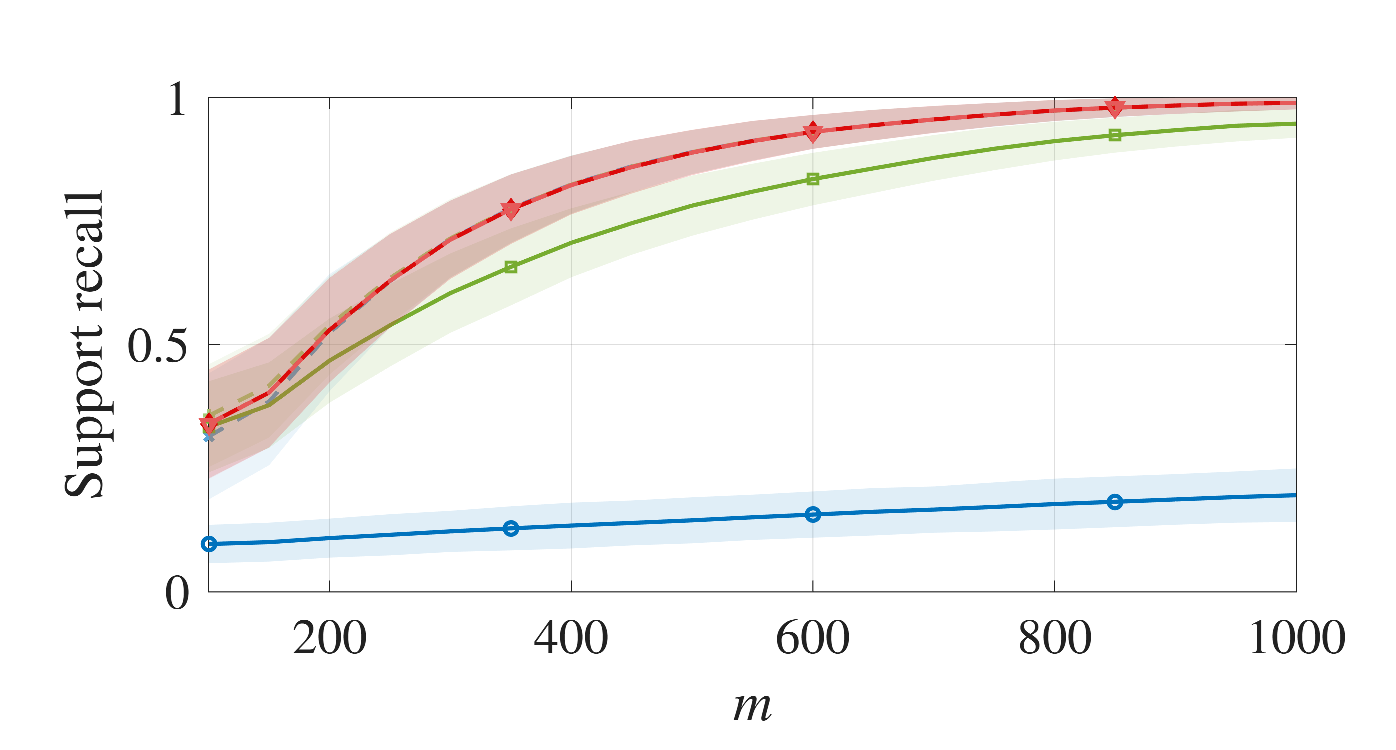}\label{fig:vary_m_hit_exp}}
    \hfill
    \raisebox{7mm}{\includegraphics[width=0.08\linewidth]{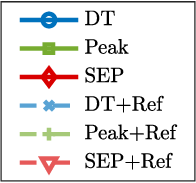}}
    \caption{Support recall vs \(m\) across three profiles (curves: trial mean; shaded bands: \(\pm1\) standard deviation over \(1000\) trials).}
    \label{fig:vary_m_hit_row}
\end{figure*}

%\Cref{fig:vary_m_err_row} and \Cref{fig:vary_m_hit_row} summarize the trends across profiles and refinement. For flat signals, SEP achieves the lowest error before refinement, but after applying TPower refinement it becomes slightly worse than DT. A plausible explanation is that an overly strong initial bias can trap refinement in a local optimum, preventing recovery of all support indices and thus limiting the final estimate. In contrast, for power-law and exponential signals, SEP dominates both before and after refinement, and refinement tends to shrink the gaps among methods without changing the ordering.

\Cref{fig:vary_m_err_row} and \Cref{fig:vary_m_hit_row} show consistent trends across profiles and refinement. For the flat profile, SEP yields the lowest pre-refinement error whereas DT is marginally better after TPower refinement. This suggests that while SEP provides a superior initial subspace estimate, its aggressive support selection might occasionally lock into a suboptimal support set that TPower cannot fully correct, whereas DT maintains a broader initial uncertainty that TPower can leverage. Nevertheless, both methods achieve comparable final accuracy. In contrast, for the power-law and exponential profiles, SEP is decisively best before refinement. After refinement, the curves nearly coincide: SEP changes little, whereas DT and the single-peak method improve to the same level. Overall, refinement primarily benefits weaker initializers, while a strong initializer (SEP) is already near its statistical limit and thus insensitive to additional iterations in our setting.

Excluding post-refinement, SEP achieves the best performance across all profiles. Moreover, comparing the two non-flat profiles, SEP's margin over the single-peak method is larger for power-law signals and noticeably smaller for exponential signals. The reason is structural: exponential decay concentrates most energy on the first entry, which the single-peak heuristic captures well; power-law decay distributes energy across the leading coordinates, so restricting attention to the largest entry fails to exploit the information carried by the other prominent coordinates, whereas SEP adapts to the entire profile. This observation aligns with our theoretical prediction in \Cref{thm:superiority}, confirming that SEP's advantage is most pronounced when the signal structure is non-flat yet not trivially single-peaked.

\subsection{TPower refinement and centering}\label{sec:refineDT}
Figure~\ref{fig:vary_m_err_row} shows that with SEP as the initializer, adding \(T=10\) TPower iterations (\Cref{alg:tpower-refine}) brings virtually no change—the initializer is already near its statistical upper bound.
To reveal the effect of TPower and the centering covariance, in this subsection we switch the initializer to DT and study refinement from a weaker \(\hat\bfv\). We run TPower with two matrices:
\[
    \text{Uncentered: } \hSigma,\qquad
    \text{Centered: } \hGamma=\hSigma-\bfI.
\]
We set \((n,k,m,\theta)=(1000,40,400,3)\) and consider flat signals. We set the maximum number of refinement iterations to \(T=100\) and report the mean \(\sin\angle(\bfw^{(t)},\bfv)\) over \(1000\) trials versus \(t\); see Fig.~\ref{fig:refine_center_vs_sigma_DT}.

\begin{figure}[t]
    \centering
    \includegraphics[width=0.78\linewidth]{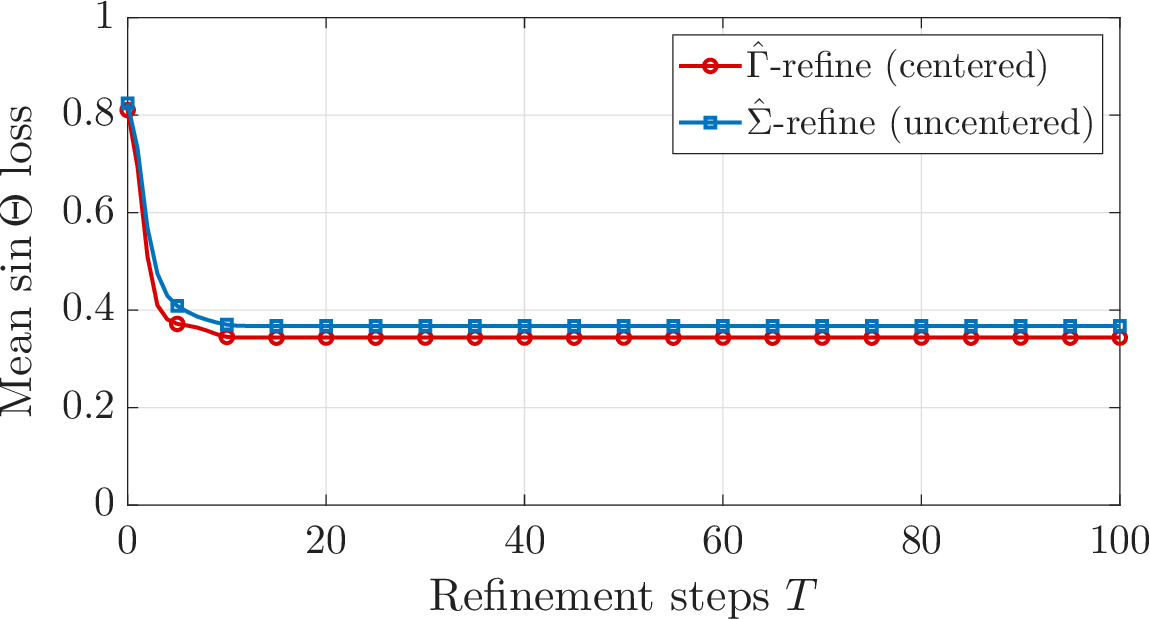}
    \caption{Refinement from a DT initializer: centered \(\hGamma\) vs.\ uncentered \(\hSigma\).
        Both variants decrease within a few iterations. The centered operator attains a lower error floor.}
    \label{fig:refine_center_vs_sigma_DT}
\end{figure}

It can be observed that both variants contract rapidly during the first few iterations and then plateau, and the centered operator consistently achieves a lower error floor. This reflects the superiority of the centered covariance in the refinement, as also discussed in \Cref{sec:dis-tpower}. However, the decrease is not strictly ``one iteration''—small but visible gains appear from $T=1$ to $T=10$, which seems to contradict \Cref{thm:tpower-T}. To explain this, we note two factors regarding the theoretical versus numerical behavior. First, \Cref{thm:tpower-T} provides an order-wise guarantee. While a single centered iteration suffices to attain the optimal error rate (i.e., the order $\sqrt{k/m}$), subsequent iterations continue to optimize the leading constant factors, numerically reducing the error until a fixed point is reached. Second, the theorem relies on a thresholded entrance condition. It states that once the iterate satisfies an energy threshold, a single step reaches the statistical error:
    \begin{equation}\label{eq:entrance}
        |\langle \bfw^{(0)},\bfv\rangle| \ge \sqrt{\gamma} \quad\Longrightarrow\quad \sin\angle(\hat\bfw^{(t)},\bfv)\ \lesssim\ \text{statistical error} , \quad \text{for all } t\ge 1
    \end{equation}
    where $\gamma$ is the constant in \Cref{thm:tpower-T}. With DT as the initializer, this entrance condition is typically not met at $t=0$; the first few iterations act as a ``warm-up'' to reselect support and enter the basin of attraction. By contrast, SEP usually satisfies~\eqref{eq:entrance} at $t=0$ (see~\Cref{fig:vary_m_err_row}), so additional refinement yields negligible gains, consistent with the theorem.

% !TEX root = 00Main.tex
\section{Conclusion}\label{sec:conclusion}
We introduced SEP, a simple iterative algorithm for SPCA. Despite requiring no profile information at run time, our analysis formalizes the role of the signal's energy profile via the structure function \(s(p)\) and establishes guarantees for direction estimation with sample size scaling as \(\max_{1\le p\le k} ps^2(p)\log n\). This requirement is uniformly no larger than prior polynomial-time bounds and can be strictly smaller on broad non-flat families. With a single TPower refinement, the final estimator attains the statistical error order. Empirically, SEP outperforms diagonal thresholding and single-peak-based methods, with especially strong gains on non-flat profiles.

% !TEX root = 00Main.tex
\appendices\label{sec:appendix}

\section{Auxiliary Lemmas}\label{app:lemmas}
We first give a lemma to establish the asymptotic equivalence among several conditions involving the structure function \(s(p)\), which will be useful to translate different forms of sample complexity conditions in the analysis.
\begin{lemma}[Asymptotically equivalent conditions]\label{lem:asymp-equiv}
    The following conditions are equivalent in the asymptotic sense.
    \begin{enumerate}[label=(\arabic*)]
        \item \(m\ge C_1\, ps(p)\);
        \item \(m \ge C_2\, (p+1)s(p)\);
        \item \(m \ge C_3\, ps(p+1)\).
    \end{enumerate}
    In other words, if one of these conditions holds for some absolute constant \(C_i>0\), then the other two also hold for some (different) absolute constants \(C_j>0\).
\end{lemma}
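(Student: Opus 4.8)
The plan is to reduce all three conditions to a single reference quantity by showing that $p\,s(p)$, $(p+1)\,s(p)$, and $p\,s(p+1)$ are mutually comparable up to absolute multiplicative constants, uniformly in $p$ and in the energy profile. Once that comparability is in hand, the equivalence of the conditions is automatic: from $m\ge C_i X_i$ together with a bound $X_i\le \kappa\,X_j$ with an absolute constant $\kappa$, one obtains $m\ge (C_i/\kappa)\,X_j$, so only the constant changes and never its dependence on $p$, $k$, or $n$.

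First I would handle $p\,s(p)$ versus $(p+1)\,s(p)$: since $p\ge 1$ we have $p\le p+1\le 2p$, hence $p\,s(p)\le (p+1)\,s(p)\le 2\,p\,s(p)$, which uses nothing beyond $p\ge 1$. Second, for $p\,s(p)$ versus $p\,s(p+1)$, the key observation is sortedness of the entries: because $v_{(1)}\ge v_{(2)}\ge\cdots$, the $(p+1)$-st squared entry obeys $v_{(p+1)}^2\le \tfrac1p\sum_{i=1}^p v_{(i)}^2$, so passing from the top-$p$ energy to the top-$(p+1)$ energy at most doubles it: $\sum_{i=1}^{p+1}v_{(i)}^2\le 2\sum_{i=1}^p v_{(i)}^2$, while trivially $\sum_{i=1}^{p+1}v_{(i)}^2\ge \sum_{i=1}^p v_{(i)}^2$. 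Taking reciprocals (recall $s(q)=(\sum_{i\le q}v_{(i)}^2)^{-1}$) gives $\tfrac12\,s(p)\le s(p+1)\le s(p)$, whence $p\,s(p+1)\le p\,s(p)\le 2\,p\,s(p+1)$. Chaining the two comparisons places all three quantities within a fixed factor (namely $2$) of $p\,s(p)$, which yields the claimed equivalence with explicit conversion constants; as a byproduct this also re-derives the monotonicity facts noted after \Cref{def:structure-function}.

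I do not anticipate a genuine obstacle here: the argument is essentially a one-line consequence of sortedness plus $p\ge 1$. The only point that needs care is that the conversion constants be \emph{absolute} — independent of $p$, $k$, $n$, and $\bfv$ — and this holds because every ratio bound produced above is at most $2$ unconditionally. It is also worth flagging the range of validity: the comparisons involving $s(p+1)$ require $s(p+1)$ to be defined, i.e.\ $1\le p\le k-1$, whereas the first comparison is valid for all $1\le p\le k$; this is exactly the range in which the lemma is invoked in the analysis.
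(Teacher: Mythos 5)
Your proposal is correct and follows essentially the same route as the paper: both arguments rest on the elementary facts that \(p\le p+1\le 2p\) and that sortedness of \(v_{(1)}^2\ge v_{(2)}^2\ge\cdots\) forces \(\tfrac12 s(p)\le s(p+1)\le s(p)\) (the paper packages the latter as monotonicity of \(s(p)\) and \(p\,s(p)\) and chains the implications cyclically, while you state the two-sided factor-\(2\) comparability directly). The explicit constants and the remark on the valid range \(1\le p\le k-1\) are fine.
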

\begin{proof}
    We prove (1)\(\Rightarrow\)(2)\(\Rightarrow\)(3)\(\Rightarrow\)(1) with absolute constants.

    (1)\(\Rightarrow\)(2): For \(p\ge1\), \((p{+}1)s(p) \le 2ps(p)\). Hence from \(m\ge C_1\,ps(p)\) we get
    \[
        m\ \ge\ \tfrac{C_1}{2}\,(p{+}1)s(p),
    \]
    i.e., (2) holds with \(C_2= C_1/2\).

    (2)\(\Rightarrow\)(3): Since \(s(p)\ge s(p{+}1)\), we have \((p{+}1)s(p)\ge ps(p{+}1)\). Thus from \(m\ge C_2\,(p{+}1)s(p)\) we obtain \(m\ge C_2\,ps(p{+}1)\), i.e., (3) with \(C_3=C_2\).

    (3)\(\Rightarrow\)(1): Using \(ps(p)\le (p{+}1)s(p{+}1)\le 2\,ps(p{+}1)\), from \(m\ge C_3\,ps(p{+}1)\) we get
    \[
        m\ \ge\ \tfrac{C_3}{2}\,ps(p),
    \]
    i.e., (1) with \(C_1=C_3/2\).

    Combining the three implications yields the claimed constant-factor equivalence among (1)-(3).
\end{proof}

We next present the classic Davis-Kahan \(\sin\Theta\) theorem~\cite{Davis1970DKsinTheta} in a convenient form.
\begin{lemma}[Davis-Kahan \(\sin\Theta\)]\label{lem:DK-sinTheta}
    Let \(\bfA\in\mathbb{R}^{p\times p}\) be symmetric with eigenvalues
    \(\lambda_1(\bfA)\ge \lambda_2(\bfA)\ge\cdots\), and let \(\bfu\) be a unit eigenvector
    for \(\lambda_1(\bfA)\). Let \(\bfE=\bfE^\top\) be a symmetric perturbation, set
    \(\hat \bfA:=\bfA+\bfE\), and let \(\hat \bfu\) be a unit top eigenvector of
    \(\hat \bfA\). If the eigengap \(\Delta:=\lambda_1(\bfA)-\lambda_2(\bfA)>0\), then
    \[
        \sin\angle(\hat \bfu,\bfu)\ \le\ \min\Big\{1,\ \frac{\|\bfE\|_2}{\Delta}\Big\}.
    \]
    Equivalently, \(|\langle \hat \bfu,\bfu\rangle| \ \ge\ \sqrt{1-\min\{1,\ \|\bfE\|_2^2/\Delta^2\}}\).
\end{lemma}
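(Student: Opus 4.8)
The statement is the classical Davis–Kahan $\sin\Theta$ bound specialized to the leading (rank-one) eigenvector, so the plan is simply to reproduce the standard eigenvector-perturbation argument. First I would dispose of the trivial regime: if $\|\bfE\|_2 \ge \Delta$ then $\min\{1,\|\bfE\|_2/\Delta\}=1$ and $\sin\angle(\hat\bfu,\bfu)\le 1$ holds automatically, so one may assume $\|\bfE\|_2<\Delta$. I would also note that $\Delta>0$ forces $\lambda_1(\bfA)$ to be simple, so $\bfu$ is unambiguous up to sign and the angle is well defined; after fixing the sign of $\bfu$ we may take $c:=\langle\hat\bfu,\bfu\rangle\ge 0$ (and if $c=1$ there is nothing to prove, so assume $s:=\sin\angle(\hat\bfu,\bfu)>0$).

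The core of the proof is a ``project onto the complementary invariant subspace'' computation. Set $\hat\lambda_1:=\lambda_1(\hat\bfA)$; from $\hat\bfA\hat\bfu=\hat\lambda_1\hat\bfu$ and $\hat\bfA=\bfA+\bfE$ we get $(\hat\lambda_1\bfI-\bfA)\hat\bfu=\bfE\hat\bfu$. Decompose $\hat\bfu=c\,\bfu+s\,\bfw$ with $\bfw\perp\bfu$ a unit vector. Because $\bfA$ is symmetric with eigenvector $\bfu$, the subspace $\bfu^\perp$ is $\bfA$-invariant and $\bfA|_{\bfu^\perp}$ has all eigenvalues $\le\lambda_2(\bfA)$; hence $(\hat\lambda_1\bfI-\bfA)\bfu=(\hat\lambda_1-\lambda_1)\bfu$ lies along $\bfu$ while $(\hat\lambda_1\bfI-\bfA)\bfw$ lies in $\bfu^\perp$, and orthogonality gives $\|\bfE\hat\bfu\|_2^2 = c^2(\hat\lambda_1-\lambda_1)^2 + s^2\|(\hat\lambda_1\bfI-\bfA)\bfw\|_2^2$. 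By Weyl's inequality $\hat\lambda_1\ge\lambda_1-\|\bfE\|_2>\lambda_1-\Delta=\lambda_2(\bfA)$, so every eigenvalue of $\bfA|_{\bfu^\perp}$ sits strictly below $\hat\lambda_1$ and therefore $\|(\hat\lambda_1\bfI-\bfA)\bfw\|_2\ge\hat\lambda_1-\lambda_2(\bfA)$. Dropping the nonnegative first term and using $\|\bfE\hat\bfu\|_2\le\|\bfE\|_2$ yields $s\le\|\bfE\|_2/(\hat\lambda_1-\lambda_2(\bfA))$. The ``equivalently'' clause then needs nothing further, since $|\langle\hat\bfu,\bfu\rangle|^2=c^2=1-s^2$.

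The one place that needs care — and the step I expect to be the real obstacle — is turning the denominator $\hat\lambda_1-\lambda_2(\bfA)$ that this elementary argument produces into the clean $\Delta=\lambda_1(\bfA)-\lambda_2(\bfA)$ stated in the lemma. The cheap route is one more application of Weyl, $\hat\lambda_1-\lambda_2(\bfA)\ge\Delta-\|\bfE\|_2$, which already gives a bound of the advertised shape (e.g.\ on the event $\|\bfE\|_2\le\Delta/2$, up to an absolute constant) and in general gives $s\le\|\bfE\|_2/(\Delta-\|\bfE\|_2)$. Recovering the denominator $\Delta$ exactly requires running the Davis–Kahan bookkeeping symmetrically in $(\bfA,\hat\bfA)$ — bounding $\|(\bfI-\bfu\bfu^\top)\,\hat\bfu\hat\bfu^\top\|_2$ through the separation between $\{\hat\lambda_1\}$ and $\{\lambda_2(\bfA),\dots\}$ while exploiting extremality of the largest eigenvalue to absorb the $\hat\lambda_1$-versus-$\lambda_1$ slack — and at this level I would simply invoke the sharp inequality from~\cite{Davis1970DKsinTheta}, which is precisely the ``convenient form'' being quoted. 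Everything else is routine linear algebra.
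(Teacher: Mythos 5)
The paper itself gives no proof of this lemma: it is quoted as the classical Davis--Kahan theorem ``in a convenient form'' with a citation to~\cite{Davis1970DKsinTheta}, so there is no in-paper argument to compare against, and a self-contained derivation like yours is the natural substitute. Your core computation is correct as far as it goes: writing \(\hat\bfu=c\,\bfu+s\,\bfw\) with \(\bfw\perp\bfu\), using \((\hat\lambda_1\bfI-\bfA)\hat\bfu=\bfE\hat\bfu\) with \(\hat\lambda_1:=\lambda_1(\hat\bfA)\), the \(\bfA\)-invariance of \(\bfu^\perp\), and Weyl's inequality gives
\[
\sin\angle(\hat\bfu,\bfu)\;\le\;\frac{\|\bfE\|_2}{\hat\lambda_1-\lambda_2(\bfA)}\;\le\;\frac{\|\bfE\|_2}{\Delta-\|\bfE\|_2}\qquad\text{whenever }\|\bfE\|_2<\Delta,
\]
the case \(\|\bfE\|_2\ge\Delta\) being trivial.

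The genuine gap is the step you flag and then outsource to the citation: upgrading the denominator to \(\Delta\) exactly. That step cannot be completed, by symmetric bookkeeping or by invoking~\cite{Davis1970DKsinTheta}, because the inequality \(\sin\angle(\hat\bfu,\bfu)\le\min\{1,\|\bfE\|_2/\Delta\}\) with the one-matrix gap \(\Delta=\lambda_1(\bfA)-\lambda_2(\bfA)\) and constant \(1\) is false in general. Take \(\bfA=\mathrm{diag}(1,0)\), so \(\bfu=\bfe_1\) and \(\Delta=1\), and let \(\bfE\) have diagonal entries \(-\tfrac12,\tfrac12\) and off-diagonal entries \(\tfrac1{10}\). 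Then \(\|\bfE\|_2=\sqrt{0.26}\approx 0.51<\Delta\), while \(\hat\bfA\) has constant diagonal \(\tfrac12\) and top eigenvector \((1,1)/\sqrt2\), so \(\sin\angle(\hat\bfu,\bfu)=1/\sqrt2\approx 0.71>0.51=\min\{1,\|\bfE\|_2/\Delta\}\). What the classical theorem actually provides is the cross-matrix separation \(\hat\lambda_1-\lambda_2(\bfA)\) in the denominator (here \(0.6\), giving the valid bound \(0.85\)); the variant with the unperturbed gap \(\Delta\) holds only with an extra factor \(2\) (the Yu--Wang--Samworth form \(2\|\bfE\|_2/\Delta\)). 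So the obstacle you sensed is real, but it is a defect of the target statement rather than of your argument: the lemma should be stated with denominator \(\hat\lambda_1-\lambda_2(\bfA)\), or \(\Delta-\|\bfE\|_2\), or with the factor \(2\), and any of these is exactly what your elementary proof (or the cited theorem) delivers. Each such weakening suffices for every downstream use in the paper (\Cref{lem:DK}, \Cref{prop:init-gamma}, \Cref{prop:energy-preservation}, \Cref{prop:alpha0-lb}), changing only unspecified absolute constants.
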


In the following, we present two lemmas that are key to analyzing the alignment and the support stability in each round of \Cref{alg:SEP}.
\Cref{lem:DK} converts support energy on a set \(S\) into alignment with the spike. \Cref{lem:reselect} goes in the reverse direction, showing that alignment of the current round \(p\) forces the next reselected support \(S^{(p+1)}\) to capture sufficient spike energy.
These two estimates close the loop and yield a bootstrap: starting from the base energy guaranteed after the seeding step (\Cref{prop:init-gamma}), the support energy increases and the alignment improves round by round (\Cref{prop:energy-preservation}), until a good final estimator is obtained (\Cref{thm:main}).

The first lemma quantifies the alignment between the spike \(\bfv\) and the top eigenvector of any principal submatrix of \(\hGamma\), which is lower bounded in terms of the spike energy on the given support \(S\) and the noise matrix \(\bfW\).
\begin{lemma}[Alignment on a support via Davis-Kahan] \label{lem:DK}
    For all \(S\subset[n]\), \(|S|=p\), let \(\hat\bfe_S\) be the top eigenvector of \(\hGamma_{S,S}\). Then
    \begin{equation}\label{eq:corr-v-eigen}
        |\langle \bfv, \hat\bfe_S\rangle| \ge \|\bfv_S\|_2\sqrt{\Big(1-\frac{\|\bfW_{S,S}\|_{2}^2}{\theta^2\|\bfv_S\|_2^4}\Big)_+}, \qquad (x)_+ := \max\{x,0\}.
    \end{equation}
\end{lemma}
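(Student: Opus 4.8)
The plan is to recognize $\hGamma_{S,S}$ as a rank-one signal matrix perturbed by $\bfW_{S,S}$, and then to apply the Davis--Kahan $\sin\Theta$ bound of \Cref{lem:DK-sinTheta}. First I would restrict the decomposition $\hGamma=\theta\,\bfv\bfv^\top+\bfW$ to the block $S\times S$, giving $\hGamma_{S,S}=\theta\,\bfv_S\bfv_S^\top+\bfW_{S,S}$. Setting $\bfA:=\theta\,\bfv_S\bfv_S^\top$ and $\bfE:=\bfW_{S,S}$ (both symmetric, since $\hGamma$ is symmetric), we have $\hGamma_{S,S}=\bfA+\bfE$. When $\|\bfv_S\|_2>0$, the matrix $\bfA$ is rank one: its unique nonzero eigenvalue is $\lambda_1(\bfA)=\theta\|\bfv_S\|_2^2$ with unit eigenvector $\bfv_S/\|\bfv_S\|_2$, while $\lambda_2(\bfA)=0$, so the eigengap $\Delta=\theta\|\bfv_S\|_2^2$ is strictly positive (this is where $\theta>0$ enters). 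The degenerate case $\|\bfv_S\|_2=0$ is dispatched in one line: then $\bfv_S=\mathbf 0$, and since $\hat\bfe_S$ is supported on $S$ we get $\langle\bfv,\hat\bfe_S\rangle=0$, so \eqref{eq:corr-v-eigen} holds trivially.

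Next I would invoke \Cref{lem:DK-sinTheta} with the above identification, letting the $S$-block of $\hat\bfe_S$ (the genuine top eigenvector of $\hGamma_{S,S}$) play the role of the perturbed top eigenvector and $\bfv_S/\|\bfv_S\|_2$ the role of the unperturbed one. Since rotating an angle is unaffected by common zero padding, this gives
\[
    \sin\angle\!\Big(\hat\bfe_S,\ \tfrac{\bfv_S}{\|\bfv_S\|_2}\Big)\ \le\ \min\Big\{1,\ \frac{\|\bfW_{S,S}\|_2}{\theta\,\|\bfv_S\|_2^2}\Big\},
\]
equivalently $\big|\big\langle\hat\bfe_S,\tfrac{\bfv_S}{\|\bfv_S\|_2}\big\rangle\big|\ \ge\ \sqrt{\big(1-\|\bfW_{S,S}\|_2^2/(\theta^2\|\bfv_S\|_2^4)\big)_+}$, using the ``equivalently'' form recorded in \Cref{lem:DK-sinTheta}.

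Finally I would pass from the $S$-block inner product to the ambient one: because $\hat\bfe_S$ vanishes outside $S$, one has $\langle\bfv,\hat\bfe_S\rangle=\langle\bfv_S,(\hat\bfe_S)_S\rangle=\|\bfv_S\|_2\,\big\langle\tfrac{\bfv_S}{\|\bfv_S\|_2},(\hat\bfe_S)_S\big\rangle$; taking absolute values and substituting the previous display yields exactly \eqref{eq:corr-v-eigen}. I do not expect any genuine obstacle here: the only steps needing a little care are verifying that $\bfv_S/\|\bfv_S\|_2$ is truly the top eigenvector of $\bfA$ with a positive eigengap (so that \Cref{lem:DK-sinTheta} applies), and the bookkeeping that zero padding collapses the ambient inner product onto the $S$-block.
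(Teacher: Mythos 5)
Your proposal is correct and follows essentially the same route as the paper's proof: restrict the decomposition $\hGamma_{S,S}=\theta\,\bfv_S\bfv_S^\top+\bfW_{S,S}$, apply \Cref{lem:DK-sinTheta} with eigengap $\theta\|\bfv_S\|_2^2$ and unperturbed eigenvector $\bfv_S/\|\bfv_S\|_2$, and then rescale the inner product by $\|\bfv_S\|_2$ using the zero-padding of $\hat\bfe_S$. The only difference is that you explicitly dispatch the degenerate case $\|\bfv_S\|_2=0$, which the paper leaves implicit.
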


\begin{proof}
    The matrix \(\hGamma_{S,S} = \theta\, \bfv_S\bfv_S^\top + \bfW_{S,S}\) has a rank-one signal part. The top eigenvector of the signal part is \(\bfu_S=\frac{\bfv_S}{\|\bfv_S\|_2}\) with eigenvalue \(\lambda_S = \theta\|\bfv_S\|_2^2\). Applying~\Cref{lem:DK-sinTheta} for the gap \(\lambda_S\) directly gives
    \begin{equation*}
        \sqrt{1-|\langle \hat\bfe_S, \bfu_S\rangle|^2} = \sin\angle(\hat\bfe_S, \bfu_S) \le \frac{\|\bfW_{S,S}\|_{2}}{\theta\,\|\bfv_S\|_2^2}.
    \end{equation*}
    Then, we obtain the desired bound:
    \begin{equation*}
        |\langle \bfv, \hat\bfe_S\rangle| = |\langle \bfv_S, \hat\bfe_S\rangle| = \|\bfv_S\|_2\,|\langle \bfu_S, \hat\bfe_S\rangle| \ge \|\bfv_S\|_2\sqrt{\Big(1-\frac{\|\bfW_{S,S}\|_{2}^2}{\theta^2\,\|\bfv_S\|_2^4}\Big)_+}.
    \end{equation*}
\end{proof}

In words, Lemma~\ref{lem:DK} says that more spike energy on \(S\) yields better alignment. The next lemma shows the complementary direction: better alignment forces the next top-\((p{+}1)\) support to retain sufficient spike energy.
Together they provide a closed recursion across rounds.
\begin{lemma}[Support stability of reselection based on \(\ell_2\)-norm] \label{lem:reselect}
    Under the high-probability event \(\mcE\), we have
    \begin{equation}\label{eq:vsplus1-lower}
        \|\bfv_{S^{(p+1)}}\|_2 ~\geq~ \sqrt{\frac{1}{s(p+1)}} - \frac{2}{\theta\,|\langle \bfv,\hat\bfe^{(p)}\rangle|}\cdot C(1+\theta)\sqrt{\frac{2(p+1)\log n}{m}}.
    \end{equation}
\end{lemma}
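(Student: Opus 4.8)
The plan is to exploit the optimality of the top-$(p+1)$ selection rule together with the signal/noise split of the response vector. First I would write $\bfu^{(p)} = \hGamma\hat\bfe^{(p)} = a\bfv + \bfxi$, where $a := \theta\langle\bfv,\hat\bfe^{(p)}\rangle$ is the signal coefficient and $\bfxi := \bfW\hat\bfe^{(p)}$ is the noise term, exactly the decomposition in~\eqref{eq:SEP-response}. Let $T\subset[n]$ denote the support of the $p+1$ largest-magnitude entries of $\bfv$, so that $\|\bfv_T\|_2^2 = \sum_{i=1}^{p+1} v_{(i)}^2 = 1/s(p+1)$ by~\Cref{def:structure-function}. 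Since $S^{(p+1)}$ collects the $p+1$ largest-magnitude coordinates of $\bfu^{(p)}$, it maximizes $\|\bfu^{(p)}_S\|_2$ over all index sets of size $p+1$; in particular $\|\bfu^{(p)}_{S^{(p+1)}}\|_2 \ge \|\bfu^{(p)}_T\|_2$.

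Next I would sandwich both sides with the triangle inequality. On the left, $\|\bfu^{(p)}_{S^{(p+1)}}\|_2 \le |a|\,\|\bfv_{S^{(p+1)}}\|_2 + \|\bfxi_{S^{(p+1)}}\|_2$; on the right, $\|\bfu^{(p)}_T\|_2 \ge |a|\,\|\bfv_T\|_2 - \|\bfxi_T\|_2 = |a|\sqrt{1/s(p+1)} - \|\bfxi_T\|_2$. Chaining these through the maximality inequality and dividing by $|a| = \theta|\langle\bfv,\hat\bfe^{(p)}\rangle|$ (the statement is vacuous otherwise) gives
\[
    \|\bfv_{S^{(p+1)}}\|_2 \;\ge\; \sqrt{\tfrac{1}{s(p+1)}} \;-\; \frac{\|\bfxi_T\|_2 + \|\bfxi_{S^{(p+1)}}\|_2}{\theta\,|\langle\bfv,\hat\bfe^{(p)}\rangle|}.
\]

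It remains to bound each noise term on $\mcE$. Since $\hat\bfe^{(p)}$ is zero-padded from $S^{(p)}$ and has unit norm, for any index set $S$ with $|S| = p+1$ one has $\bfxi_S = \bfW_{S,S^{(p)}}\hat\bfe^{(p)}_{S^{(p)}}$, hence $\|\bfxi_S\|_2 \le \|\bfW_{S,S^{(p)}}\|_2$. Setting $U := S\cup S^{(p)}$, which satisfies $|U| \le 2p+1 \le 2(p+1)$, the block $\bfW_{S,S^{(p)}}$ is a submatrix of $\bfW_{U,U}$, so $\|\bfW_{S,S^{(p)}}\|_2 \le \|\bfW_{U,U}\|_2 \le C(1+\theta)\sqrt{2(p+1)\log n/m}$ by~\Cref{prop:principal-submatrix} and the embedding remark following it. Applying this with $S = T$ and with $S = S^{(p+1)}$ bounds both noise terms by the same quantity, and substituting into the displayed inequality yields~\eqref{eq:vsplus1-lower}.

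The argument is a direct deterministic calculation once we condition on $\mcE$, so there is no genuine obstacle; the only point that requires a little care is routing the noise term $\bfxi$ through the rectangular block $\bfW_{S,S^{(p)}}$ and the union $S\cup S^{(p)}$, so that the uniform spectral bound of~\Cref{prop:principal-submatrix} applies with the stated dimension $2(p+1)$ and the unit-norm factor $\hat\bfe^{(p)}$ drops out cleanly.
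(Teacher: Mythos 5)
Your proposal is correct and follows essentially the same route as the paper's proof: compare the top-$(p+1)$ selection against the top-$(p+1)$ support of $|\bfv|$ via the signal/noise split of $\hGamma\hat\bfe^{(p)}$, apply the triangle inequality on both sides, and control the restricted noise vectors by embedding the rectangular blocks $\bfW_{S,S^{(p)}}$ into principal submatrices of size at most $2(p+1)$ covered by the event $\mcE$. The only cosmetic difference is that the paper phrases the comparison as $\|\hGamma_{S^{(p+1)},S^{(p)}}\hat\bfe^{(p)}\|_2\ge\|\hGamma_{T^{(p+1)},S^{(p)}}\hat\bfe^{(p)}\|_2$ before splitting, which is exactly your inequality $\|\bfu^{(p)}_{S^{(p+1)}}\|_2\ge\|\bfu^{(p)}_{T}\|_2$.
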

\begin{proof}
    Denote \(\bfv^\dagger=\theta \langle \bfv, \hat\bfe^{(p)}\rangle\bfv \) and \(\bfw^\dagger=\bfW\hat\bfe^{(p)}\). Then \(\hGamma \hat\bfe^{(p)} = \theta \langle \bfv, \hat\bfe^{(p)}\rangle \bfv + \bfW \hat\bfe^{(p)} = \bfv^\dagger + \bfw^\dagger\). Let \(T^{(p+1)}\) be the indices of the top-\((p{+}1)\) coordinates of \(|\bfv|\). By definition of \(S^{(p+1)}\) as the top-\((p{+}1)\) of \(|\hGamma \hat\bfe^{(p)}|\), we have
    \[
        \sum_{j\in S^{(p+1)}} \left|\left(\hGamma \hat\bfe^{(p)}\right)_{j}\right|^2 \ \ge\ \sum_{j\in T^{(p+1)}} \left|\left(\hGamma \hat\bfe^{(p)}\right)_{j}\right|^2.
    \]
    Since the support of \(\hat\bfe^{(p)}\) is \(S^{(p)}\), the above inequality is equivalent to
    \begin{equation}\label{eq:comparison-l2norm}
        \left\|\hGamma_{S^{(p+1)},S^{(p)}}\hat\bfe^{(p)}\right\|_2 \ \ge\ \left\|\hGamma_{T^{(p+1)},S^{(p)}}\hat\bfe^{(p)}\right\|_2.
    \end{equation}
    Set \(U := S^{(p)}\cup S^{(p+1)}\). For the LHS, we work on the support \(S^{{(p+1)}}\), and have
    \begin{align}
        \nonumber     \left\|\hGamma_{S^{(p+1)},S^{(p)}}\hat\bfe^{(p)}\right\|_2
                  & = \|\bfv_{S^{(p+1)}}^\dagger + \bfw_{S^{(p+1)}}^\dagger\|_2                             \\
        \nonumber & \le \|\bfv_{S^{(p+1)}}^\dagger\|_2 + \|\bfw_{S^{(p+1)}}^\dagger\|_2                     \\
        \nonumber & \le \|\bfv_{S^{(p+1)}}^\dagger\|_2 + \|\bfW_{S^{(p+1)},S^{(p)}}\|_2\|\hat\bfe^{(p)}\|_2 \\
                  & \le \|\bfv_{S^{(p+1)}}^\dagger\|_2 + \|\bfW_{U,U}\|_2.\label{eq:LHS-bound}
    \end{align}
    The final inequality is because \(\|\bfW_{S^{(p+1)},S^{(p)}}\|_2 \le \|\bfW_{U,U}\|_2\) by zero-padding into the principal submatrix.

    Similarly, working on the support \(T^{(p+1)}\), and set \(V = S^{(p)}\cup T^{(p+1)}\), we have
    \begin{equation}\label{eq:RHS-bound}
        \left\|\hGamma_{T^{(p+1)},S^{(p)}}\hat\bfe^{(p)}\right\|_2
        \ge \|\bfv_{T^{(p+1)}}^\dagger\|_2 - \|\bfW_{V,V}\|_2.
    \end{equation}

    Combining \eqref{eq:comparison-l2norm}, \eqref{eq:LHS-bound} and \eqref{eq:RHS-bound}, we get
    \begin{equation*}\|\bfv_{S^{(p+1)}}^\dagger\|_2 +  \|\bfW_{U,U}\|_2 \ge\|\bfv_{T^{(p+1)}}^\dagger\|_2 -  \|\bfW_{V,V}\|_2.
    \end{equation*}
    Dividing by \(\theta\,|\langle \bfv,\hat\bfe^{(p)}\rangle|\) and using the high-probability event \(\mcE\), we get
    \begin{equation*}
        \|\bfv_{S^{(p+1)}}\|_2 ~\geq~ \|\bfv_{T^{(p+1)}}\|_2 - \frac{2}{\theta\,|\langle \bfv,\hat\bfe^{(p)}\rangle|}\cdot C(1+\theta)\sqrt{\frac{2(p+1)\log n}{m}}.
    \end{equation*}
    This gives the stated result~\eqref{eq:vsplus1-lower}.
\end{proof}
Combining \Cref{lem:DK} with \Cref{lem:reselect}, and instantiating the base case from \Cref{prop:init-gamma}, we obtain a monotone improvement of \(\|\mathbf v_{S^{(p)}}\|_2\) up to the entrance threshold; once the threshold is met, \Cref{prop:energy-preservation} gives the energy lower bound preservation and thus underpins the final error guarantee in \Cref{thm:main}.

\section{Proofs of Propositions}\label{app:proof-props}
\subsection{Proof of \Cref{prop:principal-submatrix}}
\begin{proof}
    Fix \(p\in[n]\) and \(S\subset[n]\), \(|S|=p\). Write
    \[
        \bfW_{S,S} = \hSigma_{S,S} - \bSigma_{S,S} = \bSigma_{S,S}^{1/2}\Big( \frac{1}{m}\sum_{i=1}^m \bfy_i \bfy_i^\top - I_p \Big)\bSigma_{S,S}^{1/2},\quad \bfy_i := \bSigma_{S,S}^{-1/2} \bfx_i(S).
    \]
    The \(\bfy_i\) are \iid isotropic sub-Gaussian in \(\R^p\), and \(\|\bSigma_{S,S}\|_{2}\le \|\bSigma\|_{2}\le 1+\theta\). Standard sub-Gaussian covariance deviation (see \cite[Theorem~4.6.1, Eq.~(4.22)]{Vershynin2018High}, with the change of variables \(t^2 \to t\)) yields for any \(t>0\),
    \[
        \Prob\left( \big\|\tfrac{1}{m}\sum_{i=1}^m \bfy_i \bfy_i^\top - \bfI_p\big\|_{2} > C\Big(\sqrt{\tfrac{p}{m}} + \sqrt{\tfrac{t}{m}}\Big) \right) \le 2e^{-ct}.
    \]
    Taking a union bound over all \(S\) with \(|S|=p\) (there are \(\binom{n}{p}\le (en/p)^p\) choices) and over \(p\in[n]\), and choosing \(t >  \log n + p\log(en/p)\), we obtain with probability at least \(1-n^{-c}\), simultaneously for all such \(S\) and \(p\),
    \[
        \|\bfW_{S,S}\|_{2} \;\le\; C(1+\theta)\Big(\sqrt{\tfrac{p}{m}} + \sqrt{\tfrac{\log n + p\log(en/p)}{m}}\Big) \;\le\; C(1+\theta)\sqrt{\tfrac{p\log n}{m}}.
    \]
\end{proof}

\subsection{Proof of \Cref{prop:power-law-interp}}
\begin{proof}
    Let \(H_{k,\alpha} := \sum_{i=1}^k i^{-\alpha}\) (generalized harmonic number) and \(H_{p,\alpha}:=\sum_{i=1}^p i^{-\alpha}\).
    Under the normalization \(\sum_{i=1}^k v_{(i)}^2=1\) we have
    \[
        v_{(i)}^2=\frac{i^{-\alpha}}{H_{k,\alpha}},\qquad
        s(p)=\frac{H_{k,\alpha}}{H_{p,\alpha}},
    \]
    and hence
    \[
        p\,s^2(p)\;=\;p\Big(\frac{H_{k,\alpha}}{H_{p,\alpha}}\Big)^{\!2}.
    \]

    We use the integral test to give bounds on \(H_{k,\alpha}\).
    For \(f(x)=x^{-\alpha}\), \(f\) is positive and decreasing on \([1,\infty)\). For every integer \(i\ge 1\),
    \[
        \int_{i}^{i+1} f(x)\,dx \;\le\; f(i) \;\le\; \int_{i-1}^{i} f(x)\,dx.
    \]
    Summing over \(i=1,\ldots,k\) gives
    \begin{equation}\label{eq:sandwich}
        \int_{1}^{k+1} x^{-\alpha}dx \;\le\; H_{k,\alpha} \;\le\; 1+\int_{1}^{k} x^{-\alpha}dx.
    \end{equation}
    Evaluating the integrals yields:
    \begin{equation}\label{eq:power-law-sp}
        H_{k,\alpha}\asymp
        \begin{cases}
            \dfrac{k^{1-\alpha}}{1-\alpha}, & 0<\alpha<1, \\[6pt]
            1 + \log k,                     & \alpha=1,   \\[4pt]
            1,                              & \alpha>1,
        \end{cases}
        \qquad
        \text{and thus}\quad
        s(p) = \frac{H_{k,\alpha}}{H_{p,\alpha}}\asymp
        \begin{cases}
            \Big(\dfrac{k}{p}\Big)^{1-\alpha}, & 0<\alpha<1, \\[8pt]
            \dfrac{1+\log k}{1+\log p},        & \alpha=1,   \\[6pt]
            1,                                 & \alpha>1.
        \end{cases}
    \end{equation}

    \begin{itemize}[itemsep=2pt, topsep=2pt]
        \item \textit{Case I: \(0\le \alpha<1\).}
              From the bounds above,
              \[
                  p\,s^2(p)\;\asymp\; k^{2(1-\alpha)}\, p^{\,2\alpha-1}.
              \]
              Let \(g(p):=p^{2\alpha-1}\). If \(\alpha<\tfrac12\), then \(g\) decreases in \(p\), so \(\max_p p s^2(p)\) is attained at \(p=1\) and equals \(\asymp k^{2-2\alpha}\). If \(\alpha>\tfrac12\), then \(g\) increases in \(p\), so the maximum is at \(p=k\) and equals \(\asymp k^{2(1-\alpha)}k^{2\alpha-1}=k\). At \(\alpha=\tfrac12\), \(g(p)\equiv 1\), hence \(\max_p p s^2(p)\asymp k\).

        \item \textit{Case II: \(\alpha=1\).}
              We have
              \[
                  p\,s^2(p)\;\asymp\; p\ \left(\frac{1+\log k}{1+\log p}\right)^{\!2}.
              \]
              It is easy to see that LHS attains its maximum at \(p=k\). Therefore \(\max_p p s^2(p)\asymp k\).

        \item \textit{Case III: \(\alpha>1\).} It is trivial that
              \[
                  \max_p\ p\,s^2(p)\; \asymp \;\max_p\ p = k.
              \]

    \end{itemize}
    Combining the three cases proves the proposition.
\end{proof}

\subsection{Proof of \Cref{prop:init-gamma}}
\begin{proof}
    Working on the high-probability event \(\mcE\) with \(p=1\), we have uniformly over \(j\in[n]\) that
    \begin{equation}\label{eq:single-coordinate}
        \max_{j}\,|W_{jj}|\;=\;\max_{|S|=1}\,\|\bfW_{S,S}\|_2\;\le\;C(1+\theta)\sqrt{\tfrac{\log n}{m}}.
    \end{equation}

    Let \(l\in\arg\max_j |v_j|\) so that \(v_l^2=v_{(1)}^2\). Recall \(d_j=\hGamma_{jj}=\theta v_j^2+W_{jj}\) and \(S^{(1)}=\{\arg\max_j |d_j|\}\). Then
    \begin{equation*}
        |\theta v_{S^{(1)}}^2+W_{S^{(1)},S^{(1)}}|\;\ge\;|\theta v_l^2+W_{ll}|\;\ge\;\theta v_l^2-|W_{ll}|,
    \end{equation*}
    while also \(|\theta v_{S^{(1)}}^2+W_{S^{(1)},S^{(1)}}|\le\theta v_{S^{(1)}}^2+|W_{S^{(1)},S^{(1)}}|\). Hence
    \begin{equation*}
        \theta v_{S^{(1)}}^2\;\ge\;\theta v_l^2-|W_{ll}|-|W_{S^{(1)},S^{(1)}}|\;\ge\;\theta v_{(1)}^2-2\max_j |W_{jj}|,
    \end{equation*}
    so using~\eqref{eq:single-coordinate} we have,
    \begin{equation*}
        v_{S^{(1)}}^2\;\ge\;v_{(1)}^2-\frac{2C(1+\theta)}{\theta}\sqrt{\frac{\log n}{m}}.
    \end{equation*}
    Choosing
    \begin{equation*}
        m\ge C\,\frac{(1+\theta)^2\log n}{\theta^2 v_{(1)}^4(1-\gamma)^2}=C\,\frac{(1+\theta)^2}{\theta^2(1-\gamma)^2}s^2(1)\log n,
    \end{equation*} yields \(v_{S^{(1)}}^2\ge\gamma v_{(1)}^2\), i.e.,
    \begin{equation*}
        \|\bfv_{S^{(1)}}\|_2\ge\sqrt{\gamma/s(1)}.
    \end{equation*}
\end{proof}

\subsection{Proof of \Cref{prop:energy-preservation}}\label{app:proof-one-step-floor}
\begin{proof}
    For any \(p=1,\dots, k-1\), consider the current support estimate \(S^{(p)}\) and the corresponding unit vector \(\hat\bfe^{(p)}\). By the event \(\mcE\) and \Cref{lem:DK}, we have
    \begin{align*}
        \nonumber     |\langle \bfv, \hat\bfe^{(p)}\rangle| & \ge \|\bfv_{S^{(p)}}\|_2\sqrt{1-\frac{(C^2(1+\theta)^2 p\log n)/m}{\theta^2\,\|\bfv_{S^{(p)}}\|_2^4}} \\
        \nonumber                                           & \ge \sqrt{\frac{\gamma}{s(p)}}\sqrt{1-\frac{(C^2(1+\theta)^2 p\log n)/m}{\theta^2 \gamma^2/s^2(p)}}   \\
                                                            & = \sqrt{\frac{\gamma}{s(p)}}\sqrt{1-\frac{C^2 (1+\theta)^2p s^2(p) \log n}{\theta^2 \gamma^2 m}}.
    \end{align*}
    Choose
    \begin{equation}\label{eq:m-bound-inner}
        m \ge \frac{2C^2(1+\theta)^2 p s^2(p) \log n}{\theta^2 \gamma^2},
    \end{equation}
    then we can ensure \(|\langle \bfv, \hat\bfe^{(p)}\rangle| \ge \sqrt{\frac{\gamma}{2s(p)}}\).

    From \Cref{lem:reselect},
    \begin{align*}
        \nonumber        \|\bfv_{S^{(p+1)}}\|_2 & \geq \sqrt{\frac{1}{s(p+1)}} - \frac{2\sqrt{2(p+1)}}{\theta\,|\langle \bfv,\hat\bfe^{(p)}\rangle|}\cdot C(1+\theta)\sqrt{\frac{\log n}{m}} \\
        \nonumber                               & \ge \sqrt{\frac{1}{s(p+1)}} - \frac{4\sqrt{(p+1)s(p)}}{\theta\sqrt{\gamma}}\cdot C(1+\theta)\sqrt{\frac{\log n}{m}}                        \\
                                                & \ge \sqrt{\frac{1}{s(p+1)}} - \frac{4\,C(1+\theta)}{\theta\sqrt{\gamma}}\cdot \sqrt{\frac{(p+1)s(p)\log n}{m}}
    \end{align*}
    Choose
    \begin{equation}\label{eq:m-bound-reselect}
        m \ge \frac{16C^2(1+\theta)^2 (p+1) s(p) s(p+1) \log n}{\theta^2 \gamma(1-\sqrt{\gamma})^2},
    \end{equation}
    then we can ensure that
    \begin{equation}\label{eq:vsplus1-final}
        \|\bfv_{S^{(p+1)}}\|_2 \ge \sqrt{\frac{1}{s(p+1)}} - \sqrt{\frac{1}{s(p+1)}}(1-\sqrt{\gamma}) \ge \sqrt{\frac{\gamma}{s(p+1)}}.
    \end{equation}
    This is the desired result. Now we collect the sample size requirements from \eqref{eq:m-bound-inner} and \eqref{eq:m-bound-reselect}. Since \(0\le\gamma\le 1\), \(s(p)\ge s(p+1)\) and \(ps(p) \le (p+1)s(p+1)\), we need
    \begin{equation*}
        m \ge \frac{C(1+\theta)^2 (p+1) s(p) s(p+1) \log n}{\theta^2 \gamma^2(1-\sqrt{\gamma})^2}.
    \end{equation*}
    Employing \Cref{lem:asymp-equiv}, we can replace this by a cleaner uniform condition (only changing the constant \(C\)):
    \begin{equation*}
        m \ge \frac{C'(1+\theta)^2 }{\theta^2 \gamma^2(1-\sqrt{\gamma})^2}(p+1) s^2(p+1)\log n.
    \end{equation*}

\end{proof}

    \subsection{Proof of~\Cref{prop:alpha0-lb}}
    \begin{proof}
        Let \(S=S^{(k)}\). By the energy lower bound, \(\|\bfv_S\|_2\ge \sqrt{\gamma}\).
        Note that \(\bfw^{(0)}\) is the top eigenvector of \(\hGamma_{S,S}\). Applying~\Cref{lem:DK} yields
        \[
            \alpha_0
            = |\langle \bfv, \bfw^{(0)}\rangle|
            \ \overset{\eqref{eq:corr-v-eigen}}{\ge}\ \|\bfv_S\|_2\sqrt{\Big(1-\frac{\|W_{S,S}\|_2^2}{\theta^2\,\|\bfv_S\|_2^4}\Big)_{+}}.
        \]
        From the event \(\mcE\), we have \(\|W_{S,S}\|_2\le C_0(1+\theta)\sqrt{k\log n/m}\). Since \(\|\bfv_S\|_2\ge\sqrt{\gamma}\), it holds that
        \[
            \alpha_0 \ \ge\ \sqrt{\gamma}\,\sqrt{\,1-\frac{C_1(1+\theta)^2}{\theta^2\gamma^2}\cdot\frac{k\log n}{m}\,}.
        \]
        If \(m\ge C\frac{(1+\theta)^2}{\theta^2\gamma^2}k\log n\) with \(C\) sufficiently large, we have \(\alpha_0\ge \frac{1}{2}\sqrt{\gamma} \geq c_0\gamma\) with \(c_0\le 1/2 \) since \(0<\gamma<1\).
    \end{proof}

    \subsection{Proof of~\Cref{prop:HT-align}}

    \begin{proof}
        Recall that \(\bfy=\theta\alpha\bfv+\bfxi\) and \(\bfxi = \bfW\bfw\), where \(\bfw\) is a \(k'\)-sparse unit vector. Let \(S^\star=\supp(\bfv)\) (so \(|S^\star|\le k\le k'\)), \(S^\dagger = S^\star \cup \supp(\bfw)\), and define
        \(K=\Top\text{-}k'(|\bfy|)\) with projection \(\mcP_K\), so that \(\mcH_{k'}(\bfy)=\mcP_K(\bfy)\).
        Set \(K^\dagger = K \cup \supp(\bfw)\). Then \(|S^\dagger|\le 2k'\) and \(|K^\dagger|\le 2k'\) since \(|\supp(\bfw)|\le k'\).

        First, we bound the cosine angle. It is easy to see that
        \[\cos\angle\left(\frac{\mcH_{k'}(\bfy)}{\|\mcH_{k'}(\bfy)\|_2},\bfv\right)
            =\frac{\langle \mcH_{k'}(\bfy),\bfv\rangle}{\|\mcH_{k'}(\bfy)\|_2}.\]
        \emph{Step 1: Lower bound for the numerator.}
        Because \(\bfv\) is supported on \(S^\star\),
        \begin{equation}\label{eq:decomp-Hy-v}
            \langle \mcH_{k'}(\bfy),\bfv\rangle
            = \langle \bfy,\bfv\rangle - \langle \bfy_{(K)^{\!\perp}},\bfv\rangle
            = \theta\alpha + \langle \bfxi,\bfv\rangle - \langle \bfy_{S^\star\setminus K},\bfv\rangle .
        \end{equation}

        By the event \(\mcE\) on \(S^\dagger\),
        \begin{equation}\label{eq:bound-xiv}
            |\langle \bfxi,\bfv\rangle|
            =|\langle \bfxi_{S^\star},\bfv_{S^\star}\rangle|
            \le \|\bfW_{S^\dagger, S^\dagger}\|_2\|\bfw\|_2
            \le C(1+\theta)\sqrt{\tfrac{k'\log n}{m}}
            = b.
        \end{equation}
        For \(\|\bfy_{S^\star\setminus K}\|_2\), the standard one-to-one pairing from \(S^\star\setminus K\) to \(K\setminus S^\star\) ensures
        \(\|\bfy_{S^\star\setminus K}\|_2 \le \|\bfy_{K\setminus S^\star}\|_2\).
        Since \(\bfy_j=\bfxi_j\) for \(j\in K\setminus S^\star\) and \(|K\setminus S^\star|\le k'\),
        another application of the high-probability event \(\mcE\) on \(T^\dagger=(K\setminus S^\star)\cup\supp(\bfw)\) (with \(|T^\dagger|\le 2k'\)) gives
        \begin{equation}\label{eq:bound-ysk-v}
            \left|\langle \bfy_{S^\star\setminus K},\bfv\rangle\right| \leq \|\bfy_{S^\star\setminus K}\|_2\|\bfv\|_2
            \le \|\bfxi_{K\setminus S^\star}\|_2 \le b.
        \end{equation}

        Therefore, substituting \eqref{eq:bound-xiv} and \eqref{eq:bound-ysk-v} into \eqref{eq:decomp-Hy-v} yields
        \begin{equation}\label{eq:num-lb}
            \langle \mcH_{k'}(\bfy),\bfv\rangle \ \ge\ \theta\alpha - 2b.
        \end{equation}

        \emph{Step 2: Upper bound for the denominator.}
        We have
        \[ \|\mcP_K(\theta\alpha\bfv)\|_2 \le \theta\alpha,\]
        and
        \[
            \|\mcP_K(\bfxi)\|_2 = \|\bfxi_{K}\|_2 \le \|\bfW_{K^\dagger, K^\dagger}\|_2\,\|\bfw\|_2 \le C(1+\theta)\sqrt{\frac{k'\log n}{m}} = b.
        \]

        Then, by triangle inequality and the same submatrix bound,
        \begin{equation}\label{eq:dom-bound}
            \|\mcH_{k'}(\bfy)\|_2 \le \|\mcP_K(\theta\alpha\bfv)\|_2 + \|\mcP_K(\bfxi)\|_2
            \le \theta\alpha + b.
        \end{equation}
        Dividing the two bounds~\eqref{eq:num-lb} and \eqref{eq:dom-bound} gives the desired result~\eqref{eq:cos-lb}
        \begin{equation}
            \cos\angle\left(\frac{\mcH_{k'}(\bfy)}{\|\mcH_{k'}(\bfy)\|_2},\,\bfv\right)
            \ \ge\ \frac{\theta\alpha-2b}{\theta\alpha+b}.
        \end{equation}

        Next, we bound the sine angle. We will reuse some results from the previous part.

        \emph{Step 1: Control of the orthogonal component.}
        Let \(\bfr:=(\mathbf I-\bfv\bfv^\top)\mcH_{k'}(\bfy)\), so that
        \[\sin\angle\left(\frac{\mcH_{k'}(\bfy)}{\|\mcH_{k'}(\bfy)\|_2},\bfv\right)=\frac{\|\bfr\|_2}{\|\mcH_{k'}(\bfy)\|_2}.\]
        Decompose
        \[
            \bfr
            =\bigl(\mcH_{k'}(\bfy)-\theta\alpha\bfv\bigr) \;+\; \bigl(\theta\alpha-\langle\bfv,\mcH_{k'}(\bfy)\rangle\bigr)\bfv .
        \]
        By the triangle inequality and \eqref{eq:num-lb},
        \begin{equation}\label{eq:r-first}
            \|\bfr\|_2 \ \le\ \|\mcH_{k'}(\bfy)-\theta\alpha\bfv\|_2 \;+\; |\theta\alpha-\langle\bfv,\mcH_{k'}(\bfy)\rangle|.
        \end{equation}
        The second term is bounded by \(2b\) since
        \begin{equation}
            |\theta\alpha-\langle\bfv,\mcH_{k'}(\bfy)\rangle|
            \ \overset{\eqref{eq:decomp-Hy-v}}{=}\ |\langle \bfxi,\bfv\rangle - \langle \bfy_{S^\star\setminus K},\bfv\rangle|\ \overset{\eqref{eq:bound-xiv},\eqref{eq:bound-ysk-v}}{\le} \ 2b.
        \end{equation}
        We now bound the first term. Note that
        \[
            \|\mcH_{k'}(\bfy)-\theta\alpha\bfv\|_2
            =\|\mcP_K(\bfy)-\theta\alpha\bfv\|_2
            \le \|\mcP_K(\bfxi)\|_2 + \|(\mathbf I-\mcP_K)(\theta\alpha\bfv)\|_2.
        \]
        The first term satisfies \(\|\mcP_K(\bfxi)\|_2=\|\bfxi_K\|_2\le \|\bfW_{K^\dagger, K^\dagger}\|_2\|\bfw\|_2 \le b\) by the high-probability event \(\mcE\).
        For the second term, using \(\bfy=\theta\alpha\bfv+\bfxi\),
        \[
            \|(\mathbf I-\mcP_K)(\theta\alpha\bfv)\|_2
            =\|(\theta\alpha\,\bfv)_{K^\perp}\|_2
            =\|\bfy_{S^\star\setminus K}-\bfxi_{S^\star\setminus K}\|_2
            \le \|\bfy_{S^\star\setminus K}\|_2 + \|\bfxi_{S^\star\setminus K}\|_2
            \le b+b=2b,
        \]
        where both terms are bounded by \(b\) using the high-probability event \(\mcE\).
        Hence
        \begin{equation}\label{eq:A-3b}
            \|\mcH_{k'}(\bfy)-\theta\alpha\bfv\|_2\ \le\ b+2b\ =\ 3b.
        \end{equation}
        Substituting \eqref{eq:A-3b} into \eqref{eq:r-first} gives
        \[
            \|\bfr\|_2 \ \le\ 5 b.
        \]

        \emph{Step 2: Lower bound for the denominator.}
        From \eqref{eq:num-lb},
        \(
        \|\mcH_{k'}(\bfy)\|_2\ \ge\ \langle\bfv,\mcH_{k'}(\bfy)\rangle\ \ge\ \theta\alpha-2b.
        \)
        Therefore,
        \[
            \sin\angle\left(\frac{\mcH_{k'}(\bfy)}{\|\mcH_{k'}(\bfy)\|_2},\bfv\right)=\frac{\|\bfr\|_2}{\|\mcH_{k'}(\bfy)\|_2}\ \le\ \frac{5b}{\theta\alpha-2b},
        \]
        which proves \eqref{eq:sin-up}.
    \end{proof}

    \subsection{Proof of~\Cref{prop:invariant}}
    \begin{proof}
        Base case \(t=0\) is \Cref{prop:alpha0-lb}: \(\alpha_0\ge c_0\gamma\). Set \(c_\ast = c_0 \in (0,1/2]\).
        Induction step: suppose \(\alpha_t\ge c_\ast\gamma\). Apply~\Cref{prop:HT-align} to \(\bfw^{(t)}\) with \(b=C(1+\theta)\sqrt{k'\log n/m}\). Using \(m\) as stated, we may ensure \(b\le \frac{1}{6}\theta c_\ast\gamma\). Then
        \[
            \cos\angle(\bfw^{(t+1)},\bfv)
            \ \overset{\eqref{eq:cos-lb}}{\ge}\ \frac{\theta\alpha_t-2b}{\theta\alpha_t+b}
            \ \ge\ \frac{\theta c_\ast\gamma-2(\theta c_\ast\gamma/6)}{\theta c_\ast\gamma+(\theta c_\ast\gamma/6)}
            \ =\ \frac{2}{3}\,\Big/\,\frac{7}{6}\ =\ \frac{4}{7}\ >\ \frac{1}{2}.
        \]
        Hence \(\alpha_{t+1}=\cos\angle(\bfw^{(t+1)},\bfv)\ge \tfrac12 \ge c_\ast\gamma\) for \(c_\ast\in (0,1/2]\). Thus the invariant holds for all \(t\).
    \end{proof}

\bibliographystyle{IEEEtran}
\bibliography{IEEEabrv2025, spca, refs2025}

@STRING{NIPS        = "Adv. Neural Inf. Process. Syst."}

@STRING{CVPR        = "Proc. {IEEE} Comput. Soc. Conf. Comput. Vision Pattern Recognit."}

@STRING{ICML        = "Proc. Int. Conf. Mach. Learn."}

@STRING{IEEE_J_IT         = "{IEEE} Trans. Inf. Theory"}

@book{Vershynin2018High,
  title      = {High-Dimensional Probability: An Introduction with Applications in Data Science},
  author     = {Roman Vershynin},
  year       = {2018},
  publisher  = {Cambridge University Press},
  series     = {Cambridge Series in Statistical and Probabilistic Mathematics},
  doi_hidden = {10.1017/9781108231596},
  place      = {Cambridge},
  collection = {Cambridge Series in Statistical and Probabilistic Mathematics}
}

@article{Liu2021Towards,
  title   = {Towards sample-optimal compressive phase retrieval with sparse and generative priors},
  author  = {Zhaoqiang Liu and Subhroshekhar Ghosh and Jonathan Scarlett},
  year    = {2021},
  journal = NIPS,
  volume  = {34}
}

@article{Jagatap2019Sample,
  title      = {Sample-Efficient Algorithms for Recovering Structured Signals From Magnitude-Only Measurements},
  author     = {Gauri Jagatap and Chinmay Hegde},
  year       = {2019},
  journal    = IEEE_J_IT,
  volume     = {65},
  number     = {7},
  pages      = {4434--4456},
  doi_hidden = {10.1109/tit.2019.2902924}
}

@inproceedings{Xu2024Exponential,
  title         = {Exponential Spectral Pursuit: An Effective Initialization Method for Sparse Phase Retrieval},
  author        = {Mengchu Xu and Yuxuan Zhang and Jian Wang},
  booktitle     = ICML,
  pages         = {55525--55546},
  year          = {2024},
  editor_hidden = {Ruslan Salakhutdinov and Zico Kolter and Katherine Heller and Adrian Weller and Nuria Oliver and Jonathan Scarlett and Felix Berkenkamp},
  volume        = {235},
  series        = {Proceedings of Machine Learning Research},
  month         = {Jul.},
  publisher     = {PMLR}
}

@article{Cai2022Provable,
  title   = {Provable sample-efficient sparse phase retrieval initialized by truncated power method},
  author  = {Jian-Feng Cai and Jingyang Li and Juntao You},
  year    = {2022},
  journal = {Inverse Problems},
  pages   = {075008},
  volume  = {39}
}

@inproceedings{Berthet2013Complexity,
  title         = {Complexity Theoretic Lower Bounds for Sparse Principal Component Detection},
  author        = {Quentin Berthet and Philippe Rigollet},
  year          = {2013},
  month         = {Jun.},
  booktitle     = {Proc. 26th Annu. Conf. Learn. Theory},
  volume        = {30},
  pages         = {1046--1066},
  editor_hidden = {Shai Shalev-Shwartz and Ingo Steinwart}
}

@inproceedings{Wu2021Hadamard,
  title     = {Hadamard wirtinger flow for sparse phase retrieval},
  author    = {Fan Wu and Patrick Rebeschini},
  year      = {2021},
  booktitle = {Proc. Int. Conf. Artif. Intell. Statist.},
  pages     = {982--990}
}

@misc{Xu2025OptimalPR,
  author        = {M. Xu and Y. Zhang and J. Wang},
  title         = {Achieving Optimal Sample Complexity for a Broader Class of Signals in Sparse Phase Retrieval},
  howpublished  = {arXiv preprint:2503.01335},
  year          = {2025},
}

@article{dAspremont2007PCASDP,
  author  = {d'Aspremont, Alexandre and El Ghaoui, Laurent and Jordan, Michael I. and Lanckriet, Gert R. G.},
  title   = {A direct formulation for sparse {PCA} using semidefinite programming},
  journal = {SIAM Rev.},
  volume  = {49},
  number  = {3},
  pages   = {434--448},
  year    = {2007}
}

@article{Zou2006SPCA,
  author  = {Zou, Hui and Hastie, Trevor and Tibshirani, Robert},
  title   = {Sparse principal component analysis},
  journal = {J. Comput. Graph. Statist.},
  volume  = {15},
  number  = {2},
  pages   = {265--286},
  year    = {2006}
}

@article{dAspremont2008PCAOptimalSolution,
  author  = {d'Aspremont, Alexandre and Bach, Francis and El Ghaoui, Laurent},
  title   = {Optimal solutions for sparse principal component analysis},
  journal = {J. Mach. Learn. Res.},
  volume  = {9},
  pages   = {1269--1294},
  year    = {2008}
}

@article{Yuan2013TPower,
  author  = {Yuan, Xiao-Tong and Zhang, Tong},
  title   = {Truncated power method for sparse eigenvalue problems},
  journal = {J. Mach. Learn. Res.},
  volume  = {14},
  pages   = {899--925},
  year    = {2013}
}

@article{Ma2013SPCAIT,
  author  = {Ma, Zongming},
  title   = {Sparse principal component analysis and iterative thresholding},
  journal = {Ann. Statist.},
  volume  = {41},
  number  = {2},
  pages   = {772--801},
  year    = {2013}
}

@article{Amini2009SPCASDP,
  author  = {Amini, Arash A. and Wainwright, Martin J.},
  title   = {High-dimensional analysis of semidefinite relaxations for sparse principal components},
  journal = {Ann. Statist.},
  volume  = {37},
  number  = {5B},
  pages   = {2877--2921},
  year    = {2009}
}

@inproceedings{VuLei2012MinimaxSPCA,
  author    = {Vu, Vincent Q. and Lei, Jing},
  title     = {Minimax rates of estimation for sparse {PCA} in high dimensions},
  booktitle = {Proc. Int. Conf. Artif. Intell. Stat. (AISTATS)},
  series    = {JMLR Proceedings},
  volume    = {22},
  pages     = {1278--1286},
  year      = {2012}
}

@article{Birnbaum2013MinimaxSPCA,
  author  = {Birnbaum, Aharon and Johnstone, Iain M. and Nadler, Boaz and Paul, Debashis},
  title   = {Minimax bounds for sparse {PCA} with noisy high-dimensional data},
  journal = {Ann. Statist.},
  volume  = {41},
  number  = {3},
  pages   = {1055--1084},
  year    = {2013}
}

@article{Cai2013OptimalSPCA,
  author  = {Cai, T. Tony and Ma, Zongming and Wu, Yihong},
  title   = {Sparse {PCA}: Optimal rates and adaptive estimation},
  journal = {Ann. Statist.},
  volume  = {41},
  number  = {6},
  pages   = {3074--3110},
  year    = {2013}
}

@article{BerthetRigollet2013aSPCADetection,
  author  = {Berthet, Quentin and Rigollet, Philippe},
  title   = {Optimal detection of sparse principal components in high dimension},
  journal = {Ann. Statist.},
  volume  = {41},
  number  = {4},
  pages   = {1780--1815},
  year    = {2013}
}

@inproceedings{BerthetRigollet2013bSPCADetectionLB,
  author    = {Berthet, Quentin and Rigollet, Philippe},
  title     = {Complexity theoretic lower bounds for sparse principal component detection},
  booktitle = {Proc. Conf. Learn. Theory (COLT)},
  pages     = {1046--1066},
  year      = {2013}
}

@inproceedings{Moghaddam2006SPCABounds,
  author    = {Moghaddam, Baback and Weiss, Yair and Avidan, Shai},
  title     = {Spectral bounds for sparse {PCA}: exact and greedy algorithms},
  booktitle = {Adv. Neural Inf. Process. Syst.},
  pages     = {915--922},
  year      = {2006}
}

@article{WangLuLiu2014MinimaxSPCA,
  author  = {Wang, Zhaoran and Lu, Huanran and Liu, Han},
  title   = {Nonconvex statistical optimization: Minimax-optimal sparse {PCA} in polynomial time},
  journal = {arXiv preprint arXiv:1408.5352},
  year    = {2014}
}

@article{Johnstone2009SPCAConsistency,
  author  = {Johnstone, Iain M. and Lu, Arthur Y.},
  title   = {On consistency and sparsity for principal components analysis in high dimensions},
  journal = {J. Amer. Statist. Assoc.},
  volume  = {104},
  number  = {486},
  pages   = {682--693},
  year    = {2009}
}

@article{Paul2007SPCAeigenstructure,
  author  = {Paul, Debashis},
  title   = {Asymptotics of sample eigenstructure for a large dimensional spiked covariance model},
  journal = {Statist. Sinica},
  volume  = {17},
  number  = {4},
  pages   = {1617--1642},
  year    = {2007}
}

@inproceedings{Guan2009SPCAProbabilistic,
  author    = {Guan, Yue and Dy, Jennifer G.},
  title     = {Sparse Probabilistic Principal Component Analysis},
  booktitle = {Proc. Int. Conf. Artif. Intell. Stat. (AISTATS)},
  series    = {Proceedings of Machine Learning Research},
  volume    = {5},
  pages     = {185--192},
  year      = {2009}
}

@article{Carvalho2008SPCAGeneExpression,
  author  = {Carvalho, Carlos M. and Lucas, Joseph E. and Wang, Quanli and Chang, Jeffrey and Nevins, Joseph R. and West, Mike},
  title   = {High-dimensional sparse factor modeling: Applications in gene expression genomics},
  journal = {J. Amer. Statist. Assoc.},
  volume  = {103},
  number  = {484},
  pages   = {1438--1456},
  year    = {2008}
}

@article{Seghouane2018AdaptiveBlockSPCA,
  author  = {Seghouane, Abd-Krim and Iqbal, Asif},
  title   = {The adaptive block sparse {PCA} and its application to multi-subject fMRI data analysis using sparse mCCA},
  journal = {Signal Process.},
  volume  = {153},
  pages   = {311--320},
  year    = {2018}
}

@article{Lee2010SPCABiclustering,
  author  = {Lee, Mihee and Shen, Haipeng and Huang, Jianhua Z. and Marron, J. S.},
  title   = {Biclustering via sparse singular value decomposition},
  journal = {Biometrics},
  volume  = {66},
  number  = {4},
  pages   = {1087--1095},
  year    = {2010}
}

@article{GuerraUrzola2021SPCAGuide,
  author  = {Guerra-Urzola, Rosember and Van Deun, Katrijn and Vera, Juan C. and Sijtsma, Klaas},
  title   = {A guide for sparse {PCA}: model comparison and applications},
  journal = {Psychometrika},
  volume  = {86},
  number  = {4},
  pages   = {893--919},
  year    = {2021}
}

@inproceedings{Cai25PeakPCA,
  title     = {Fast and Provable Algorithms for Sparse {PCA} with Improved Sample Complexity},
  author    = {Cai, Jian-Feng and Xian, Zhuozhi and Ying, Jiaxi},
  booktitle = {Proc. Int. Conf. Mach. Learn. (ICML)},
  pages     = {6319--6340},
  year      = {2025},
  volume    = {267},
  month     = {13--19 Jul}
}

@article{EckartYoung1936MatrixApprox,
  author  = {Eckart, Carl and Young, Gale},
  title   = {The approximation of one matrix by another of lower rank},
  journal = {Psychometrika},
  volume  = {1},
  number  = {3},
  pages   = {211--218},
  year    = {1936}
}

@inproceedings{TurkPentland1991Eigenfaces,
  author    = {Turk, Matthew and Pentland, Alex},
  title     = {Face recognition using eigenfaces},
  booktitle = {Proc. IEEE Comput. Soc. Conf. Comput. Vision Pattern Recognit. (CVPR)},
  pages     = {586--591},
  year      = {1991}
}

@book{Jolliffe2002PCA,
  author    = {Jolliffe, Ian T.},
  title     = {Principal Component Analysis},
  edition   = {2nd},
  publisher = {Springer},
  address   = {New York, NY},
  year      = {2002}
}

@article{Krauthgamer2015SDPSPCAInfoLimit,
  author  = {Robert Krauthgamer and Boaz Nadler and Dan Vilenchik},
  title   = {{Do semidefinite relaxations solve sparse PCA up to the information limit?}},
  volume  = {43},
  journal = {"Ann. Statist.},
  number  = {3},
  pages   = {1300 -- 1322},
  year    = {2015}
}

@article{Wang2016SPCACompStatTradeoff,
  author  = {Tengyao Wang and Quentin Berthet and Richard J. Samworth},
  title   = {{Statistical and computational trade-offs in estimation of sparse principal components}},
  volume  = {44},
  journal = {"Ann. Statist.},
  number  = {5},
  pages   = {1896 -- 1930},
  year    = {2016}
}

@article{Ma2020LOO,
  author  = {C. Ma and K. Wang and Y. Chi and Y. Chen},
  title   = {Implicit Regularization in Nonconvex Statistical Estimation: Gradient Descent Converges Linearly for Phase Retrieval, Matrix Completion, and Blind Deconvolution},
  journal = {Found. Comput. Math.},
  volume  = {20},
  pages   = {451--632},
  year    = {2020}
}

@article{DiCiccio2020DataSplit,
  author  = {C. J. DiCiccio and T. J. DiCiccio and J. P. Romano},
  title   = {Exact Tests via Multiple Data Splitting},
  journal = {Stat. Probab. Lett.},
  volume  = {166},
  pages   = {108865},
  year    = {2020}
}

@article{arcones1993decoupling,
  author  = {M. A. Arcones and E. Giné},
  title   = {On decoupling, series expansions, and tail behavior of chaos processes},
  journal = {J. Theor. Probab.},
  volume  = {6},
  number  = {1},
  pages   = {101--122},
  year    = {1993}
}

@article{Feige2000HiddenClique,
  author  = {U. Feige and R. Krauthgamer},
  title   = {Finding and certifying a large hidden clique in a semirandom graph},
  journal = {Random Struct. Algorithms},
  volume  = {16},
  number  = {2},
  pages   = {195--208},
  year    = {2000}
}

@article{Davis1970DKsinTheta,
  author  = {Davis, Chandler and Kahan, W. M.},
  title   = {The Rotation of Eigenvectors by a Perturbation. III},
  journal = {SIAM J. Numer. Anal.},
  volume  = {7},
  number  = {1},
  pages   = {1-46},
  year    = {1970}
}

@article{Hardoon2011SCCA,
  title     = {Sparse canonical correlation analysis},
  author    = {Hardoon, David R and Shawe-Taylor, John},
  journal   = {Mach. Learn.},
  volume    = {83},
  number    = {3},
  pages     = {331--353},
  year      = {2011},
  publisher = {Springer}
}

\end{document}